\newlength{\rWidth}
\newcommand{\m}[1]{\mathsf{#1}}
\newcommand{\mb}[1]{\mathbf{#1}}
\newenvironment{sill}{\begin{tabbing}}{\end{tabbing}}
\newcommand{\W}{\Omega}
\newcommand{\Sg}{\Sigma}
\newcommand{\xvdash}[1]{%
  \vdash^{\mkern-8mu\scriptstyle\rule[-.9ex]{0pt}{0pt}#1}%
}
\newcommand{\potconf}[1]{\overset{#1}{\vDash}}
\newcommand{\D}{\Delta}
\renewcommand{\G}{\Gamma}
\newcommand{\proves}{\vDash}
\newcommand{\proc}[2]{\m{proc}(#1, #2)}
\newcommand{\msg}[2]{\m{msg}(#1, #2)}
\newcommand{\step}{\mapsto}
\newcommand{\fresh}[1]{(#1 \text{ fresh})}
\newcommand{\ecase}[3]{\m{case} \; #1 \; (#2 \Rightarrow #3)}
\newcommand{\erecvch}[2]{#2 \leftarrow \m{recv} \; #1}
\newcommand{\erecvn}[2]{\{#2\} \leftarrow \m{recv} \; #1}
\newcommand{\esendch}[2]{\m{send} \; #1 \; #2}
\newcommand{\esendn}[2]{\m{send} \; #1 \; \{#2\}}
\newcommand{\ewait}[1]{\m{wait} \; #1}
\newcommand{\eclose}[1]{\m{close} \; #1}
\newcommand{\fwd}[2]{#1 \leftarrow #2}
\newcommand{\esendl}[2]{#1.#2}
\newcommand{\ecut}[4]{#1 \leftarrow #2 \leftarrow #3 \semi #4}
\newcommand{\ework}[1]{\m{work} \, \{#1\}}
\newcommand{\eget}[2]{\m{get} \, #1 \, \{#2\}}
\newcommand{\epay}[2]{\m{pay} \, #1 \, \{#2\}}
\newcommand{\procdef}[3]{#3 \leftarrow #1 \leftarrow #2}
\newcommand{\procdefna}[2]{#2 \leftarrow #1}
\newcommand{\eassume}[2]{\m{assume} \; #1 \; \{#2\}}
\newcommand{\eassert}[2]{\m{assert} \; #1 \; \{#2\}}
\newcommand{\eimposs}{\m{impossible}}
\newcommand{\lolli}{\multimap}
\newcommand{\tensor}{\otimes}
\newcommand{\with}{\mathbin{\binampersand}}
\newcommand{\one}{\mathbf{1}}
\newcommand{\semi}{\; ; \;}
\newcommand{\ichoiceop}{\oplus}
\newcommand{\echoiceop}{\with}
\newcommand{\ichoice}[1]{\ichoiceop \{ #1 \}}
\newcommand{\echoice}[1]{\echoiceop \{ #1 \}}
\newcommand{\mi}[1]{\mbox{\it #1}}
\newcommand{\tassertop}{{?}}  
\newcommand{\tassumeop}{{!}}  
\newcommand{\tassert}[1]{\tassertop\{#1\}. \,} 
\newcommand{\tassume}[1]{\tassumeop\{#1\}. \,} 
\newcommand{\true}{\top}
\newcommand{\entailpot}[1]{\xvdash{#1}}
\newcommand{\ivdash}[1]{\; _i{\entailpot{#1}}\;}
\newcommand{\paypot}{\triangleright}
\newcommand{\getpot}{\triangleleft}
\newcommand{\tgetpot}[2]{\getpot^{#2} #1}
\newcommand{\tpaypot}[2]{\paypot^{#2} #1}
\newcommand{\tforall}[1]{\forall #1. \, }
\newcommand{\texists}[1]{\exists #1. \, }
\newcommand{\Next}{\raisebox{0.3ex}{$\scriptstyle\bigcirc$}}
\renewcommand{\next}[1]{\Next #1}
\newcommand{\tdelay}[2]{
    \IfEqCase{#2}{%
        {1}{\next{#1}}%
    }[{\Next^{#2} (#1)}]%
}%
\newcommand{\indv}[1]{\overline{[#1]}}
\newcommand{\config}{\mathcal{S}}
\newcommand{\queue}[1]{\m{queue}_{#1}}
\newcommand{\qu}[1]{\m{qu}_{#1}}
\newcommand{\B}[1]{\textcolor{blue}{#1}}
\newcommand{\imposs}{\m{impossible}}
\newcommand{\cons}{\mathcal{C}}
\newcommand{\vars}{\mathcal{V}}
\newcommand{\unfldsg}[2]{\m{unfold}_{#1}(#2)}
\newcommand{\unfold}[1]{\unfldsg{\Sg}{#1}}
\newcommand{\rel}{\mathcal{R}}
\newcommand{\concs}[2]{\mathtt{conc}(#1) = #2}
\newtheorem{theorem}{Theorem}
\newtheorem{definition}{Definition}
\newtheorem{lemma}{Lemma}
\definecolor{verylightgray}{rgb}{.97,.97,.97}
\lstdefinelanguage{Solidity}{
	keywords=[1]{anonymous, assembly, assert, balance, break, call, callcode, case, catch, class, constant, continue, constructor, contract, debugger, default, delegatecall, delete, do, else, emit, event, experimental, export, external, false, finally, for, function, gas, if, implements, import, in, indexed, instanceof, interface, internal, is, length, library, log0, log1, log2, log3, log4, memory, modifier, new, payable, pragma, private, protected, public, pure, push, require, return, returns, revert, selfdestruct, send, solidity, storage, struct, suicide, super, switch, then, this, throw, transfer, true, try, typeof, using, value, view, while, with, addmod, ecrecover, keccak256, mulmod, ripemd160, sha256, sha3}, 
	keywordstyle=[1]\color{blue}\bfseries,
	keywords=[2]{address, bool, byte, bytes, bytes1, bytes2, bytes3, bytes4, bytes5, bytes6, bytes7, bytes8, bytes9, bytes10, bytes11, bytes12, bytes13, bytes14, bytes15, bytes16, bytes17, bytes18, bytes19, bytes20, bytes21, bytes22, bytes23, bytes24, bytes25, bytes26, bytes27, bytes28, bytes29, bytes30, bytes31, bytes32, enum, int, int8, int16, int24, int32, int40, int48, int56, int64, int72, int80, int88, int96, int104, int112, int120, int128, int136, int144, int152, int160, int168, int176, int184, int192, int200, int208, int216, int224, int232, int240, int248, int256, mapping, string, uint, uint8, uint16, uint24, uint32, uint40, uint48, uint56, uint64, uint72, uint80, uint88, uint96, uint104, uint112, uint120, uint128, uint136, uint144, uint152, uint160, uint168, uint176, uint184, uint192, uint200, uint208, uint216, uint224, uint232, uint240, uint248, uint256, var, void, ether, finney, szabo, wei, days, hours, minutes, seconds, weeks, years},	
	keywordstyle=[2]\color{teal}\bfseries,
	keywords=[3]{block, blockhash, coinbase, difficulty, gaslimit, number, timestamp, msg, data, gas, sender, sig, value, now, tx, gasprice, origin},	
	keywordstyle=[3]\color{violet}\bfseries,
	identifierstyle=\color{black},
	sensitive=false,
	comment=[l]{//},
	morecomment=[s]{/*}{*/},
	commentstyle=\color{gray}\ttfamily,
	stringstyle=\color{red}\ttfamily,
	morestring=[b]',
	morestring=[b]"
}
\newcommand{\focus}[1]{[#1]}
\newcommand{\phiL}[1]{[#1]^{?}_L}
\newcommand{\phiR}[1]{[#1]^{!}_R}
\newcommand{\tpL}[1]{[#1]^{\tau}_L}
\newcommand{\tpR}[1]{[#1]^{\tau}_R}
\newcommand{\ins}{\iota}
\newcommand{\tcm}{\mathcal{M}}
\newcommand{\inc}[1]{\m{inc}(#1)}
\newcommand{\dec}[1]{\m{dec}(#1)}
\newcommand{\goto}{\m{goto}}
\newcommand{\zeroc}[1]{\m{zero}(#1) ?}
\newcommand{\halt}{\m{halt}}
\begin{document}

\title[Session Types with Arithmetic Refinements]{Session Types with Arithmetic Refinements
and Their Application to Work Analysis}        


\author{Ankush Das}
\affiliation{
  \institution{Carnegie Mellon University}            
  \country{USA}                    
}
\email{ankushd@cs.cmu.edu}          

\author{Frank Pfenning}
\affiliation{
  \institution{Carnegie Mellon University}            
  \country{USA}                    
}
\email{fp@cs.cmu.edu}          

\lstset{basicstyle=\ttfamily\footnotesize, columns=fullflexible}

\begin{abstract}
  Session types statically prescribe bidirectional communication
  protocols for message-passing processes and are in a Curry-Howard
  correspondence with linear logic propositions.  However, simple
  session types cannot specify properties beyond the type of exchanged
  messages. In this paper we extend the type system by using index
  refinements from linear arithmetic capturing intrinsic
  attributes of data structures and algorithms so that we can express
  and verify amortized cost of programs using ergometric types.
  We show that, despite the
  decidability of Presburger arithmetic, type equality and therefore
  also type checking are now undecidable, which stands in contrast to
  analogous dependent refinement type systems from functional
  languages.  We also present a practical
  incomplete algorithm for type equality and an algorithm for type
  checking which is complete relative to an oracle for type equality.
  Process expressions in this explicit language are rather verbose, so
  we also introduce an implicit form and a sound and complete
  algorithm for reconstructing explicit programs, borrowing ideas from
  the proof-theoretic technique of focusing.
  We conclude by illustrating our systems and algorithms with a variety of examples
  that have been verified in our implementation.
\end{abstract}

\keywords{session types, resource analysis, refinement types}  

\maketitle

\section{Introduction}\label{sec:intro}


\emph{Session types}~\cite{Honda93CONCUR,Honda98esop,Honda08POPL,Vasconcelos12ST}
provide a structured way of
prescribing communication protocols of message-passing systems.
This paper focuses on \emph{binary session types}
governing the interactions along channels with two endpoints.  Binary
session types without general recursion exhibit a Curry-Howard
isomorphism with linear logic
\cite{Caires2010CONCUR,Wadler12icfp,Caires16mscs} and are therefore of
particular foundational significance.  Moreover, type safety derives
from properties of cut reduction and guarantees \emph{freedom from deadlocks}
(global progress) and \emph{session fidelity} (type preservation)
ensuring that at runtime the sender and receiver exchange messages
conforming to the channel's type.

However, basic session types have limited expressivity. As a
simple example, consider the session type offered by a queue
data structure storing elements of type $A$.
\begin{sill}
$\queue{A} = \echoice{$\=$\mb{ins} : A \lolli \queue{A},$\\
\>$\mb{del} : \ichoice{$\=$\mb{none} : \one,$\\
\>\>$\mb{some} : A \tensor \queue{A}}}$
\end{sill}

This type describes a queue interface supporting insertion and
deletion. The \emph{external choice} operator $\echoiceop$ dictates
that the process providing this data structure accepts either one of
two messages: the labels $\mb{ins}$ or $\mb{del}$.  In the case of the
label $\mb{ins}$, it then receives an element of type $A$ denoted by
the $\lolli$ operator, and then the type recurses back to
$\queue{A}$. On receiving a $\mb{del}$ request, the process can
respond with one of two labels ($\mb{none}$ or $\mb{some}$), indicated
by the \emph{internal choice} operator $\ichoiceop$.  It responds with
$\mb{none}$ and then \emph{terminates} (indicated by $\one$) if the
queue is empty, or with $\mb{some}$ followed by the element of type
$A$ (expressed with the $\tensor$ operator) and recurses if the queue
is nonempty.  However, the simple session type does not express the
conditions under which the $\mb{none}$ and $\mb{some}$ branches must
be chosen, which requires tracking the length of the queue.

We propose extending session types with simple arithmetic
refinements to express, for instance, the size of a queue.
The more precise type

\begin{sill}
$\queue{A}[n] = \echoice{$\=$\mb{ins} : A \lolli \queue{A}[n+1],$\\
\>\hspace{-3em}$\mb{del} : \ichoice{$\=$\mb{none} : \tassert{n=0} \one,$\\
\>\>$\mb{some} : \tassert{n > 0} A \tensor \queue{A}[n-1]}}$
\end{sill}

uses the index refinement $n$ to indicate the size of the queue.
In addition, we introduce a \emph{type constraint} $\tassert{\phi} A$
which can be read as ``\textit{there exists a proof of $\phi$}'' and
is analogous to the \emph{assertion} of $\phi$ in imperative
languages.  Here, the process providing the queue must (conceptually)
send a proof of $n = 0$ after it sends $\mb{none}$, and a proof of
$n > 0$ after it sends $\mb{some}$.  It is therefore constrained in
its choice between the two branches based on the value of the index
$n$.  Because the the index domain from which the propositions $\phi$
are drawn is Presburger arithmetic and hence decidable, no proof of
$\phi$ will actually be sent, but we can nevertheless verify the
constraint statically (which is the subject of this paper) or
dynamically (see~\cite{Gommerstadt18esop,Gommerstadt19phd}).  Although
not used in this example, we also add the dual $\tassume{\phi} A$
(\textit{for all proofs of $\phi$}, analogous to the \emph{assumption}
of $\phi$), and explicit quantifiers $\texists{n} A$ and
$\tforall{n} A$ that send and receive natural numbers, respectively.

Such arithmetic refinements are instrumental in expressing
\emph{sequential} and \emph{parallel complexity bounds}. Prior work on
ergometric~\cite{Das18RAST,Das19Nomos} and temporal session types
\cite{Das18Temporal} rely on index refinements to express the size of
lists, stacks and queue data structures, or the height of trees and
express work and time bounds as a function of these indices.  However,
they do not explore the metatheory or implementation of these
arithmetic refinements or how they integrate with time and work
analysis.

Of course, arithmetic type refinements are not new and have been
explored extensively in functional languages, for example, by
Zenger~\cite{Zenger97}, in DML~\cite{Xi99popl}, or in the form of
\emph{Liquid Types}~\cite{Rondon08pldi}.  Variants have been adapted
to session types as
well~\cite{Griffith13LiqPi,Gommerstadt18esop,Zhou19Fluid},
generally with the implicit assumption that index refinements are
somehow ``orthogonal'' to session types.  In this paper we show that,
upon closer examination, this is \emph{not} the case.  In particular,
unlike in the functional setting, session type equality and therefore
type checking become undecidable.  Remarkably, this is the case
whether we treat session types equirecursively~\cite{Gay2005} or
isorecursively~\cite{Lindley16icfp}, and even in the quantifier-free
fragment.  In response, we develop a new algorithm for type equality
which, though incomplete, easily handles the wide variety of example
programs we have tried (see Appendix~\ref{app:examples}).
Moreover, it is naturally extensible through the additional assertion
of type invariants should the need arise.

With a practically effective type equality algorithm in hand, we then
turn our attention to \emph{type checking}.  It turns out that
assuming an oracle for type equality, type checking is decidable
because it can be reduced to checking the validity of propositions in
Presburger arithmetic.  We define \emph{type checking} over a language
where constructs related to arithmetic constraints ($\texists{n} A$,
$\tforall{n} A$, $\tassert{\phi} A$, and $\tassume{\phi} A$) have
explicit communication counterparts.  Similarly, type constructs for
receiving or sending \emph{potential} ($\getpot^{p}A$ and
$\paypot^{p} A$) for amortized work analysis have corresponding
process constructs to (conceptually) send and receive potential.
Revisiting the queue example, the type

\begin{sill}
  $\queue{A}[n] = \echoice{$\=$\mb{ins} : \textcolor{red}{\getpot^{2n}}
  (A \lolli \queue{A}[n+1]),$\\
  \>\hspace{-3em}$\mb{del} : \textcolor{red}{\getpot^{2}} \ichoice{$\=$\mb{none} :
  \tassert{n=0} \one,$\\
  \>\>$\mb{some} : \tassert{n > 0} A \tensor \queue{A}[n-1]}}$
\end{sill}

expresses that the client has to send $2n$ units of potential to
enqueue an element, and $2$ units of potential to dequeue.
Despite the high theoretical complexity of deciding Presburger
arithmetic, all our examples check very quickly using Cooper's
decision procedure~\cite{cooper1972theorem} with two
optimizations.

Many programs in this explicit language are unnecessarily verbose and
therefore tedious for the programmer to write, because the process
constructs pertaining to the refinement layer contribute only to
verifying its properties, but not its observable computational
outcomes.  As is common for refinement types, we therefore also
designed an \emph{implicit} language for processes where most
constructs related to index refinements and amortized work analysis
are omitted.  The problem of \emph{reconstruction} is then to map such
an implicit program to an explicit one which is sound (the result
type-checks) and complete (if there is a reconstruction, it can be
found).  Interestingly, the nature of Presburger arithmetic makes full
reconstruction impossible.  For example, the proposition
$\forall n.\, \exists k.\, (n = 2k \lor n = 2k+1)$ is true but the
witness for $k$ as a Skolem function of $n$ (namely
$\lfloor n/2\rfloor$) cannot be expressed in Presburger arithmetic.
Since witnesses are critical if we want to understand
the work performed by a computation, we require that type quantifiers
$\tforall{n} A$ and $\texists{n} A$ have explicit witnesses in
processes.  We provide a sound and complete algorithm for the
resulting reconstruction problem.  This algorithm exploits
proof-theoretic properties of the sequent calculus akin to
focusing~\cite{Andreoli92} to avoid backtracking and consequently
provides precise error messages that we have found to be
helpful.

We have implemented our language in SML, where a programmer can
choose explicit or implicit syntax and the exact cost model for
work analysis.  The implementation consists of a lexer, parser, type
checker, and reconstruction engine, with particular attention to
providing precise error messages.  To the best of our knowledge, this
is the first implementation of ergometric session types with
arithmetic refinements.

To summarize, we make the following contributions:
\begin{enumerate}[leftmargin=*]
  \item Design and implementation of a session-typed language with arithmetic
    refinements and ergometric types
  \item Proof of undecidability of type equality for the small quantifier-free fragment
    of this language
  \item A new type equality algorithm that works well in practice
  \item A type checking algorithm that is sound and complete relative to type equality
  \item A sound and complete reconstruction algorithm for a process
    language where most index and ergometric constructs remain implicit
\end{enumerate}

\section{Overview of Refinement Session Types}\label{sec:overview}

The underlying base system of session types is derived from a Curry-Howard
interpretation~\cite{Caires2010CONCUR,Caires16mscs} of intuitionistic linear logic
\cite{Girard87tapsoft}. The key idea is that an intuitionistic linear sequent
\begin{center}
  \begin{minipage}{0cm}
  \begin{tabbing}
  $A_1, A_2, \ldots, A_n \vdash A$
  \end{tabbing}
  \end{minipage}
\end{center}
is interpreted as the interface to a process expression $P$. We label each of the
antecedents with a channel name $x_i$ and the succedent with channel name $z$.
The $x_i$'s are \emph{channels used by} $P$ and $z$ is the \emph{channel provided by} $P$.
\begin{center}
  \begin{minipage}{0cm}
  \begin{tabbing}
  $x_1 : A_1, x_2 : A_2, \ldots, x_n : A_n \vdash P :: (z : C)$
  \end{tabbing}
  \end{minipage}
\end{center}
The resulting judgment formally states that process $P$ provides a service of
session type $C$ along channel $z$, while using the services of session types $A_1,
\ldots,A_n$ provided along channels $x_1, \ldots, x_n$ respectively. All these
channels must be distinct. We abbreviate the antecedent of the sequent by $\Delta$.


We introduce a new type operator $\tassert{\phi} A$ allowing processes
to exchange constraints on these index refinements. The provider of
$\tassert{\phi} A$ sends a proof of $\phi$, which is
received by the client. The dual of this operator is
$\tassume{\phi} A$ where the provider receives a proof of $\phi$ sent
by the client.
Because intrinsic properties of data structures (such as
the number of elements) and also the work performed by a computation
must be nonnegative we work over the natural numbers $0, 1, \ldots$
rather than general integers.  This includes a static validity check
for types to ensure that all index refinements are nonnegative.  For
example, while checking the validity of $\queue{A}[n]$, we encounter
the constraint $n > 0$ in the $\mb{some}$ branch, so we assume it and
then verify that $n - 1 \geq 0$, ensuring the validity of
$\queue{A}[n-1]$.

\begin{figure}[t]
  \includegraphics[width=\linewidth]{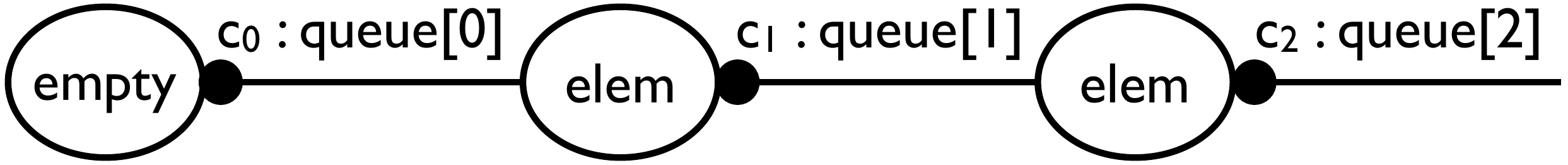}
  
  \caption{Implementation of queue data structure}
  \label{fig:queue}
  
\end{figure}

One parallel implementation of such a queue data structure
(Figure~\ref{fig:queue}) is a sequence of $\mi{elem}$ processes, each
storing an element of the queue, terminated by an $\mi{empty}$
process, representing the empty queue. The $\mi{empty}$ process offers
along $c_0 : \queue{A}[0]$ (indicated by $\bullet$ between
$\mi{empty}$ and $c_0$), while each $\mi{elem}[n]$ process uses
$\queue{A}[n]$, and an element of type $A$ (not shown) and offers on
$\queue{A}[n+1]$. In our notation, the process declarations will be
written as
\[
  \begin{array}{rl}
  \cdot & \vdash empty :: (s : \queue{A}[0]) \\
  (x : A), (t : \queue{A}[n]) & \vdash elem[n] :: (s : \queue{A}[n+1])
  \end{array}
\]
Figure~\ref{fig:queue} shows a queue with 2 elements.
The channel $c_2$ is the only one exposed to the client (not shown)
and receives $\mb{ins}$ and $\mb{del}$ requests.

\begin{figure}[t]
  \vspace{-0.5em}
  \begin{ntabbing}
  \reset
  $\cdot \vdash \mi{empty} :: (s : \queue{A}[0])$ \label{proc:emp_type}\\
  \quad \= $\procdefna{\mi{empty}}{s} =$ \label{proc:emp_def} \\
  \> \; $\m{case}\; s \; ($ \label{proc:emp_case}\\
  \> \quad \= \; $\mb{ins} \Rightarrow$ \= $\erecvch{s}{x} \semi$  \label{proc:emp_recv}
  \hspace{3.4em} \footnotesize{$\%\; (x : A) \vdash (s : \qu{A}[1])$} \\
  \>\>\> $\procdefna{\mi{empty}}{e} \semi$ \label{proc:emp_emp}
  \hspace{-0.9em} \footnotesize{$\%\; (x : A), (e : \qu{A}[0]) \vdash (s : \qu{A}[1])$} \\
  \>\>\> $\procdef{\mi{elem}[0]}{x, e}{s}$ \label{proc:emp_elem} \\
  \>\> $ \mid \mb{del} \Rightarrow$ \> $\esendl{s}{\mb{none}} \semi$
  \label{proc:emp_none}
  \hspace{5.6em} \footnotesize{$\% \; \cdot \vdash (s : \tassert{0 = 0} \one)$}  \\
  \>\>\> $\B{\eassert{s}{0 = 0}} \semi$ \label{proc:emp_assert}
  \hspace{5.4em} \footnotesize{$\% \; \cdot \vdash (s : \one)$} \\
  \>\>\> $\eclose{s})$ \label{proc:emp_close}
  \end{ntabbing}
  
  \begin{ntabbing}
  $(x : A), (t : \queue{A}[n]) \vdash \mi{elem}[n] :: (s : \queue{A}[n+1])$ \label{proc:elem_type}\\
  \quad \= $\procdef{\mi{elem}[n]}{x, t}{s} =$ \label{proc:elem_def} \\
  \> \; $\m{case}\; s \; ($ \label{proc:elem_case}\\
  \> \quad \= \; $\mb{ins} \Rightarrow$ \= $\erecvch{s}{y} \semi$  \label{proc:elem_recv} \\
  \>\>\> $\esendl{t}{\mb{ins}} \semi$ \label{proc:elem_ins} \\
  \>\>\> $\esendch{t}{y} \semi$ \label{proc:elem_y} \\
  \>\>\>\hspace{2.3em} \footnotesize{$\%\; (x : A), (t : \qu{A}[n+1]) \vdash (s : \qu{A}[n+2])$} \\
  \>\>\> $\procdef{\mi{elem}[n+1]}{x, t}{s}$ \label{proc:elem_elem} \\
  \>\> $ \mid \mb{del} \Rightarrow$ \> $\esendl{s}{\mb{some}} \semi$ \label{proc:elem_some} \\
  \>\>\> $\B{\eassert{s}{n+1 > 0}} \semi$ \label{proc:elem_assert} \\
  \>\>\> $\esendch{s}{x} \semi$ \label{proc:elem_x}
  \hspace{2.4em} \footnotesize{$\%\; (t : \qu{A}[n]) \vdash (s : \qu{A}[n])$} \\
  \>\>\> $\fwd{s}{t} )$ \label{proc:elem_fwd}
  \end{ntabbing}
  
  \vspace{-1.25em}
  \caption{Implementations for the $\mi{empty}$ and $\mi{elem}$ processes.}
  \vspace{-1.25em}
  \label{fig:queue_impl}
\end{figure}

Figure~\ref{fig:queue_impl} describes the implementation of $\mi{empty}$
and $\mi{elem}$ processes along with their type derivations on the right
or line below (type $\queue{A}[n]$ abbreviated to $\qu{A}[n]$). Upon
receiving the $\mb{ins}$ label and element $x : A$ (line~\ref{proc:emp_recv}),
the $\mi{empty}$ process spawns a new $\mi{empty}$ process
(line~\ref{proc:emp_emp}), binds it to channel $e$, and tail calls
$\mi{elem}[0]$ (line~\ref{proc:emp_elem}). On inputting the $\mb{del}$ label,
the $\mi{empty}$
process takes the $\mb{none}$ branch (line~\ref{proc:emp_none})
since it stores no elements. Therefore, it needs to send a proof of $n = 0$,
and since it offers $\queue{A}[0]$, it sends the trivial proof of
$0 = 0$ (line~\ref{proc:emp_assert}), and closes the channel terminating
communication (line~\ref{proc:emp_close}). The $\mi{elem}$ process
receives the $\mb{ins}$ label and element $y : A$ (line~\ref{proc:elem_recv}),
passes on these two messages on the tail $t$
(lines~\ref{proc:elem_ins},\ref{proc:elem_y}), and recurses with
$\mi{elem}[n+1]$ (line~\ref{proc:elem_elem}). The type expected by
$\mi{elem}[n+1]$ indeed matches the type of the input and output channels,
as confirmed by the process declaration. On receiving the $\mb{del}$ label,
the $\mi{elem}$ process replies with the $\mb{some}$ label (line~\ref{proc:elem_some})
and the proof of $n+1 > 0$ (line~\ref{proc:elem_assert}), again trivial
since $n$ is a natural number. It terminates with forwarding $s$ along $t$
(line~\ref{proc:elem_fwd}). This forwarding is valid since the types of
$s$ and $t$ exactly match as expected by the $\m{id}$ rule in
Figure~\ref{fig:basic-typing}. The programmer is not burdened with
writing the asserts (in blue) as they are automatically inserted by our
reconstruction algorithm.

At runtime,
each arithmetic proposition will be \emph{closed}, so if it has no
quantifiers it can simply be evaluated.  In the presence of
quantifiers, a decision procedure for Presburger arithmetic can be
applied dynamically (if desired, or if a provider or client may not be
trusted), but no actual proof object needs to be transmitted.

An interesting corner case would be, say, if a process with one
element ($n = 1$) responded with $\mb{none}$ to the $\mb{del}$
request.  It would have to follow up with a proof that $1 = 0$, which
is of course impossible.  Therefore, no further communication along
this channel could take place.

\section{Basic and Refined Session Types}\label{sec:formal}

\begin{table*}[t]
  \begin{tabular}{l l l l l}
  \textbf{Type} & \textbf{Continuation} & \textbf{Process Term} & \textbf{Continuation} & \multicolumn{1}{c}{\textbf{Description}} \\
  \toprule
  $c : \ichoice{\ell : A_\ell}_{\ell \in L}$ & $c : A_k$ & $\esendl{c}{k} \semi P$
  & $P$ & provider sends label $k$ along $c$ \\
  & & $\ecase{c}{\ell}{Q_\ell}_{\ell \in L}$ & $Q_k$ & client receives label $k$ along $c$ \\
  \addlinespace
  $c : \echoice{\ell : A_\ell}_{\ell \in L}$ & $c : A_k$ & $\ecase{c}{\ell}{P_\ell}_{\ell \in L}$
  & $P_k$ & provider receives label $k$ along $c$ \\
  & & $\esendl{c}{k} \semi Q$ & $Q$ & client sends label $k$ along $c$ \\
  \addlinespace
  $c : A \tensor B$ & $c : B$ & $\esendch{c}{w} \semi P$
  & $P$ & provider sends channel $w : A$ along $c$ \\
  & & $\erecvch{c}{y} \semi Q_y$ & $Q_y[w/y]$ & client receives channel $w : A$ along $c$ \\
  \addlinespace
  $c : A \lolli B$ & $c : B$ & $\erecvch{c}{y} \semi P_y$
  & $P_y[w/y]$ & provider receives channel $w : A$ along $c$ \\
  & & $\esendch{c}{w} \semi Q$ & $Q$ & client sends channel $w : A$ along $c$ \\
  \addlinespace
  $c : \one$ & --- & $\eclose{c}$
  & --- & provider sends $\mi{close}$ along $c$ \\
  & & $\ewait{c} \semi Q$ & $Q$ & client receives $\mi{close}$ along $c$ \\
  \bottomrule
  \end{tabular}
  \caption{Basic session types with operational description}
  \label{tab:language}
  \vspace{-2em}
  \end{table*}

This section presents the basic system of session types and its
arithmetic refinement, postponing ergometric types to
Section~\ref{sec:ergo}.  In addition to the type constructors arising
from the connectives of intuitionistic linear logic ($\oplus$,
$\with$, $\tensor$, $\one$ $\lolli$), we have type names, indexed by a
sequence of arithmetic expressions $V \indv{e}$, existential and
universal quantification over natural numbers ($\texists{n} A$,
$\tforall{n} A$) and existential and universal constraints
($\tassert{\phi} A$, $\tassume{\phi} A$).  We write $i$ for constants.

\[
  \begin{array}{lrcl}
    \mbox{Types} & A & ::= & \ichoice{\ell : A}_{\ell \in L}
    \mid \echoice{\ell : A}_{\ell \in L} \\
                 & & \mid & A \tensor A \mid A \lolli A \mid \one \mid V \indv{e} \\
                 & & \mid & \tassert{\phi} A \mid \tassume{\phi} A
                            \mid \texists{n} A \mid \tforall{n} A \\
    \mbox{Arith. Exps.} & e & ::= & i \mid e + e \mid e - e \mid i \times e \mid n \\
    \mbox{Arith. Props.} &
    \phi & ::= & e = e \mid e > e \mid \top \mid \bot
                 \mid \phi \land \phi \\
    & & \mid &  \phi \lor \phi \mid \lnot \phi \mid \texists{n}\phi \mid \tforall{n} \phi
  \end{array}
\]
Our implementation does not support type polymorphism but it is convenient in
some of the examples.  We therefore allow definitions such as
$\queue{A}[n] = \ldots$ and interpret them as a family of definitions,
one for each possible type $A$.

We review a few basic session type operators before introducing the
quantified type constructors. Table~\ref{tab:language} overviews
the session types, their associated process terms and operational
description. Figure~\ref{fig:basic-typing} describes selected
typing rules (ignore the premises and annotation on the turnstile
marked in blue, introduced in Section~\ref{sec:ergo}) leaving the complete
set of rules to Appendix~\ref{app:formal}.

The typing judgment has the form of a sequent
\begin{center}
  \begin{minipage}{0cm}
  \begin{tabbing}
  $\vars \semi \cons \semi \D \entailpot{q}_\Sg P :: (x : A)$
  \end{tabbing}
  \end{minipage}
\end{center}
where $\vars$ are index variables $n$, $\cons$ are constraints over
these variables expressed as a single proposition,
$\D$ are the linear antecedents $x_i : A_i$, $P$ is a process
expression, and $x : A$ is the linear succedent.  The \emph{potential}
$q$ is explained in Section~\ref{sec:ergo}.  We propose and maintain
that the $x_i$ and $x$ are all distinct, and that all free index
variables in $\cons$, $\D$, $P$, and $A$ are contained among $\vars$.
Finally, $\Sigma$ is a fixed signature containing type and process
definitions (explained in Section~\ref{subsec:base}) Because it is
fixed, we elide it from the presentation of the rules.  In addition we
write $\vars \semi \cons \proves \phi$ for semantic entailment
(proving $\phi$ assuming $\cons$) in the
constraint domain where $\vars$ contains all arithmetic variables in
$\cons$ and $\phi$.

We formalize the operational semantics as a system of \emph{multiset rewriting
rules}~\cite{Cervesato09SEM}. We introduce semantic objects $\proc{c}{w, P}$
and $\msg{c}{w, M}$ which mean that process $P$ or message $M$ provide
along channel $c$ and have performed work $w$ 
(as described in Section~\ref{sec:ergo}).
A process configuration is a multiset of such objects, where any two
channels provided are distinct
(formally described at the end of Section~\ref{sec:quantifiers}).

\begin{figure}
  \begin{mathpar}
    \vspace{-0.4em}
    \infer[\m{id}]
      {\vars \semi \cons \semi y : A \entailpot{\B{q}} (\fwd{x}{y}) :: (x : A)}
      {\B{\vars \semi \cons \proves q = 0}}
    %
    \and\vspace{-0.4em}
    \infer[{\with}R]
      {\vars \semi \cons \semi \D \entailpot{\B{q}} \ecase{x}{\ell}{P_\ell}_{\ell \in L} ::
      (x : \echoice{\ell : A_\ell}_{\ell \in L})}
      {(\forall \ell \in L)
       & \vars \semi \cons \semi \D \entailpot{\B{q}} P_\ell :: (x : A_\ell)}
    \and\vspace{-0.4em}
    \infer[{\with}L]
      {\vars \semi \cons \semi \D, (x : \echoice{\ell : A_\ell}_{\ell \in L}) \entailpot{\B{q}}
      (\esendl{x}{k} \semi Q) :: (z : C)}
      {(k \in L) & \vars \semi \cons \semi \D, (x : A_k) \entailpot{\B{q}} Q :: (z : C)}
    
    \and\vspace{-0.4em}
    \infer[{\tensor}R]
      {\vars \semi \cons \semi \D, (y : A) \entailpot{\B{q}} (\esendch{x}{y} \semi P) :: (x : A \tensor B)}
      {\vars \semi \cons \semi \D \entailpot{\B{q}} P :: (x : B)}
    \and\vspace{-0.4em}
    \infer[{\tensor}L]
      {\vars \semi \cons \semi \D, (x : A \tensor B) \entailpot{\B{q}} (\erecvch{x}{y} \semi Q) :: (z : C)}
      {\vars \semi \cons \semi \D, (y : A), (x : B) \entailpot{\B{q}} Q :: (z : C)}
    %
    %
  \end{mathpar}
  \vspace{-2em}
  \caption{Typing Rules for Basic Session Types}
  \vspace{-1.5em}
  \label{fig:basic-typing}
\end{figure}

\subsection{Basic Session Types}\label{subsec:base}

\paragraph{External Choice}
The \emph{external choice} type constructor
$\echoice{\ell : A_{\ell}}_{\ell \in L}$ is an $n$-ary labeled
generalization of the additive conjunction $A \with B$. Operationally,
it requires the provider of
$x : \echoice{\ell : A_{\ell}}_{\ell \in L}$ to branch based on the
label $k \in L$ it receives from the client and continue to provide
type $A_{k}$. The corresponding process term is written as
$\ecase{x}{\ell}{P}_{\ell \in L}$.  Dually, the client must send one
of the labels $k \in L$ using the process term
$(\esendl{x}{k} \semi Q)$ where $Q$ is the continuation.  The typing
for the provider and client are rules ${\with}R$ and ${\with}L$ in
Figure~\ref{fig:basic-typing} respectively.
Communication is asynchronous, so that the client
$\esendl{c}{k} \semi Q$ sends a message $k$ along $c$ and continues as $Q$
without waiting for it to be received. As a technical device to ensure that
consecutive messages on a channel arrive in order, the sender also creates a
fresh continuation channel $c'$ so that the message $k$ is actually represented
as $(\esendl{c}{k} \semi \fwd{c}{c'})$ (read: send $k$ along $c$ and continue along
$c'$). When the message $k$ is received along $c$, we select branch $k$ and
also substitute the continuation channel $c'$ for $c$.
Rules ${\with}S$ and ${\with}C$ below describe the operational behavior of the
provider and client respectively $\fresh{c'}$.

\begin{tabbing}
$({\with}S): \m{proc}(d, w, c.k \semi Q) \;\mapsto\; \m{msg}(c', 0, c.k \semi c' \leftarrow c), $\\
\hspace{12em} $\m{proc}(d, w, Q[c'/c])$ \\
$({\with}C):  \m{proc}(c, w, \m{case}\;c\;(\ell \Rightarrow Q_\ell)_{\ell \in L}),$\\
\hspace{1em}$\m{msg}(c', w', c.k \semi c' \leftarrow c)
\;\mapsto\; \m{proc}(c', w+w', Q_k[c'/c])$
\end{tabbing}


The \emph{internal choice} constructor
$\ichoice{\ell : A_{\ell}}_{\ell \in L}$ is the dual of external
choice requiring the provider to send one of the labels $k \in L$ that
the client must branch on.

\paragraph{Channel Passing}
The \emph{tensor} operator $A \tensor B$ prescribes that the provider of
$x : A \tensor B$
sends a channel $y$ of type $A$ and continues to provide type $B$. The
corresponding process term is $\esendch{x}{y} \semi P$ where $P$ is
the continuation.  Correspondingly, its client must receives a channel
using the term $\erecvch{x}{y} \semi Q$, binding it to variable $y$
and continuing to execute $Q$. The typing rules are
$\tensor R, \tensor L$ in
Figure~\ref{fig:basic-typing}.
The dual operator $A \lolli B$ allows the provider to receive a
channel of type $A$ and continue to provide type $B$.

Finally, the type $\one$ indicates \emph{termination} operationally
denoting that the provider send a \emph{close} message followed by
terminating the communication.
The complete set of static and dynamic semantics
is described in Appendix~\ref{app:formal}.

\paragraph{Forwarding}
A process $\fwd{x}{y}$ identifies the channels $x$ and $y$ so that any
further communication along either $x$ or $y$ will be along the unified
channel. Its typing rule corresponds to the logical rule of identity
($\m{id}$ in Figure~\ref{fig:basic-typing}).  Its computation rules
can be found in Appendix~\ref{app:formal}.



\paragraph{Process Definitions}
Process definitions have the form
$\D \entailpot{\B{q}} f \indv{n} = P :: (x : A)$ where $f$ is the name of the
process and $P$ its definition. In addition, $\overline{n}$ is a
sequence of arithmetic variables that $\D$, $P$ and $A$ can refer to.
All definitions are collected in a fixed global signature $\Sg$. We
require that $\overline{n} \semi \top \semi \D \entailpot{\B{q}} P :: (x : A)$
for every definition, thereby allowing
definitions to be mutually recursive. A new instance of a defined
process $f$ can be spawned with the expression
$\ecut{x}{f \indv{e}}{\overline{y}}{Q}$ where $\overline{y}$ is a
sequence of channels matching the antecedents $\D$ and $\indv{e}$ is a
sequence of arithmetic expression matching the variables
$\indv{n}$. The newly spawned process will use all variables in
$\overline{y}$ and provide $x$ to the continuation $Q$.

\begin{mathpar}
  
  \inferrule*[right=$\m{def}$]
  {\B{\vars \semi \cons \proves r \geq q[\overline{e}/\overline{n}]} \quad
  \overline{y:B} \entailpot{\B{q}} f \indv{n} = P_f :: (x : A) \in \Sg \\
  \D' = \overline{(y:B)}[\overline{e}/\overline{n}] \quad
  \vars {\semi} \cons {\semi} \D, (x {:} A[\overline{e}/\overline{n}]) \entailpot{\B{r-q}} Q :: (z {:} C)}
  {\vars \semi \cons \semi \D, \D' \entailpot{\B{r}} (\ecut{x}{f \indv{e}}{\overline{y}}{Q}) :: (z : C)}
\end{mathpar}
The declaration of $f$ is looked up in the signature $\Sg$, and $\overline{e}$
is substituted for $\overline{n}$ while matching the types in $\D'$ and $\overline{y}$
(second premise). The corresponding semantics rule also performs a similar substitution
$\fresh{a}$
\begin{tabbing}
$(\m{def}C)$ \quad \= $\m{proc}(c, w, \ecut{x}{f \indv{e}}{\overline{d}}{Q}) \; \mapsto \;$ \\
\> $\m{proc}(a, 0, P_f[a/x, \overline{d}/\overline{y}, \overline{e}/\overline{n}]), \;
   \m{proc}(c, w, Q[a/x])$
\end{tabbing}
where $\overline y : B \vdash f \indv{n} = P_f :: (x : A) \in \Sg$.

Sometimes a process invocation is a tail call,
written without a continuation as $\procdef{f \indv{e}}{\overline{y}}{x}$. This is a
short-hand for $\procdef{f \indv{e}}{\overline{y}}{x'} \semi \fwd{x}{x'}$ for a fresh
variable $x'$, that is, we create a fresh channel
and immediately identify it with x.

\paragraph{Type Definitions}
As our queue example already showed, session types can be defined
recursively, departing from a strict Curry-Howard interpretation of
linear logic, analogous to the way pure ML or Haskell depart from
a pure interpretation of intuitionistic logic.  For this purpose we
allow (possibly mutually recursive) type definitions $V \indv{n} = A$
in the signature $\Sg$. Here, $\indv{n}$ denotes a sequence of
arithmetic variables. We also require $A$ to be
\emph{contractive}~\cite{Gay2005} meaning $A$ should not itself be a
type name. Our type definitions are \emph{equirecursive} so we can
silently replace type names $V \indv{e}$ indexed with arithmetic
refinements by $A [\overline{e}/\overline{n}]$ during type checking,
and no explicit rules for recursive types are needed.

All types in a signature must be \emph{valid}, formally written as
$\vars \semi \cons \vdash A\; \mi{valid}$, which requires that all
free arithmetic variables of $\cons$ and $A$ are contained in $\vars$,
and that for each arithmetic expression $e$ in $A$ we can prove
$\vars' \semi \cons' \vdash e : \m{nat}$ for the constraints $\cons'$
known at the occurrence of $e$ (implicitly proving that $e \geq 0$).

\subsection{The Refinement Layer}\label{sec:quantifiers}

We now introduce quantifiers ($\texists{n} A$, $\tforall{n} A$) and
constraints ($\tassert{\phi} A$, $\tassume{\phi} A$).  An overview
of the types, process expressions, and their operational meaning can
be found in Table~\ref{tab:refine}.

\begin{table*}[t]
  \begin{tabular}{l l l l l}
  \textbf{Type} & \textbf{Continuation} & \textbf{Process Term} & \textbf{Continuation} & \multicolumn{1}{c}{\textbf{Description}} \\
  \toprule
  $c : \texists{n} A$ & $c : A[i/n]$ & $\esendn{c}{e} \semi P$
  & $P$ & provider sends the value $i$ of $e$ along $c$ \\
  & & $\erecvn{c}{n} \semi Q$ & $Q[i/n]$ & client receives number $i$ along $c$ \\
  \addlinespace
  $c : \tforall{n} A$ & $c : A[i/n]$ & $\erecvn{c}{n} \semi P$
  & $P[i/n]$ & provider receives number $i$ along $c$ \\
  & & $\esendn{c}{e} \semi Q$ & $Q$ & client sends value $i$ of $e$ along $c$ \\
  \addlinespace
  $c : \tassert{\phi} A$ & $c : A$ & $\eassert{c}{\phi} \semi P$
  & $P$ & provider asserts $\phi$ on channel $c$ \\
  & & $\eassume{c}{\phi} \semi Q$ & $Q$ & client assumes $\phi$ on $c$ \\
  \addlinespace
  $c : \tassume{\phi} A$ & $c : A$ & $\eassume{c}{\phi} \semi P$
  & $P$ & provider assumes $\phi$ on channel $c$ \\
  & & $\eassert{c}{\phi} \semi Q$ & $Q$ & client asserts $\phi$ on $c$ \\
  \bottomrule
  \end{tabular}
  \caption{Refined session types with operational description}
  \label{tab:refine}
  \vspace{-2em}
  \end{table*}

\paragraph{Quantification}
The provider of $(c : \texists{n} A)$ should send a witness $i$ along
channel $c$ and then continue as $A[i/n]$.  The witness is specified
by an arithmetic expression $e$ which, since it must be closed at
runtime, can be evaluated to a number $i$ (although we do not bother
formally representing this evaluation).  From the typing perspective,
we just need to check that the expression $e$ denotes a natural
number, using only the permitted variables in $\vars$.  This is
represented with the auxiliary judgment $\vars \semi \cons \vdash e : \m{nat}$
(implicitly proving that $e \geq 0$ under constraint $\cons$).
\begin{mathpar}
  
  \infer[\exists R]
  {\vars \semi \cons \semi \D \entailpot{\B{q}} \esendn{x}{e} \semi P :: (x : \texists{n} A)}
  {\vars \semi \cons \vdash e : \m{nat} \and
  \vars \semi \cons \semi \D \entailpot{\B{q}} P :: (x : A[e/n])}
  \and
  \infer[\exists L]
  {\vars \semi \cons \semi \D, (x : \texists{n} A) \entailpot{\B{q}} \erecvn{x}{n} \semi Q_n :: (z : C)}
  {\vars,n \semi \cons \semi \D, (x : A) \entailpot{\B{q}} Q_n :: (z : C) & \fresh{n}}
\end{mathpar}
Operationally, the provider sends the arithmetic expression
that the client receives and appropriately substitutes.
\begin{tabbing}
  $({\exists}S)$ \quad \= $\proc{c}{w, \esendn{c}{e} \semi P} \;\mapsto\;$
  $\m{proc}(c', w,$ \\
  \> \qquad $P[c'/c]), \; \msg{c}{0, \esendn{c}{e} \semi \fwd{c}{c'}}$ \\
  $({\exists}C)$ \> $\msg{c}{w', \esendn{c}{e} \semi \fwd{c}{c'}}, \;$
  $\m{proc}(d, w,$\\
  \> \hspace{-2em} $\erecvn{c}{n} \semi Q) \;\mapsto\;$
  $\proc{d}{w+w', Q[e/n][c'/c]}$
\end{tabbing}

The dual type $\tforall{n} A$ reverses the role of the provider and
client.  The client sends (the value of) an arithmetic expression $e$
which the provider receives and binds to $n$.
\begin{mathpar}
  
  \infer[\forall R]
  {\vars \semi \cons \semi \D \entailpot{\B{q}} \erecvn{x}{n} \semi P_n :: (x : \tforall{n} A)}
  {\vars,n \semi \cons \semi \D \entailpot{\B{q}} P_n :: (x : A)}
  \and
  \infer[\forall L]
  {\vars \semi \cons \semi \D, (x : \tforall{n} A) \entailpot{\B{q}} \esendn{x}{e} \semi Q :: (z : C)}
  {\vars \semi \cons \vdash e : \m{nat} \and
  \vars \semi \D, (x : A[e/n]) \entailpot{\B{q}} Q :: (z : C)}
\end{mathpar}

\paragraph{Constraints}
Refined session types also allow constraints over index variables.  As
we have already seen in the examples, these critically govern
permissible messages.
%
From the message-passing perspective, the provider of
$(c : \tassert{\phi} A)$ should send a proof of $\phi$ along $c$ and
the client should receive such a proof.  However, since the index
domain is decidable and future computation cannot depend on the form
of the proof (what is known in type theory as \emph{proof
  irrelevance}) such messages are not actually exchanged.  Instead, it
is the provider's responsibility to ensure that $\phi$ holds, while the
client is permitted to assume that $\phi$ is true.  Therefore,
and in an analogy with imperative languages, we write $\eassert{c}{\phi} \semi P$
for a process that \emph{asserts} $\phi$ for channel $c$ and continues
with $P$, while $\eassume{c}{\phi} \semi Q$ \emph{assumes} $\phi$ and
continues with $Q$.

Thus, the typing rules for this new type constructor are
\begin{mathpar}
  \infer[{\tassertop}R]
  {\vars \semi \cons \semi \D \entailpot{\B{q}} \eassert{x}{\phi} \semi P :: (x : \tassert{\phi} A)}
  {\vars \semi \cons \proves \phi & \vars \semi \cons \semi \D \entailpot{\B{q}} P :: (x : A)}
\end{mathpar}
\begin{mathpar}
  \infer[{\tassertop}L]
  {\vars \semi \cons \semi \D, (x : \tassert{\phi} A) \entailpot{\B{q}} \eassume{x}{\phi} \semi Q :: (z : C)}
  {\vars \semi \cons \land \phi \semi \D, (x : A) \entailpot{\B{q}} Q :: (z : C)}
\end{mathpar}
Notice how the provider must verify the truth of $\phi$ given the
currently known constraints $\cons$ ($\vars \semi \cons \proves \phi$),
while the client assumes $\phi$ by adding it to $\cons$.
%
In well-typed configurations (which arise from executing well-typed processes,
see Appendix~\ref{app:formal}) the constraint $\phi$ in these
rules will always be closed and true so there is no need to check this
explicitly.

The dual operator $\tassume{\phi} A$ reverses the role of provider and
client. The provider of $x : \tassume{\phi} A$ may assume the truth of
$\phi$, while the client must verify it.  The dual rules are
\begin{mathpar}
  
  \infer[{\tassumeop}R]
  {\vars \semi \cons \semi \D \entailpot{\B{q}} \eassume{x}{\phi} \semi P :: (x : \tassume{\phi} A)}
  {\vars \semi \cons \land \phi \semi \D \entailpot{\B{q}} P :: (x : A)}
  \and
  \infer[{\tassumeop}L]
  {\vars \semi \cons \semi \D, (x : \tassume{\phi} A) \entailpot{\B{q}} \eassert{x}{\phi} \semi Q :: (z : C)}
  {\vars \semi \cons \proves \phi & \vars \semi \cons \semi \D, (x : A) \entailpot{\B{q}} Q :: (z : C)}
\end{mathpar}

The remaining issue is how to type-check a branch that is impossible
due to unsatisfiable constraints.  For example, if a client sends
a $\mb{del}$ request to a provider along $c : \queue{A}[0]$, the
type then becomes
\begin{center}
  \begin{minipage}{0cm}
  \begin{tabbing}
  $c : \ichoice{\mb{none} : \tassert{0 {=} 0} \one,
  \mb{some} : \tassert{0 {>} 0} A \tensor \queue{A}[0{-}1]}$
  \end{tabbing}
  \end{minipage}
\end{center}
The client would have to branch on the label received
and then assume the constraint asserted by the provider
\begin{sill}
  $\m{case}\; c\;$ \= $(\, \mb{none} \Rightarrow \eassume{c}{0 = 0} \semi P_1$ \\
  \> $\mid \mb{some} \Rightarrow \eassume{c}{0 > 0} \semi P_2)$
\end{sill}
but what could we write for $P_2$ in the $\mb{some}$ branch?
Intuitively, computation should never get there because the provider
can not assert $0 > 0$.  Formally, we use the process expression
`$\eimposs$' to indicate that computation can never reach this spot:
\begin{sill}
  $\m{case}\; c\;$ \= $(\, \mb{none} \Rightarrow \eassume{c}{0 = 0} \semi P_1$ \\
  \> $\mid \mb{some} \Rightarrow \eassume{c}{0 > 0} \semi \eimposs)$
\end{sill}
In implicit syntax (see Section~\ref{sec:recon}) we could omit
the $\mb{some}$ branch altogether and it would be reconstructed
in the form shown above.
Abstracting away from this example, the typing rule for impossibility
simply checks that the constraints are indeed unsatisfiable
\begin{mathpar}
  
  \infer[\m{unsat}]
  {\vars \semi \cons \semi \D \entailpot{\B{q}} \eimposs :: (x : A)}
  {\vars \semi \cons \proves \bot}
\end{mathpar}
There is no operational rule for this scenario since in well-typed configurations
the process expression `$\eimposs$' is dead code and can never be reached.

The extension of session types with index refinements is type safe,
expressed using the usual proofs of preservation and progress on
configurations. A well-typed configuration,
represented using the
judgment $\D_1 \Vdash_{\Sg} \config :: \D_2$ denotes a set of
semantic objects $\config$ using channels in $\D_1$ and offering
channels $\D_2$. The proof of preservation proceeds by induction on
the operational semantics and inversion on the configuration and
process typing judgment.

To state progress, we need the notion of a poised
process~\cite{Pfenning15fossacs}.  A process $\proc{c}{w, P}$ is
poised if it is trying to receive a message on $c$. Dually, a message
$\msg{c}{w, M}$ is poised if it is sending along $c$. A configuration
is poised if every message or process in the configuration is
poised. Intuitively, this means that every process in the
configuration is trying to interact with the outside world.

\begin{theorem}[Type Safety]\label{thm:type_safety}
  For a well-typed configuration $\D_1 \Vdash_{\Sg} \config :: \D_2$:
  \begin{enumerate}
    \item[(i)] (Progress) Either $\config$ is poised, or $\config \step
    \config'$. 

    \item[(ii)] (Preservation) If $\config \step \config'$, then
    $\D_1 \Vdash_{\Sg} \config' :: \D_2$
  \end{enumerate}
\end{theorem}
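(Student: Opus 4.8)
The plan is to prove the two parts of type safety by the standard technique for session-typed process calculi given as multiset rewriting systems, adapting the deadlock-freedom and type-preservation arguments of \cite{Pfenning15fossacs,Caires16mscs} to account for the new refinement constructs.

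\textbf{Preservation.} I would proceed by case analysis on the operational rule $\config \step \config'$ that fires. Each such rule rewrites a small submultiset of semantic objects (one or two $\m{proc}$/$\m{msg}$ objects) into a new submultiset, leaving the rest of $\config$ untouched. By inversion on the configuration typing judgment $\D_1 \Vdash_\Sg \config :: \D_2$, the objects mentioned in the rule are typed by specific process/message typing judgments; I then invert those judgments to expose the typing of the continuation(s), re-assemble the configuration derivation using the substitution lemma for channel renamings and for arithmetic expressions, and conclude $\D_1 \Vdash_\Sg \config' :: \D_2$. For the propagation rules (${\with}S$, ${\exists}S$, etc.) the fresh continuation channel $c'$ is typed at the same type as $c$ in the message object, and the spawning process is retyped along $c'$; for the receive rules (${\with}C$, ${\exists}C$, etc.) the continuation channel is substituted back. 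The genuinely new cases are the refinement constructs. For ${\tassertop}S/C$ (and dually ${\tassumeop}$), inversion on ${\tassertop}R$ gives $\vars\semi\cons\proves\phi$; since well-typed configurations are closed, $\cons$ is trivial and $\phi$ is a closed true proposition, so when the client assumes it via ${\tassertop}L$ the added conjunct $\cons\land\phi$ is still satisfied and the continuation typing goes through. For the quantifier rules ${\exists}S/C$ and ${\forall}R/L$, I use that the transmitted arithmetic expression $e$ evaluates to a concrete natural number $i$ with $\vars\semi\cons\vdash e:\m{nat}$, so substituting $i$ (equivalently $e$) for the bound index variable $n$ preserves typing by the arithmetic substitution lemma and preserves validity of all types involved. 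The $\m{def}C$ rule uses the signature-lookup premise together with the potential bookkeeping $r \geq q[\overline e/\overline n]$ and $r-q$ on the continuation, which recombine additively. I must also observe that $\eimposs$ is never reached at runtime because in well-typed configurations the ambient constraint is satisfiable, so the $\m{unsat}$ rule has no matching operational step — hence there is nothing to check for it.

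\textbf{Progress.} I would argue that if $\config$ is not already poised, then some object in $\config$ can take a step. The standard strategy is to order the objects by the dependency structure induced by which channel each object provides and uses; since $\config$ uses channels $\D_1$ and offers $\D_2$ and the provided channels are all distinct, this forms a forest. I pick an object maximal in this order (one whose provided channel is either in $\D_2$ or consumed by no one inside $\config$) — or more precisely, reason along a chain toward a process/message that is not poised. If some object is a $\m{proc}(c,w,P)$ that is not poised, then $P$'s head construct is a sender or a spawn or a forward or an $\eimposs$; a spawn, forward, send-label, close, send-channel, $\esendn{}{}$, $\eassert{}{}$, or $\eassume{}{}$-as-provider all fire immediately (as $S$-rules producing a message, or as $\m{def}C$/$\m{id}$-rules), and $\eimposs$ cannot occur in a well-typed configuration. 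If instead $P$ is poised (waiting to receive on its used channel $c'$) I follow $c'$ to the object providing it; if that object is a poised message, the corresponding $C$-rule fires; if it is a poised process we continue along the chain, which must terminate because the forest is finite and acyclic — so either we reach a message (fire) or a non-poised process (fire), contradicting the assumption that nothing steps. The refinement constructs fit this pattern cleanly: $\eassert/\eassume$ on a channel behave exactly like label-sends/receives operationally.

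\textbf{Main obstacle.} The routine relinking of configuration typing derivations is tedious but mechanical; the delicate point is the interaction between the arithmetic refinement layer and runtime closedness. I expect the crux of preservation to be establishing and then using the invariant that every proposition $\phi$ appearing in an $\eassert/\eassume$ reached at runtime is closed and \emph{true}, so that the client's $\cons\land\phi$ remains consistent and the provider's obligation $\vars\semi\cons\proves\phi$ was vacuously discharged — this is exactly what lets me avoid re-checking constraints in the dynamic rules, and it is what rules out $\eimposs$ being reachable, which in turn is what keeps progress from getting stuck. Making this invariant precise requires the right formulation of ``well-typed configuration'' (the judgment $\D_1 \Vdash_\Sg \config :: \D_2$ with empty $\vars$ and trivial $\cons$ everywhere), and I would state it as an auxiliary lemma proved simultaneously with preservation.
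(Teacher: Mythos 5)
Your proposal follows essentially the same route the paper takes: preservation by induction on the operational semantics with inversion on the configuration and process typing judgments, and progress via the poised-process/dependency-chain argument of \cite{Pfenning15fossacs}, with the key observation that in well-typed (closed) configurations every asserted constraint is true and $\eimposs$ is unreachable. The paper only sketches this argument in a sentence, and your elaboration — including the closedness invariant and the potential bookkeeping in $\m{def}C$ — is consistent with it.
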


\section{Type Equality}\label{sec:tpeq}

At the core of an algorithm for type checking is type equality.  It is
necessary for the rule of identity (operationally: forwarding) as well
as the channel-passing constructs for types $A \tensor B$ and
$A \lolli B$.
Informally, two types are equal if they permit exactly the same
communication behaviors.  For example, if

\begin{sill}
  $\m{nat} = \ichoice{\mb{zero} : \one, \mb{succ} : \m{nat}}$ \\
  $\m{nat}' = \ichoice{\mb{zero} : \one, \mb{succ} : \ichoice{\mb{zero} : \one, \mb{succ} : \m{nat}'}}$
\end{sill}

we have that $\m{nat} \equiv \m{nat}'$ because each type allows any
sequence of $\mb{succ}$ labels, followed by $\mb{zero}$
followed by $\mi{close}$.  This carries over to
indexed types.  If we define

\begin{sill}
  $\m{nat}[n] = \ichoice{\mb{zero} : \tassert{n = 0} \one, \mb{succ} : \tassert{n > 0} \m{nat}[n-1]}$ \\
  $\m{pos}[n] = \ichoice{\mb{succ} : \tassert{n > 0} \m{nat}[n-1]}$
\end{sill}

then $\m{pos}[n+1] \equiv \m{nat}[n+1]$ for all $n$.

Following seminal work by Gay and Hole~\cite{Gay2005} type equality is
formally captured as a coinductive definition.  Gay and Hole actually
define subtyping ($A \leq B$ if every behavior permitted by $A$ is
also permitted by $B$) and derive type equality from it; we simplify
matters by directly defining equality.  Subtyping follows the
same pattern and presents no additional complications.
Our definitions capture \emph{equirecursive} type equality, but they
can easily be adapted to an \emph{isorecursive} equality with explicit
$\m{fold}$ messages~\cite{Lindley16icfp,Derakhshan19corr}.

\begin{definition}\label{def:unfold}
  We define $\unfold{A}$ as
  \begin{mathpar}
    
    \infer[\m{def}]
    {\unfold{V \indv{e}} = A[\overline{e}/\overline{n}]}
    {V \indv{n} = A \in \Sg}
    \and
    \infer[\m{str}]
    {\unfold{A} = A}
    {A \not= V\indv{e}}
  \end{mathpar}
\end{definition}

Like Gay and Hole, we require type definitions to be contractive, so
the result of unfolding is never a type variable.  Let
$\mi{Type}$ be the set of all closed type expressions (no free
variables).  An interesting point in this definition is that
constraint types with unsatisfiable constraints are related
because neither can communicate any further messages.

\begin{definition}\label{def:rel}
  A relation $\rel \subseteq
  \mi{Type} \times \mi{Type}$ is a type bisimulation if $(A, B) \in
  \rel$ implies (analogous cases omitted):
  \begin{itemize}[leftmargin=*]
    \item If $\unfold{A} = \ichoice{\ell : A_\ell}_{\ell \in L}$, then $\unfold{B} =
    \ichoice{\ell : B_\ell}_{\ell \in L}$ and $(A_\ell, B_\ell) \in \rel$ for
    all $\ell \in L$.

    \item If $\unfold{A} = A_1 \lolli A_2$, then $\unfold{B} =
    B_1 \lolli B_2$ and $(A_1, B_1) \in \rel$ and
    $(A_2, B_2) \in \rel$.


    \item If $\unfold{A} = \tassume{\phi}{A'}$, then $\unfold{B} = \tassume{\psi}{B'}$
    and either $\proves \phi$, $\proves \psi$, and
    $(A', B') \in \rel$, or $\proves \lnot \phi$ and $\proves \lnot \psi$.

    \item If $\unfold{A} = \texists{m} A'$, then $\unfold{B} = \texists{n} B'$
    and for all $i \in \mathbb{N}$, $(A'[i/m], B'[i/n]) \in \rel$.

  \end{itemize}
\end{definition}

\begin{definition}\label{def:tpeq}
  Two types $A$ and $B$ are equal ($A \equiv B$) iff there exists a type
  bisimulation $\rel$ such that $(A, B) \in \rel$.
\end{definition}

This definition only applies to ground types with no free variables. Since
we allow quantifiers and arithmetic constraints, we need to define equality
in the presence of free variables and arbitrary constraints. To this end,
we define the notation $\forall \vars. \; \cons \Rightarrow A \equiv B$
under the presupposition that $A$ and $B$ are valid type under assumption
$\cons$. 
Interestingly, this definition implies that if $\cons$ is unsatisfiable, then
$A \equiv B$ for all valid types $A$ and $B$.

\begin{definition}\label{def:tpeq_vars}
  We define $\forall \vars. \; \cons \Rightarrow A \equiv B$ iff for all
  ground substitutions $\sigma : \vars$ satisfying $\cons$ (that is,
  $\cdot \proves \cons[\sigma]$), we have $A[\sigma] \equiv B[\sigma]$.
\end{definition}

\begin{theorem}\label{thm:undec}
  Checking $\forall \vars. \; \cons \Rightarrow A \equiv B$ is
  undecidable.
\end{theorem}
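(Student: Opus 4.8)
The plan is to reduce an undecidable problem to type equality. The natural candidate is the halting problem for two-counter (Minsky) machines, which is undecidable, and whose configurations — a control state together with two nonnegative counter values — are exactly the kind of data that can be encoded into indexed type names $V[c_1, c_2]$ of our refinement layer. The key observation is that type equality $\forall \vars.\, \cons \Rightarrow A \equiv B$ quantifies over \emph{all} natural-number instantiations of the index variables, so it can express a statement of the form ``for all reachable configurations $\dots$'', and moreover Definition~\ref{def:rel} forces the unfoldings of $A$ and $B$ to agree structurally step by step, which lets the coinductive equality ``simulate'' the step relation of the machine.

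Concretely, I would fix a two-counter machine $\mathcal{M}$ with instructions $\iota_0, \dots, \iota_k$ and, for each program counter $p$, introduce two type families, say $T_p[c_1, c_2]$ and $T'_p[c_1, c_2]$, whose definitions in the signature $\Sg$ mimic one step of $\mathcal{M}$ from state $p$ with counters $c_1, c_2$: an increment instruction recurses into $T_{p'}[c_1+1, c_2]$ under the appropriate branch label, a decrement/zero-test instruction uses a constraint type $\tassert{c_1 > 0} \cdots$ in one branch (continuing at $T_{p'}[c_1-1,c_2]$) and $\tassert{c_1 = 0}\cdots$ in the other (continuing at $T_{p''}[c_1,c_2]$), and the $\halt$ instruction is the place where the two families are made to \emph{differ} — for instance $T_{\halt}[c_1,c_2]$ offers an extra label, or a $\one$ versus a nonterminating type, so that $T_{\halt}[c_1,c_2] \not\equiv T'_{\halt}[c_1,c_2]$ while $T_p[c_1,c_2] \equiv T'_p[c_1,c_2]$ for every non-halting $p$ under the constraint that $(p,c_1,c_2)$ is actually reachable-but-I should instead make $T'$ identical to $T$ except at $\halt$, and ask equality of $T_{0}[0,0]$ against a variant that never reaches a halting-shaped type. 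The cleanest packaging: arrange that $T_{\text{start}}[0,0] \equiv S$ for a fixed ``always-loops'' type $S$ if and only if $\mathcal{M}$ does not halt. Then, since halting is undecidable, equality is undecidable. Using the arithmetic constraints $\tassertop$ is essential: they let a single type definition branch on $c_i = 0$ versus $c_i > 0$ without any actual message being determined by the counter value, so the bisimulation game exactly tracks the machine's deterministic execution, and when the constraint in a branch is unsatisfiable (Definition~\ref{def:rel}, the $\tassumeop$/$\tassertop$ clause relating types with contradictory constraints) that branch is vacuously matched, which is precisely what we want for the unreachable transitions.

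The main obstacle I expect is verifying the ``only if'' direction of the reduction, i.e. soundness: showing that if $\mathcal{M}$ runs forever then the constructed types really are bisimilar. This requires exhibiting an explicit type bisimulation $\rel$ — naturally $\rel = \{(T_p[\sigma(c_1),\sigma(c_2)], S_p[\dots]) : (p,\sigma(c_1),\sigma(c_2)) \text{ is the configuration reached after some number of steps from the start}\}$ closed up under the structural clauses — and checking that it satisfies every case of Definition~\ref{def:rel}, including that at decrement instructions exactly one of the two constraint branches is satisfiable and the satisfiable one leads to a pair again in $\rel$, while the unsatisfiable one is matched by the contradictory-constraint clause. One has to be careful that the encoding stays within the \emph{quantifier-free} fragment (the theorem as stated does not restrict to it, but the paper's abstract claims undecidability already there, so the encoding should avoid $\texists{n}$ and $\tforall{n}$ in types, using only type-name indices, $\tassertop$, $\tassumeop$, and the additive/multiplicative connectives) and that all intermediate types are \emph{valid} in the sense of Section~\ref{subsec:base} — in particular that $c_i - 1$ only appears under an assumption $c_i > 0$, which is exactly how the decrement branch is set up. A secondary subtlety is contractivity of the type definitions $T_p$: since each $T_p$ immediately offers an internal/external choice before recursing, contractivity holds, but this should be noted. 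The ``if'' direction (completeness: halting implies inequality) is the easy part — a halting run gives a finite sequence of forced unfoldings leading from $(T_{\text{start}}[0,0], S)$ to a pair whose unfoldings visibly disagree, contradicting membership in any bisimulation.
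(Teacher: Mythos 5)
Your proposal is correct and takes essentially the same route as the paper's proof: a reduction from (non-)halting of a two-counter machine, encoding configurations as indexed type names $T_i[c_1,c_2]$, using $\tassert{c_j = 0}$ / $\tassert{c_j > 0}$ branches for zero-test/decrement, and making two otherwise-identical families $T_i$ and $T_i'$ diverge only at $\halt$ (the paper sends them into $T_{\inf} = \oplus\{\ell : T_{\inf}\}$ versus $T_{\inf}' = \oplus\{\ell' : T_{\inf}'\}$), so that $T_1[m,n] \equiv T_1'[m,n]$ iff the machine does not halt. The only misstep is your ``cleanest packaging'' aside of comparing $T_{\mathrm{start}}[0,0]$ against a single fixed always-loops type $S$: Definition~\ref{def:rel} forces both sides to unfold to internal choices over the \emph{same} label set at every step, and the encoding's labels vary with the instruction, so keep the two-parallel-copies construction you describe first.
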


\begin{proof}
  Given a two counter machine, we construct two types $A$ and $B$ such
  that the computation of the machine is infinite iff $A \equiv B$. Thus, we
  establish a reduction from non-halting problem to type equality.
\end{proof}  

The type system allows us to simulate the two counter
machine. Intuitively, the quantified type constructors allow us to
branch depending on arithmetic constraints. This coupled with
arbitrary recursion in the language establishes
undecidability. Remarkably, the small fragment of our language
containing only type definitions, internal choice ($\oplus$) and
assertions ($\tassert{\phi} A$) constructors is sufficient to prove
undecidability.  Moreover, the proof still applies if we treat types
isorecursively.  Appendix~\ref{app:tpeq} contains the machine
construction and details of the undecidability proof.

\paragraph{Algorithm}
Despite its undecidability, we have designed a coinductive algorithm
for soundly approximating type equality. Like Gay and Hole's
algorithm, it proceeds by attempting to construct a bisimulation and
it can terminate in three different states: (1) we have succeeded in
constructing a bisimulation, (2) we have found a counterexample to
type equality by finding a place where the types may exhibit different
behavior, or (3) we have terminated the search without a definitive
answer.  From the point of view of type-checking, both (2) and (3) are
interpreted as a failure to type-check.  The algorithm is expressed as
a set of inference rule where the execution of the algorithm
corresponds to construction of a deduction.  The algorithm is
deterministic (no backtracking) and the implementation is
quite efficient in practice.

We explain the algorithm with an illustrative example.
\begin{sill}
  $\m{ctr}[x,y] = \ichoice{$\=$\mb{lt} : \tassert{x < y} \m{ctr}[x+1,y],$\\
  \>$\mb{ge} : \tassert{x \ge y} \one}$
\end{sill}
The $\m{ctr}$ type outputs $\mb{lt}$ or $\mb{ge}$ based on
comparing $x$ and $y$ and recurses with $\m{ctr}[x+1,y]$
if $x < y$.
Compare types $\m{ctr}[x,y]$ and
$\m{ctr}[x+1,y+1]$. They will both output $\mb{le}$ exactly $\m{max}(0, y-x)$
number of times terminating
with $\mb{ge}$, and thus are equal according to our definition.
Suppose we wish to prove $\m{ctr}[x,y] \equiv \m{ctr}[x+1,y+1]$
for all $x,y \in \mathbb{N}$.
We use the algorithmic equality judgment $\vars \semi \cons \semi \G
\vdash A \equiv B$ to denote checking given variables $\vars$
satisfying constraint $\cons$, whether types $A$ and $B$ are equal.
Since the algorithm is coinductive, $\G$ stores equality constraints
encountered so far. We initiate the algorithm with an
empty $\G$ and $\cons = \true$, thereby checking
$x,y \semi \true \semi \cdot \vdash \m{ctr}[x,y] \equiv \m{ctr}[x+1,y+1]$.
\begin{mathpar}
  
  \inferrule*[right = $\m{expd}$]
  {V_1 \indv{v_1} = A \in \Sg \and
  V_2 \indv{v_2} = B \in \Sg \\\\
  \gamma = \forall \vars. \; \cons \Rightarrow V_1 \indv{e_1} \equiv V_2 \indv{e_2}\\\\
  \vars \semi \cons \semi \G, \gamma
  \vdash A[\overline{e_1} / \overline{v_1}] \equiv
  B[\overline{e_2}/\overline{v_2}]}
  {\vars \semi \cons \semi \G \vdash
  V_1 \indv{e_1} \equiv V_2 \indv{e_2}}
\end{mathpar}
The $\m{expd}$ rule adds the equality constraint $V_1 \indv{e_1} \equiv V_2 \indv{e_2}$
to $\G$ and expands the two sides by replacing each type name with its definition.
In our example, we add $\forall X,Y. \;
\m{ctr}[X,Y] \equiv \m{ctr}[X+1,Y+1]$ to $\G$ ($\alpha$-renamed to avoid
confusion) and replace the type names with their definition.
Next, we use the $\oplus$ rule below to explore each branch.
\begin{mathpar}
  
  \infer[\ichoiceop]
  {
    \vars \semi \cons \semi \G \vdash \ichoice{\ell : A_\ell}_{\ell \in L}
    \equiv \ichoice{\ell : B_\ell}_{\ell \in L}
  }
  {
    \vars \semi \cons \semi \G \vdash A_\ell \equiv B_\ell \quad (\forall \ell \in L)
  } 
\end{mathpar}
First, we check the $\mb{lt}$ branch: $\tassert{x < y} \m{ctr}[x+1,y] \equiv
\tassert{x+1 < y+1} \m{ctr}[x+2,y+1]$. We use the $\tassertop$ rule
to check equivalence of the two constraints
\begin{mathpar}
  
  \infer[\tassertop]
  {
    \vars \semi \cons \semi \G \vdash \tassert{\phi} A
    \equiv \tassert{\psi} B
  }
  {
    \vars \semi \cons \proves \phi \leftrightarrow \psi \and
    \vars \semi \cons \land \phi \semi \G \vdash A \equiv B
  } 
\end{mathpar}
If the assumption of $\phi$ were to make the constraints
contradictory we would succeed at this point, using the rule
\begin{mathpar}
  
  \infer[\bot]
  {\Sg \semi \vars \semi \cons \semi \G \vdash
  A \equiv B}
  {\vars \semi \cons \proves \bot}
\end{mathpar}
in accordance with Definition~\ref{def:tpeq_vars}.
In our example, since $x < y \leftrightarrow x+1 < y+1$ and $x < y$ is
consistent, we compare $\m{ctr}[x+1,y] \equiv
\m{ctr}[x+2,y+1]$. At this point, we check if this equality is
entailed by one of the stored equality constraints stored.  The
simplest case of such an entailment is witnessed by a substitution,
applied to one of the stored equality constraints. And yes, since we
stored $(\forall X,Y. \; \m{ctr}[X,Y] \equiv \m{ctr}[X+1,Y+1])$, we can
substitute $x+1$ for $X$ and $y$ for $Y$ to satisfy the desired
equality constraint.  This is the coinductive aspect of the algorithm
formalized in the $\m{def}$ rule.
\begin{mathpar}
  
  \inferrule*[right=$\m{def}$]
  {
    \forall \vars'. \; \cons' \Rightarrow V_1 \indv{E_1} \equiv V_2 \indv {E_2} \in \G \\
    \vars \semi \cons \proves \exists \vars'. \; \cons' \land \indv{E_1}
    = \indv{e_1} \land \indv{E_2} = \indv{e_2}
  }
  {
    \vars \semi \cons \semi \G \vdash V_1 \indv{e_1} \equiv V_2 \indv{e_2}
  }
\end{mathpar}
For our example, this reduces to checking the validity of
$\forall x,y. \; \exists X,Y. \; X = x+1 \land Y = y \land X+1=x+2 \land
Y+1=y+1$.
Similarly, for the $\mb{ge}$ branch, we check if $\tassert{x \ge y} \one
\equiv \tassert{x+1 \ge y+1} \one$. Since the two constraints are equivalent
and $\one \equiv \one$, we infer that the types match in both branches.
Thus, we conclude that $\m{ctr}[x,y] \equiv \m{ctr}[x+1,y+1]$.
The rest of the rules of the type equality algorithm are similar to the
ones presented (full rules in Appendix~\ref{app:tpeq}).

The system so far is potentially nonterminating because when
encountering variable/variable equations, we can use $\m{expd}$
indefinitely.  To ensure termination we use two techniques.  The first
is to introduce internal names for every subexpression of type
definitions.  This means the algorithm alternates between comparing
two type names and two type constructors.
The second is to restrict the $\m{expd}$ rule to the case where no
assumption of the form
$\forall \vars'.\, \cons' \Rightarrow V_1\indv{e_1'} \equiv
V_2\indv{e_2'}$ is already present in $\Gamma$.  This means that for
the case where there are no index expressions, the algorithm behaves
exactly like Gay and Hole's and is terminating: we close a branch
when we find the equation $V_1 \equiv V_2$ in $\Gamma$.

We prove that the type equality algorithm is sound with respect to the
declarative equality definition. The soundness is proved by
constructing a type bisimulation from a derivation of the algorithmic
type equality judgment (Appendix~\ref{app:tpeq}).
\begin{theorem}\label{thm:tpeq_sound}
  If $\vars \semi \cons \semi \cdot \vdash A \equiv B$, then
  $\forall \vars. \; \cons \Rightarrow A \equiv B$.
\end{theorem}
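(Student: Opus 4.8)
The plan is to build a concrete type bisimulation out of a successful run of the algorithm and then invoke Definitions~\ref{def:rel}, \ref{def:tpeq}, and~\ref{def:tpeq_vars}. Fix a derivation $\mathcal{D}$ of $\vars \semi \cons \semi \cdot \vdash A \equiv B$, and let $\rel$ be the set of all pairs $(C[\sigma], D[\sigma])$ such that some node of $\mathcal{D}$ has conclusion $\vars' \semi \cons' \semi \G' \vdash C \equiv D$ and $\sigma : \vars'$ is a ground substitution with $\cdot \proves \cons'[\sigma]$. By the judgmental invariant that all free index variables of $C$ and $D$ lie in $\vars'$, each $C[\sigma]$ and $D[\sigma]$ is closed, so $\rel \subseteq \mi{Type} \times \mi{Type}$. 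The root of $\mathcal{D}$ witnesses $(A[\sigma], B[\sigma]) \in \rel$ for every $\sigma : \vars$ with $\cdot \proves \cons[\sigma]$, so it suffices to show $\rel$ is a type bisimulation; Definition~\ref{def:tpeq} then gives $A[\sigma] \equiv B[\sigma]$ for all such $\sigma$, which by Definition~\ref{def:tpeq_vars} is exactly $\forall \vars.\; \cons \Rightarrow A \equiv B$.

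To prove $\rel$ is a bisimulation, take $(E, F) \in \rel$, say $E = C[\sigma]$, $F = D[\sigma]$ witnessed by a node $N$ with conclusion $\vars' \semi \cons' \semi \G' \vdash C \equiv D$ and $\cdot \proves \cons'[\sigma]$, and do case analysis on the last rule of $N$, using throughout the substitution lemma $\unfold{C[\sigma]} = (\unfold{C})[\sigma]$ (nontrivial only when $C = V\indv{e}$, where it holds since the formal parameters of the definition of $V$ are disjoint from $\vars'$). The structural rules (for $\ichoiceop$, $\echoiceop$, $\tensor$, $\lolli$, $\one$, $\tassertop$, $\tassumeop$, $\exists$, $\forall$) are routine: $C$ and $D$ are already type constructors, so $\unfold E = C[\sigma]$ and $\unfold F = D[\sigma]$ exhibit the matching head, and the premises of $N$ are nodes of $\mathcal{D}$, so their $\sigma$-instances --- for the quantifier rules, the $\sigma_i$-instances extending $\sigma$ by a fresh $k \mapsto i$ for each $i \in \mathbb{N}$, which still satisfies $\cons'$ --- lie in $\rel$, matching the relevant clause of Definition~\ref{def:rel}. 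The $\tassertop$/$\tassumeop$ cases additionally use that the algorithm's premise is taken under $\cons' \land \phi$: if $\phi[\sigma]$ holds then $\sigma$ satisfies $\cons' \land \phi$ and the premise node supplies the continuation pair, while if $\phi[\sigma]$ fails then the equivalence premise forces $\psi[\sigma]$ false, so the alternative of the clause requiring $\proves \lnot \phi$ and $\proves \lnot \psi$ applies. A $\bot$ node forces $\vars' \semi \cons' \proves \bot$, so $\cons'$ is unsatisfiable, no $\sigma$ exists, and the node contributes nothing to $\rel$ --- precisely the observation recorded after Definition~\ref{def:tpeq_vars}.

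The cases carrying the real content, and the step I expect to be the main obstacle, are $\m{expd}$ and $\m{def}$: together they realize the coinduction, and the argument must show that a branch closed by appeal to a stored equation genuinely corresponds to a pair already present in $\rel$. For $\m{expd}$, $C = V_1 \indv{e_1}$ and $D = V_2 \indv{e_2}$; by the substitution lemma $\unfold E = C_1[\sigma]$ and $\unfold F = C_2[\sigma]$, where $C_1 = A_1[\overline{e_1}/\overline{v_1}]$ and $C_2 = A_2[\overline{e_2}/\overline{v_2}]$ are the unfolded bodies, which are non-name type constructors by contractivity. The unique premise of $N$ compares $C_1 \equiv C_2$; since these are non-name constructors it cannot be a $\m{def}$ or $\m{expd}$ step, nor a $\bot$ step (its constraint is still $\cons'$, which $\sigma$ satisfies), so it is the structural rule for the common head, and reading that rule off --- exactly as in the previous paragraph --- supplies the decomposition of $\unfold E$ and $\unfold F$ demanded by Definition~\ref{def:rel}. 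For $\m{def}$, $C = V_1 \indv{e_1}$, $D = V_2 \indv{e_2}$, and the branch is closed against some $\gamma = (\forall \vars''.\; \cons'' \Rightarrow V_1 \indv{E_1} \equiv V_2 \indv{E_2}) \in \G'$ with side condition $\vars' \semi \cons' \proves \exists \vars''.\; \cons'' \land \indv{E_1} = \indv{e_1} \land \indv{E_2} = \indv{e_2}$. Because the algorithm begins with $\G$ empty, $\gamma$ was inserted into the context by some $\m{expd}$ node $M$ lying above $N$, whose conclusion is, up to $\alpha$-renaming, $\vars'' \semi \cons'' \semi \G'' \vdash V_1 \indv{E_1} \equiv V_2 \indv{E_2}$. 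Instantiating the side condition with $\sigma$ produces a ground $\tau : \vars''$ with $\cdot \proves \cons''[\tau]$, $\indv{E_1}[\tau] = \indv{e_1}[\sigma]$, and $\indv{E_2}[\tau] = \indv{e_2}[\sigma]$; then $M$ together with $\tau$ witnesses $(V_1 \indv{E_1}[\tau], V_2 \indv{E_2}[\tau]) = (E, F) \in \rel$ through an $\m{expd}$ node, which reduces this case to the $\m{expd}$ case just treated. This completes the verification that $\rel$ is a bisimulation, and hence the theorem. The same construction yields soundness of the subtyping variant (simulation in place of bisimulation, an asymmetric constraint clause in place of the symmetric one), and since the algorithm never inspects recursive structure beyond a single unfolding, the argument is insensitive to the equi- versus iso-recursive reading.
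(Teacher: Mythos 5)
Your proposal is correct and follows essentially the same route as the paper: it builds the candidate bisimulation from the ground instances of the conclusions appearing in the derivation, checks each clause of Definition~\ref{def:rel} by case analysis on the last rule, and discharges the coinductive $\m{def}$ case by extracting a witness substitution from the existential arithmetic side condition and tracing back to the $\m{expd}$ node that stored the assumption --- which is exactly the content of the paper's Lemma~\ref{lem:app_gen_sim}. The only cosmetic difference is that the paper omits $\m{def}$-node conclusions from its collected set $\mathcal{S}$ and argues their inclusion would be redundant, whereas you include them and reduce their verification to the $\m{expd}$ case.
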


\paragraph{Reflexivity}
Because it is not always derivable with the rules so far, we found it
necessary to add one more rule to the algorithm, namely reflexivity
on type names when indices are provably equal.
This is still sound since the reflexive closure of a type
bisimulation is still a type bisimulation.
\begin{mathpar}
  
  \infer[\m{refl}]
  {\vars \semi \cons \semi \G \vdash V\indv{e} = V\indv{e'}}
  {\vars \semi \cons \proves e_1 = e_1' \land \ldots \land e_n = e_n'}
\end{mathpar}

Traditional refinement languages such as DML~\cite{Xi99popl}
only use reflexivity as a criterion for equality of indexed type
names.  However, as exemplified in the $\m{ctr}$ example, our
algorithm extends beyond reflexivity.

\section{Ergometric Session Types}\label{sec:ergo}

An important application of refinement types is
complexity analysis.
%
To describe the resource contracts for inter-process communication,
the type system is enhanced to support amortized resource analysis
\cite{Tarjan85AARA}. The key idea is that \emph{processes store
  potential} and \emph{messages carry potential}. This potential can
either be consumed to perform \emph{work} or exchanged using special
messages. The type system provides the programmer with the flexibility
to specify what constitutes work. Thus, the programmer can choose to
count the resource they are interested in, and the type system
provides the corresponding upper bound.  Our current examples assign
unit cost to message sending operations ($\esendl{c}{k}$,
$\eclose{c}$, $\esendch{c}{d}$) (exempting those for index objects
$\esendn{c}{e}$, $\eassert{c}{\phi}$, and potential $\epay{c}{r}$, see
below) effectively counting the total number of ``real'' messages
exchanged during a computation.

Two dual type constructors $\tpaypot{A}{r}$ and $\tgetpot{A}{r}$
are used to exchange potential. The provider of $x : \tpaypot{A}{r}$
must \emph{pay} $r$ units of potential along $x$ using process
term $(\epay{x}{r} \semi P)$, and continue to provide $A$ by
executing $P$. These $r$ units are deducted from the potential
stored inside the sender. Dually, the client must receive the
$r$ units of potential using the term $(\eget{x}{r} \semi Q)$
and add this to its internal stored potential. Finally, since
processes are allowed to store potential, the typing judgment
is enhanced by adding a natural number on the turnstile denoting
its internal potential.
\begin{center}
  \begin{minipage}{0cm}
  \begin{tabbing}
  $\vars \semi \cons \semi \D \entailpot{q}_{\Sg} P :: (x : A)$
  \end{tabbing}
  \end{minipage}
\end{center}
We allow potential $q$ to refer to index variables in $\vars$.
The typing rules for $\tpaypot{A}{r}$ are
\begin{mathpar}
  
  \infer[{\paypot}R]
  {\vars \semi \cons \semi \D \entailpot{q} \epay{x}{r_1} \semi P :: (x : \tpaypot{A}{r_2})}
  {\vars \semi \cons \proves q \geq r_1 = r_2 &
  \vars \semi \cons \semi \D \entailpot{q-r_1} P :: (x : A)}
  \and
  \infer[{\paypot}L]
  {\vars \semi \cons \semi \D, (x : \tpaypot{A}{r_2}) \entailpot{q} \eget{x}{r_1} \semi Q :: (z : C)}
  {\vars \semi \cons \proves r_1 = r_2 &
  \vars \semi \cons \semi \D, (x : A) \entailpot{q+r_1} Q :: (z : C)}
\end{mathpar}
In both cases,
we check that the exchanged potential in the expression and type matches
($r_1 = r_2$), and while paying, we ensure that the sender has sufficient potential to pay.
Operationally, the provider creates a special message containing the potential
that is received by the client.
\begin{tabbing}
$({\paypot}S)$ \quad \= $\proc{c}{w, \epay{c}{r} \semi P} \;\mapsto\;$ \\
\> $\proc{c'}{w, P[c'/c]}, \; \msg{c}{0, \epay{c}{r} \semi \fwd{c}{c'}}$ \\
$({\paypot}C)$ \> $\msg{c}{w', \epay{c}{r} \semi \fwd{c}{c'}}, \;$
$\m{proc}(d, w,$\\
\> \qquad $\eget{c}{r} \semi Q) \;\mapsto\;\proc{d}{w+w', Q[c'/c]}$
\end{tabbing}
The dual type $\tgetpot{A}{r}$ enables the provider to receive potential
that is sent by its client.
\begin{mathpar}
  
  \infer[{\getpot}R]
  {\vars \semi \cons \semi \D \entailpot{q} \eget{x}{r_1} \semi P :: (x : \tgetpot{A}{r_2})}
  {\vars \semi \cons \proves r_1 = r_2 &
  \vars \semi \cons \semi \D \entailpot{q+r_1} P :: (x : A)}
\end{mathpar}
\begin{mathpar}
  
  \infer[{\getpot}L]
  {\vars \semi \cons \semi \D, (x : \tgetpot{A}{r_2}) \entailpot{q} \epay{x}{r_1} \semi Q :: (z : C)}
  {\vars {\semi} \cons \proves q \geq r_1 {=} r_2 &
  \vars {\semi} \cons {\semi} \D, (x : A) \entailpot{q-r_1} Q :: (z : C)}
\end{mathpar}
The work counter $w$ in the semantic objects $\proc{c}{w, P}$
and $\msg{c}{w, M}$ tracks the work done by the process. We use a
special expression $\ework{r} \semi P$ to increment the work counter.
None of the other rules affect this counter but simply preserve their
total sum. The programmer can insert
these flexibly in the program to count a specific resource. For example,
to count the total number of messages sent, we insert $\ework{1}$
just before sending every message.
\begin{tabbing}
  $({\m{work}})$ \quad $\proc{c}{w, \ework{r} \semi P} \; \mapsto \;
  \proc{c}{w+r, P}$
\end{tabbing}
Statically, type checking this construct requires potential.
No other typing rule affects the potential.
\begin{mathpar}
  \infer[\m{work}]
  {\vars \semi \cons \semi \D \entailpot{q} \ework{r} \semi P :: (x : A)}
  {\vars \semi \cons \proves q \geq r &
  \vars \semi \cons \semi \D \entailpot{q-r} P :: (x : A)}
\end{mathpar}
Since the amount of potential consumed to type check this expression
is equal to the amount of work performed by it, the type safety theorem
expresses that the total work done by a system can never exceed its
initial potential.

The algorithms for type checking
and equality extend easily to these new
constructors since we already track variables and constraints over
natural numbers in the judgments.

\paragraph{Queue Example}
Revisiting the ergometric session type of the queue data structure, we
count the total number of messages exchanged in the system.
\begin{sill}
  $\queue{A}[n] = \echoice{$\=$\mb{ins} : \textcolor{red}{\getpot^{2n}}
  (A \lolli \queue{A}[n+1]),$\\
  \>\hspace{-3em}$\mb{del} : \textcolor{red}{\getpot^{2}} \ichoice{$\=$\mb{none} :
  \tassert{n=0} \one,$\\
  \>\>$\mb{some} : \tassert{n > 0} A \tensor \queue{A}[n-1]}}$\\
  $\cdot \entailpot{0} \mi{empty} :: (s : \queue{A}[0])$\\
  $(x : A), (t : \queue{A}[n]) \entailpot{0} \mi{elem}[n] :: (s : \queue{A}[n+1])$
\end{sill}
Thus, the queue requires $2n$ units of potential to insert and $2$
units of potential to delete an element. The $2n$ units are used to
carry the $\mb{ins}$ message and the element to insert to the end of
the queue. While deletion, the process sends $2$ messages, either
$\mb{none}$ and $\m{close}$ (lines~\ref{proc:emp_none},\ref{proc:emp_close}
in Figure~\ref{fig:queue_impl}) or $\mb{some}$ and the element stored
(lines~\ref{proc:elem_some},\ref{proc:elem_x}). Operationally, when the
$\mi{elem}$ process receives $2(n+1)$ units on $s$, it consumes $2$
units to send 2 messages (lines~\ref{proc:elem_ins},\ref{proc:elem_y}
in Figure~\ref{fig:queue_impl}) and sends the remaining $2n$ units on
$t$ in accordance with the prescribed type. And on deletion, the $2$
units are consumed to send the two messages (lines~\ref{proc:elem_some},\ref{proc:elem_x}).

\section{Constraint and Work Reconstruction}
\label{sec:recon}

The process expressions introduced so far in the language
follow simple syntax-directed typing rules. This means they are
immediately amenable to be interpreted as an algorithm for
type-checking, calling upon a decision procedure where arithmetic
entailments and type equalities need to be verified.  However, this
requires the programmer to write a significant number of
$\m{assume}, \m{assert}, \m{pay}, \m{get}$ and $\m{work}$ expressions
in their code; constructs corresponding to proof constraints and
potential (quantifiers are still explicit). Relatedly, this hinders
reuse: we are unable to provide multiple types to the same program
so that it can be used in different contexts.

\begin{figure}
  \begin{mathpar}
    \vspace{-0.4em}
    \infer[{\tassertop}R]
    {\vars \semi \cons \semi \D \ivdash{q} P :: (x : \tassert{\phi} A)}
    {\vars \semi \cons \proves \phi & \vars \semi \cons \semi \D \ivdash{q} P :: (x : A)}
    \and\vspace{-0.4em}
    \infer[{\tassertop}L]
    {\vars \semi \cons \semi \D, (x : \tassert{\phi} A) \ivdash{q} Q :: (z : C)}
    {\vars \semi \cons \land \phi \semi \D, (x : A) \ivdash{q} Q :: (z : C)}
  \end{mathpar}
  \vspace{-2em}
  \caption{Implicit Typing Rules}
  \vspace{-1.5em}
  \label{fig:implicit-typing}
\end{figure}

This section introduces an \emph{implicit type system} in which the
source program never contains the
$\m{assume}, \m{assert}, \m{pay},$ $\m{get}$ and $\m{work}$ constructs.
Moreover, impossible branches may be omitted from $\m{case}$ expressions.  The
missing branches and other constructs are restored by a type-directed
process of \emph{reconstruction}.  In the first phase, a $\m{case}$
expression with a missing branch for label $\ell$ is extended by a
branch $\ell \Rightarrow \eimposs$ so that type checking later
verifies that the omitted branch is indeed impossible.  Then assume
and asserts, and finally pay and gets are inserted according to a
reconstruction algorithm described in this section. Finally, since
the potential is treated as linear (must be $0$ before termination),
$\m{work}$ constructs are inserted just before the terminating
expression to consume the leftover potential.

Following branch reconstruction, the resulting process expression is
typechecked with the implicit typing judgment
$\vars \semi \cons \semi \D \ivdash{q} P :: (x : A)$ using the rules
in Figure~\ref{fig:implicit-typing} (analogous rules $\tassumeop R,
\tassumeop L, \getpot R, \getpot L, \paypot R, \paypot L$ omitted).
The only difference from
the explicit system is that the process expressions do not change on
application of these rules. The remaining rules exactly match the
explicit system (see Appendix~\ref{app:formal}).

The implicit rules are trivially sound and complete with respect to
the explicit system, since from an implicit typing derivation we can
read off the corresponding explicit process expression and vice versa.
The rules are also manifestly decidable since the types in the premise
are smaller than the conclusion for all the rules presented.

However, the implicit type system is highly nondeterministic.  Given
an implicit source program, there may be many different corresponding
explicit programs depending on when the rules in
Figure~\ref{fig:implicit-typing} are applied.  The necessary
backtracking would greatly complicate error messages and could also be
inefficient.  To solve this problem, we introduce a novel
\emph{forcing calculus} which enforces an order among these implicit
constructs. The core idea of this calculus is to follow the structure
of each type, but within that \emph{ $\m{assume}$ and $\m{get}$ should
  be inserted as early as possible, and $\m{assert}$ and $\m{pay}$
  should be inserted as late as possible.} This reasoning is sound
since the constraints obey a \emph{monotonicity property}: if a
constraint is true at a program point, it will always be true later in
the program.  Thus, eagerly assuming and lazily asserting constraints
is sound: if a constraint can be proved now, it can be proved later.
It is also complete under the mild assumption that the types
can be polarized (explained below).
A similar reasoning
holds for the potential: a sequence of potential exchanges can always
be reordered as eagerly receiving all the potential, and then sending
it only when required.
Logically, the
$\tassumeop R, \tassertop L, \getpot R, \paypot L$ rules are
invertible, and are applied eagerly while their dual rules are
applied lazily. The result is again complete under the assumption
of polarizable types.

This strategy is formally realized in the forcing calculus using the
judgment
$\vars \semi \cons \semi \D \semi \W \entailpot{q} P :: (x : A)$.  The
context is split into two: the linear context $\D$ contains stable
propositions on which the invertible left rules have been applied,
while the ordered context $\W$ stores channels on which invertible
rules can possibly still be applied to. First, we assign polarities to
the type operators with implicit expressions, a notion borrowed from
focusing~\cite{Andreoli92} with a similar function here.  Type
definitions are unfolded in order to determine their polarity, which
is always possible since type definitions are contractive. The types
that involve communication are called \emph{structural} and
represented by $S$.

\[
  \begin{array}{lcl}
    A^+ & ::= & S \mid \tassert{\phi} A^+ \mid \tpaypot{A^+}{r} \\
    A^- & ::= & S \mid \tassume{\phi} A^- \mid \tgetpot{A^-}{r} \\
    A & ::= & A^+ \mid A^- \\
    S & ::= & \ichoice{\ell : A}_{\ell \in L}
    \mid \echoice{\ell : A}_{\ell \in L}
    \mid A \tensor A \mid \one \mid A \lolli A \\
    & & \mid \texists{n} A \mid \tforall{n} A
  \end{array}
\]
Not all types can be polarized in this manner.  For example,
$\tassume{\phi} \tassert{\psi} A$ or $\tpaypot{\tgetpot{A}{q}}{r}$.
When checking the validity of types before performing reconstruction
we reject such types with alternating polarities since our
deterministic algorithm would be incomplete and we have found no need
for them. We also require that all process declarations contain
only structural types at the top-level.

Thus, $\tassertop$ and $\paypot$ operators are positive, while
$\tassumeop$ and $\getpot$ are negative. The structural types, denoted
by $S$ are considered neutral. In the forcing calculus, the invertible
rules are applied first (analogous $\tassertop L, \paypot L$ omitted).
\begin{mathpar}
  
  \infer[\tassumeop R]
  {\vars \semi \cons \semi \D^- \semi \W \entailpot{q}
  P :: (x : \tassume{\phi} A^-)}
  {\vars \semi \cons \land \phi \semi \D^- \semi \W
  \entailpot{q} P :: (x : A^-)}
  \and
  \infer[\getpot R]
  {\vars \semi \cons \semi \D^- \semi \W \entailpot{q}
  P :: (x : \tgetpot{A^-}{r})}
  {
  \vars \semi \cons \semi \D^- \semi \W
  \entailpot{q+r} P :: (x : A^-)}
\end{mathpar}
If a negative type is encountered in the ordered context,
it is considered stable (invertible rules applied)
and moved to $\D^-$.
\begin{mathpar}
  \infer[\m{move}]
  {\vars \semi \cons \semi \D^- \semi \W \cdot
  (x : A^-) \entailpot{q} P :: (z : C^+)}
  {\vars \semi \cons \semi \D^-, (x : A^-) \semi
  \W \entailpot{q} P :: (z : C^+)}
\end{mathpar}
The ordered context $\W$ imposes an order on the channels
on which these invertible rules are applied.

Once all the invertible rules are applied, we reach a
stable sequent of the form $\vars \semi \cons \semi
\D^- \semi \cdot \entailpot{q} P :: (x : A^+)$, i.e.,
the ordered context is empty and the offered type $A^+$
is positive. A stable sequent implies that all constraints
and potential have been received. We send a constraint
or potential lazily, i.e., just before communicating on
that channel. We realize this by \emph{forcing} the
channel just before communicating on it. As an example, while
sending (or receiving) a label on channel $x$, we force it.
\begin{mathpar}
  
  \infer[\oplus F_R]
  {\vars \semi \cons \semi \D^- \semi \cdot \entailpot{q} \esendl{x}{k} \semi P ::
  (x : A^+)}
  {\vars \semi \cons \semi \D^- \semi \cdot \entailpot{q} \esendl{x}{k} \semi P ::
  \focus{x : A^+}}
\end{mathpar}
\begin{mathpar}
  
  \infer[\oplus F_L]
  {\vars{\semi}\cons {\semi} \D, (x : A^-) {\semi} \cdot \entailpot{q}
  \ecase{x}{\ell}{Q_\ell}_{\ell \in L} :: (z : C^+)}
  {\vars{\semi}\cons {\semi} \D, \focus{x : A^-} {\semi} \cdot \entailpot{q}
  \ecase{x}{\ell}{Q_\ell}_{\ell \in L} :: (z : C^+)}
\end{mathpar}
The square brackets $[\cdot]$ indicates that the channel is
forced, indicating that a communication is about to happen
on it. If there are $\m{assert}$ or $\m{pay}$ constructs pending
on the forced channel, they are applied now
(analogous $\tassumeop L, \getpot L$ omitted).
\begin{mathpar}
  
  \infer[\tassertop R]
  {\vars \semi \cons \semi \D^- \semi \cdot \entailpot{q} P :: \focus{x : \tassert{\phi} A^+}}
  {\vars \semi \cons \proves \phi \and
  \vars \semi \cons \semi \D^- \semi \cdot \entailpot{q} P :: \focus{x : A^+}}
  \and
  \infer[\paypot R]
  {\vars \semi \cons \semi \D^- \semi \cdot \entailpot{q} P :: \focus{x : \tpaypot{A^+}{r}}}
  {\vars \semi \cons \proves q \geq r &
  \vars \semi \cons \semi \D^- \semi \cdot \entailpot{q-r} P :: \focus{x : A^+}}
\end{mathpar}
Finally, if a forced channel has a structural type, we apply the
corresponding structural rule and \emph{lose the forcing}. Again,
as an example, we consider the internal choice operator.
\begin{mathpar}
  \infer[\oplus R_k]
  {\vars \semi \cons \semi \D^- \semi \cdot \entailpot{q} (\esendl{x}{k} \semi P) ::
  \focus{x : \ichoice{\ell : A_\ell}_{\ell \in L}}}
  {(k \in L) \quad
  \vars \semi \cons \semi \D^- \semi \cdot \entailpot{q} P :: (x : A_k)}
  \and
  \infer[\oplus L]
  {\vars {\semi} \cons {\semi} \D, \focus{x {:} \ichoice{\ell : A_\ell}} {\semi} \cdot \entailpot{q}
  \ecase{x}{\ell}{Q_\ell} {::} (z {:} C^+)}
  {(\forall \ell \in L) \quad
  \vars \semi \cons \semi \D \semi (x : A_\ell) \entailpot{q} Q_\ell :: (z : C^+)}
\end{mathpar}
In either case, applying the structural rule creates a possibly unstable
sequent, thereby restarting the inversion phase.

Remarkably, \emph{the forcing calculus is sound and complete with respect
to the implicit type system}, assuming types can be polarized.
Since every rule in the forcing calculus is also present in the implicit system,
it is trivially sound. Moreover, applying $\m{assume}$ and $\m{get}$ eagerly,
and $\m{assert}$ and $\m{pay}$ lazily also turns out to be complete due
to the monotonicity property of constraints and potential.

\begin{theorem}[Soundness and Completeness]\label{thm:forcing}
  For (valid) polarized types $A$ and context $\D$ we have:
  \begin{enumerate}[leftmargin=*]
    \item If $\vars {\semi} \cons {\semi} \D \ivdash{q} P :: (x {:} A)$, then
    $\vars {\semi} \cons {\semi} \cdot {\semi} \D \entailpot{q} P :: (x {:} A)$.

    \item If $\vars {\semi} \cons {\semi} \cdot {\semi} \D \entailpot{q} P :: (x {:} A)$, then
    $\vars {\semi} \cons {\semi} \D \ivdash{q} P :: (x {:} A)$.
  \end{enumerate}
\end{theorem}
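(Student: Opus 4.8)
The plan is to treat the two directions separately, because they are of quite different character. Direction~(2) --- turning a forcing derivation into an implicit one --- is essentially bookkeeping: every rule of the forcing calculus is a rule of the implicit system once the focusing and context annotations are forgotten. Direction~(1) --- showing that \emph{every} implicit derivation can be reorganized to obey the eager/lazy discipline of the forcing calculus --- is where the work lies, and it is where the monotonicity property and the polarizability restriction are actually used.

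For~(2) I would induct on the forcing derivation of $\vars \semi \cons \semi \cdot \semi \D \entailpot{q} P :: (x : A)$, maintaining the invariant that the corresponding implicit sequent has linear context $\D^- \cup \W$ (the union of the stable and ordered contexts). The $\m{move}$ rule and the focusing rules ($\ichoiceop F_R$, $\ichoiceop F_L$, and their analogues) are erased, since they leave the union context and the process term unchanged and $[\cdot]$ is merely a marker. Each remaining forcing rule --- the invertible ones ($\tassumeop R$, $\tassertop L$, $\getpot R$, $\paypot L$), the lazy ones ($\tassertop R$, $\paypot R$ and their duals), and the structural ones --- maps to the identically named implicit rule, with the same arithmetic side conditions, and the induction hypothesis supplies the subderivations. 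This yields $\vars \semi \cons \semi \D \ivdash{q} P :: (x : A)$.

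For~(1) I would first prove the \emph{invertibility lemmas}: if an implicit sequent whose offered type is $\tassume{\phi}{A^-}$ (resp.\ $\tgetpot{A^-}{r}$) is derivable, then so is its body sequent with $\cons$ strengthened by $\phi$ (resp.\ with $q$ increased by $r$); dually for a context entry of the form $\tassert{\phi}{A^+}$ (resp.\ $\tpaypot{A^+}{r}$). Each is shown by induction on the given implicit derivation, commuting the invertible rule past whatever rule stands at the root --- the usual principal and permutation cases, which go through precisely because on a polarizable type no rule for a \emph{different} implicit connective can be interposed on the same channel. With these lemmas, the main argument is an induction on a lexicographic measure: the multiset of sizes of the types in $\W$ together with the offered type (ordered so that each inversion strictly decreases it), then the size of the implicit derivation. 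If the forcing sequent is not yet stable --- $\W$ holds a negative type, or the offered type is negative --- apply the matching invertible forcing rule (or $\m{move}$, for an already-structural negative entry in $\W$), invoke the appropriate invertibility lemma to obtain a still-derivable implicit premise, and recurse. Once stable, the offered type is positive and $\W$ is empty, so $P$ must be performing a structural action on some channel $c$ (a send/receive of a label, channel, or number; a $\mi{close}$/$\mi{wait}$; a forward; a spawn; $\ework{r}$; or $\eimposs$); force $c$, discharge any pending $\tassertop$ or $\paypot$ obligations on it with the lazy forcing rules, and then apply the structural rule, restarting the inversion phase on each premise.

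The main obstacle is the one delicate point buried in that last step. A pending $\tassert{\phi}{}$ on $c$ must satisfy $\vars \semi \cons \proves \phi$, and a pending $\tpaypot{}{r}$ must satisfy $\vars \semi \cons \proves q \geq r$, even though in the original implicit derivation these constraints were discharged under a \emph{weaker} constraint set $\cons'$ and a \emph{smaller} potential; reconciling this is exactly the monotonicity property --- strengthening $\cons$ preserves entailment, and the potential collected eagerly is at least what was available originally. I would therefore prove a ``soundness of the collected state'' invariant alongside the main induction, tracking that $\cons$ only grows and $q$ only accumulates relative to the position of $P$ in the original derivation, and I would make the use of polarizability explicit in the permutation lemma rather than leaving it tacit, since it is what guarantees the eager and lazy obligations on a single channel never alternate and the commuting conversions never get stuck. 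Termination of the inversion phase --- needed both for the measure and to know a stable sequent is reached --- follows from contractiveness of type definitions together with the fact that $\tassumeop$, $\tassertop$, $\getpot$, and $\paypot$ each strictly shrink the type.
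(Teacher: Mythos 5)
Your proposal is correct in both directions and rests on the same two pillars as the paper's argument --- monotonicity of constraints and potential, and polarizability ruling out eager/lazy alternation on a single channel --- but it organizes the hard direction~(1) differently. The paper proves completeness by structural induction on the implicit derivation: each case applies the induction hypothesis to get a forcing derivation of the premise, normalizes it to a stable sequent by inverting the eager forcing rules (its ``Move Left'' and ``Stable Sequent'' lemmas), applies the focused structural rule, and then re-applies the invertible rules; the lazy cases ($\tassertop R$, $\tassumeop L$) are handled by separate ``Lazy'' lemmas, proved by induction on the \emph{forcing} derivation, which commute a provable $\tassert{\phi}$ (or a payment) into an existing forcing derivation. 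You instead follow the classical focusing-completeness recipe: the permutation/invertibility lemmas live in the \emph{implicit} system, and the main induction simulates the reconstruction algorithm on the state of the forcing sequent. Both are viable; the paper's version keeps the outer induction on a single syntactic object and needs no auxiliary termination measure, while yours makes the algorithmic reading explicit and concentrates the monotonicity argument at the one point where a pending assert or pay is discharged under focus.

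Two details to repair. First, your lexicographic measure is ordered the wrong way: applying a structural rule shrinks the implicit derivation but repopulates $\W$ and the offered type with fresh (after unfolding, possibly larger) types, so with type sizes as the first component that step is not a descent. Put the derivation size first and let the inversion measure govern only the inner phase, or induct on the implicit derivation outright and argue termination of inversion separately. Second, the forward $\fwd{x}{y}$ does not fit the ``one structural action on one channel'' schema: it must focus the offered channel and then the used channel in sequence and discharge pending obligations on both; the paper treats it by an inner induction on the structure of the shared type.
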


\begin{proof}
  Proof of 1. follows by induction on the implicit typing judgment.
  Proof of 2. follows by induction on the forcing judgment. See
  Appendix~\ref{app:forcing} for the full rules and proof.
\end{proof}


If a process is well-typed in the implicit system, it is well-typed
using the forcing calculus. Reading off the process expression from the
typing derivation in the forcing calculus results in the corresponding
explicit program. Thus, if a reconstruction is possible, the forcing calculus
will find it! We use this calculus to reconstruct the explicit program, which
is then typechecked using the explicit typing system.

\section{Implementation and Evaluation}

We have implemented a protoype for the language in Standard ML (about
6500 lines of code) available open-source that closely adheres to the
theory presented here.  Command line options determine whether to use
explicit or implicit syntax, and the result of reconstruction can be
displayed if desired.  We use a straightforward implementation of
Cooper's algorithm~\cite{cooper1972theorem} to decide Presburger
arithmetic with two small but significant optimizations.  One takes
advantage of the fact that we are working over natural numbers rather
than integers which bounds possible solutions from below, and the
other is to eliminate constraints of the form $x = e$ by substituting
$e$ for $x$ in order to reduce the number of variables.  For the implementation of
type equality we assign internal names to subexpressions parameterized
over their free index variables in order to facilitate efficient loop
detection.
After checking the validity of types, the implementation reconstructs
missing branches, then constraints, and finally work.  Verifying
constraints is postponed to the final pass of type-checking the
reconstructed process expression.

\begin{table}[t]
  \centering
  \begin{tabular}{l r r r r r}
  \textbf{Module} & \textbf{iLOC} & \textbf{eLOC} & \textbf{\#Defs} & \textbf{R (ms)} & \textbf{T (ms)} \\
  \toprule
  arithmetic & 69 & 143 & 8 & 0.353 & 1.325 \\
  integers & 90 & 114 & 8 & 0.200 & 1.074 \\
  linlam & 54 & 67 & 6 & 0.734 & 4.003 \\
  list & 244 & 441 & 29 & 1.534 & 3.419 \\
  primes & 90 & 118 & 8 & 0.196 & 1.646 \\
  segments & 48 & 65 & 9 & 0.239 & 0.195 \\
  ternary & 156 & 235 & 16 & 0.550 & 1.967 \\
  theorems & 79 & 141 & 16 & 0.361 & 0.894 \\
  tries & 147 & 308 & 9 & 1.113 & 5.283 \\
  \midrule
  \textbf{Total} & \textbf{977} & \textbf{1632} & \textbf{109} & \textbf{5.280} & \textbf{19.806} \\
  \bottomrule
  \end{tabular}
  \caption{Case Studies}
  \label{tab:case_study}
  \vspace{-2em}
  \end{table}

We have a variety of 9 case studies implemented, totaling about 1000 lines
of code type-checked in our implementation. Table~\ref{tab:case_study}
describes the results: iLOC describes the lines of source code in
implicit syntax, eLOC describes the lines of code after reconstruction
(which inserts implicit constructs), \#Defs shows the number of process
definitions, R (ms) and T (ms) show the reconstruction and type-checking
time in milliseconds respectively. The experiments
were run on an Intel Core i5 2.7 GHz processor with 16 GB 1867 MHz DDR3
memory. We briefly describe each case study.
\begin{enumerate}[leftmargin=*]
\item \textbf{arithmetic}: natural numbers in unary and binary
  representation indexed by their value and processes implementing
  standard arithmetic operations.

\item \textbf{integers}: an integer counter represented using two indices
  $x$ and $y$ with value $x-y$.

\item \textbf{linlam}: expressions in the linear $\lambda$-calculus
  indexed by their size with an \emph{eval} process to evaluate them.

\item \textbf{list}: natural number lists indexed by their size,
  and their standard operations: \emph{append, reverse, map, fold}, etc.

\item \textbf{primes}: implementation of the sieve of Eratosthenes to
  classify numbers as prime or composite.

\item \textbf{segments}: type $\m{seg}[n] =
  \forall k. \m{list}[k] \lolli \m{list}[n+k]$ representing partial lists
  with constant-work append operation.

\item \textbf{ternary}: natural numbers represented in balanced
  ternary form with digits $0, 1, -1$, indexed by their value, and
  some standard operations on them.

\item \textbf{theorems}: processes representing (circular~\cite{Derakhshan19corr})
  proofs of simple theorems such as $n(k+1) = nk+n$.

\item \textbf{tries}: a trie data structure to store multisets of
  binary numbers, with constant amortized work insertion and deletion
  verified with ergometric types.
\end{enumerate}

We give details for some of these examples in Appendix~\ref{app:examples}.

\paragraph{Queue as Two Lists}
The running $\queue{A}[n]$ example has a linear insertion cost,
since the element travels to the end of the queue where it is inserted.
However, a more efficient implementation of a queue
using two lists (or stacks)~\cite{Das18RAST} has a
constant amortized cost for insertion and deletion. Elements are
inserted into an \emph{input list}, and removed from an \emph{output list}. If
the output list is empty at removal, the input list is reversed and
made the output list. The input list stores extra potential (4 units)
which is consumed during reversal. Since ergometric types
support amortized analysis, we obtain the constant cost type

\begin{sill}
  $\queue{A}[n] = \echoice{$\=$\mb{ins} : \textcolor{red}{\getpot^{6}}
  (A \lolli \queue{A}[n+1]),$\\
  \>\hspace{-3em}$\mb{del} : \textcolor{red}{\getpot^{4}} \ichoice{$\=$\mb{none} :
  \tassert{n=0} \one,$\\
  \>\>$\mb{some} : \tassert{n > 0} A \tensor \queue{A}[n-1]}}$
\end{sill}

\paragraph{Binary Numbers}
We can represent binary numbers indexed by their value as sequences
of bits $\mb{b0}$ and $\mb{b1}$ followed by $\mb{e}$ to indicate the end of the
sequence.

\begin{sill}
  $\m{bin}[n] = \ichoice{$ \= $\mb{b0} : \tassert{n > 0} \texists{k} \tassert{n = 2*k} \m{bin}[k],$ \\
    \> $\mb{b1} : \tassert{n > 0} \texists{k} \tassert{n = 2*k+1} \m{bin}[k],$ \\
    \> $\mb{e} : \tassert{n = 0} \one\,}$
\end{sill}

A binary number on outputting $\mb{b0}$ (or $\mb{b1}$) must send a proof
of $n > 0$ (no leading zeros) and a witness $k$ such that
$n = 2k$ (resp. $n = 2k+1$) before continuing with $\m{bin}[k]$.
While outputting $\mb{e}$, it must send
a proof that $n = 0$ and then terminate, as described in the type. Moreover, since we
reconstruct impossible branches, when a programmer implements
\emph{a predecessor} process declared as

\begin{tabbing}
  $(m : \m{bin}[n+1]) \vdash \mi{pred}[n] :: (n : \m{bin}[n])$
\end{tabbing}

they can skip the impossible case of label $\mb{e}$ (since $n+1 \neq 0$).

\paragraph{Linear $\lambda$-calculus}
We demonstrate an implementation of the (untyped) linear $\lambda$-calculus
in which the index objects track the size of the expression. 

\begin{sill}
  $\m{exp}[n] = \ichoice{$\=$\mb{lam} : \tassert{n > 0}
    \tforall{n_1} \m{exp}[n_1] \lolli \m{exp}[n_1+n-1],$\\
    \hspace{2em}$\mb{app} : \texists{n_1} \texists{n_2} \tassert{n = n_1+n_2+1} \m{exp}[n_1] \tensor \m{exp}[n_2]}$
\end{sill}

An expression is either a $\lambda$ (label $\mb{lam}$) or an application
(label $\mb{app}$). In case of $\mb{lam}$, it expects a number $n_1$
and an argument of size $n_1$ and
then behaves like the body of the $\lambda$-abstraction of size $n_1+n-1$.
In case of $\mb{app}$, it will send $n_1$ and $n_2$ such that $n = n_1
+n_2+1$, then an expression
of size $n_1$ and then behaves as an expression of size $n_2$.

A value can only be a $\lambda$ expression

\begin{sill}
$\m{val}[n] = \ichoice{$\=$\mb{lam} : \tassert{n > 0}
  \tforall{n_1} \m{exp}[n_1] \lolli \m{exp}[n_1+n-1]}$
\end{sill}
so the $\mb{app}$ label is not permitted.
Type checking verifies that that the result of evaluating
a linear $\lambda$-term is no larger than the original term.
\begin{sill}
$(e : \m{exp}[n]) \vdash \mi{eval}[n] :: (v : \texists{k} \tassert{k \leq n} \m{val}[k])$
\end{sill}

\section{Further Related Work}\label{sec:related}


Refinement types were introduced to allow specification and verification
of recursively defined subtypes of user-defined types
\cite{Xi99popl,Freeman91Refinement}, but have since been applied for
complexity analysis~\cite{SynthPLDI19,CraryPOPL00}. \citet{LagoLDT11}
designed a system of linear dependent types for the $\lambda$-calculus
with higher-order recursion and use it for complexity analysis~\cite{LagoGeometry13}.
Refinement~\cite{LiqCost20} and dependent types~\cite{DanielssonLazy08} have
also been employed to reason about efficiency of lazy functional programs.
Refinement type and effect systems have been proposed for incremental
computational complexity~\cite{Cicek15CostIt} and relational cost analysis of
functional~\cite{EzgiRelCost} and functional-imperative programs~\cite{QuFuncImpRelCost}.
Automatic techniques for complexity analysis of sequential~\cite{HoffmannW15}
and parallel programs~\cite{HoffmannESOP15} that do not rely on refinements
have also been studied. In contrast to these articles
that use nominal types and apply to functional programs, the structural
type system of session types poses additional theoretical and practical
challenges for deciding type equality, type checking, and reconstruction.

Label-dependent session types~\cite{ThiemannLDST} use a
limited form of dependent types where values can depend on labels
drawn from a finite set. They use this
to encode general binary session types and also extend the types
with primitive recursion over natural numbers,
although unlike our work, they do not support general recursive types.
\citet{ToninhoFOSSACS18} propose a dependent type theory combining
functions and session types through a contextual monad allowing
processes to depend on functions and vice-versa.
Unlike our equality algorithm that involves no type-level
computation, they rely on term (or process) equality to
define type equality.
\citet{ToninhoPPDP11} develop an interpretation of linear
type theory as dependent session types for a term passing extension
of the $\pi$-calculus to express interface contracts
and proof-carrying certification. However, they do not discuss a type
equality algorithm nor provide an implementation. 
\citet{WuX17DST} propose a dependent session type system of
DML style based on ATS~\cite{Xi98appliedtype} formalizing type equality
in terms of subtyping and regular constraint relations.
In contrast to our refinement layer, none of these dependent type
systems are applied for complexity analysis.

LiquidPi~\cite{Griffith13LiqPi} applies the idea of refinements
to session types to describe and validate safety properties of
distributed systems. They also present an algorithm for
inferring these refinements when they are restricted to a finite
set of templates. However, they do not specifically explore the
fragment of arithmetic refinements, nor apply them to study
resource analysis. Linearly refined session
types~\cite{Baltazar12LRST} extend the $\pi$-calculus with capabilities
from a fragment of multiplicative linear logic. These capabilities
encode authorization logic enabling fine-grained specifications,
i.e., a process can take an action only if it contains certain
capabilities. \citet{Franco14LRST} implement these linearly refined
session types in a language called SePi. In this work, we explore
arithmetic refinements that are more general than a multiset of
uninterpreted formulae~\cite{Baltazar12LRST}. \citet{Bocchi19AsyncTimedMPST}
present asynchronous timed session types to model timed protocols,
ensuring processes perform actions in the time frame prescribed
by their protocol. \citet{Zhou19Fluid} refine base types with arithmetic
propositions~\cite{ZhouThesis}
in the context of multiparty session types without recursive types.
In this restricted setting, subtyping and therefore type equality
is decidable and much simpler than in our setting.
Finally, session types with limited arithmetic refinements
(only base types could be refined) have been proposed for the purpose of
runtime monitoring~\cite{Gommerstadt18esop,Gommerstadt19phd},
which is complementary to our uses for static verification.
They have also been proposed to capture work~\cite{Das18RAST,Das19Nomos}
and parallel time~\cite{Das18Temporal}, but parameterization
over index objects was left to an informal meta-level and not
part of the object language.  Consequently, these languages
contain neither constraints nor quantifiers, and the metatheory
of type equality, type checking, and reconstruction in the presence
of index variables was not developed.

\section{Conclusion}

This paper explored the metatheory of session types extended with
arithmetic refinements. The type system was enhanced with quantifiers
and type constraints and applied to verify sequential complexity
bounds (characterizing the total work) of session-typed programs.

In the future we plan to pursue several natural generalizations.  In
multiple examples we have noted that even nonlinear arithmetic
constraints that arise have simple proofs, despite their general
undecidability, so we want to develop a heuristic
nonlinear solver.  Secondly, much of the theory in this paper is modular 
relying on a few simple properties of quantified linear arithmetic and
could easily be generalized to other domains such as quantifier-free
index domains with SMT solvers, arbitrary integers, modular arithmetic, and
fractional potentials. We would also like to generalize our
approach to a mixed linear/nonlinear language~\cite{Benton94csl} or
all the way to adjoint session
types~\cite{Pfenning15fossacs,Pruiksma19places}.


We also plan to explore automatic inference of potential
annotations. Currently, programmers have to compute work bounds,
express them in the type, and let the type-checker verify them. With
inference some of this work may be automated, although the tradeoff
between automation and precision of error messages will have to be
carefully weighed.
%
Finally, prior work has explored a temporal linear type system for
parallel complexity analysis~\cite{Das18Temporal} and we would like to
explore if similar type-checking and reconstruction algorithms can be
devised.  However, its proof-theoretic properties are not as uniform
as those for quantifiers, constraints, and ergometric types.



\bibliography{refs}

\appendix

\clearpage

\section{Further Examples}
\label{app:examples}
\lstset{basicstyle=\ttfamily\small, keepspaces=true, columns=fullflexible}

We present several different kinds of example from varying domains
illustrating different features of the type system and algorithms.
To transition from abstract syntax to the concrete syntax from the
implementation, we use Table~\ref{tab:syntax} as a guide.
In addition, types and processes are declared and defined as follows.
\begin{lstlisting}
type V{n} = A
decl f{n1}{n2}: (x1 : A1) ... (xn : An) |{q}- (x : A)
proc x <- f{n1}{n2} <- x1 ... xn = P
\end{lstlisting}
The first line shows a type definition of $V$ with one index $n$
defined with type expression $A$. The second line shows a process declaration:
process $f$ with two indices $n_1$ and $n_2$, channels $x_1 \ldots
x_n$ in its context with types $A_1 \ldots A_n$ respectively,
storing potential $q$, and offering along channel $x$ of type $A$.
The last line shows the process definition of $f$ offering channel
$x$ and using channels $x_1 \ldots x_n$ with $P$ being a process
expression.

\begin{table}[t]
  \centering
  \begin{tabular}{l l}
  \textbf{Abstract Syntax} & \textbf{Concrete Syntax} \\
  \toprule
  $\ichoice{l_1 : A_1, l_2 : A_2}$ & \verb|+{l1 : A1, l2 : A2}| \\
  $\echoice{l_1 : A_1, l_2 : A_2}$ & \verb|&{l1 : A1, l2 : A2}| \\
  $A \tensor B$ & \verb|A * B| \\
  $A \lolli B$ & \verb|A -o B| \\
  $\one$ & \verb|1| \\
  $\texists{n} A$ & \verb|?n. A| \\
  $\tforall{n} A$ & \verb|!n. A| \\
  $\tassert{n = 0} A$ & \verb|?{n = 0}. A| \\
  $\tassume{n = 0} A$ & \verb|!{n = 0}. A| \\
  $\tpaypot{A}{r}$ & \verb!|{r}> A! \\
  $\tgetpot{A}{r}$ & \verb!<{r}| A! \\
  $V[n_1,n_2]$ & \verb|V{n1}{n2}| \\
  $\esendl{x}{k}$ & \verb|x.k| \\
  $\ecase{x}{l}{P}$ & \verb!case x (l => P)! \\
  $\esendch{x}{w}$ & \verb|send x w| \\
  $\erecvch{x}{y}$ & \verb|y <- recv x| \\
  $\eclose{x}$ & \verb|close x| \\
  $\ewait{x}$ & \verb|wait x| \\
  $\esendn{x}{e}$ & \verb|send x {e}| \\
  $\erecvn{x}{n}$ & \verb|{n} <- recv x| \\
  $\eassert{x}{n = 0}$ & \verb|assert x {n = 0}| \\
  $\eassume{x}{n = 0}$ & \verb|assume x {n = 0}| \\
  $\epay{x}{r}$ & \verb|pay x {r}| \\
  $\eget{x}{r}$ & \verb|get x {r}| \\
  \bottomrule
  \end{tabular}
  \caption{Abstract and Corresponding Concrete Syntax}
  \label{tab:syntax}
  \vspace{-2em}
  \end{table}

\subsection{Lists with Potential}

The type $\m{list}[n,p]$ is the type of lists of length $n$ where
each element carries potential $p$.  Since we do not have polymorphism,
we use elements of type $\m{nat}$ to stand in for an arbitrary type.
We start with the basic, unindexed version
\begin{lstlisting}
type nat = +{zero : 1, succ : nat}

type list = +{ cons : nat * list,
               nil : 1 }
\end{lstlisting}
$\m{list}$ is a purely positive type, so the provider of a list
just sends messages.  For example, ignoring constraints and
potentials for now, the list $3,4,5$ would be the following
sequence of messages:
\begin{lstlisting}
cons, a3, cons, a4, cons, a5, nil, close
\end{lstlisting}
Here, $a_3$, $a_4$, $a_5$ are \emph{channels} along which
representations of the numbers $3$, $4$, and $5$ are sent,
respectively.

Next we index lists by their length $n$ and assign a uniform potential
$p$ to every element.  This potential is transfered to the client of
the list before each element.
\begin{lstlisting}
type list{n}{p} = 
+{ cons : ?{n > 0}. |{p}> nat * list{n-1}{p},
   nil : ?{n = 0}. 1 }
\end{lstlisting}
Now our example list where each element has potential 2, provided
along an $l : \m{list}[3,2]$ would send the following messages, if all
implicit information were actually transmitted.
\begin{lstlisting}
cons, assert {3 > 0}, pay 2, a3,
cons, assert {2 > 0}, pay 2, a4,
cons, assert {1 > 0}, pay 2, a5,
nil, assert {0 = 0}, close
\end{lstlisting}

\paragraph{Nil and Cons Processes}
Next we would like to program a \emph{process} $\m{nil}$ that sends
the messages for an empty list.  $\m{nil}$ does not have to send any
potential, according to the type, but it must send two messages:
$\mb{nil}$ and $\mi{close}$.  Since our cost model assigns unit cost
to every send operation, the $\m{nil}$ process must carry a potential
of 2.  Nevertheless, it can be at the end of a list of any potential
$p$.
\begin{lstlisting}
decl nil{p} : . |{2}- (l : list{0}{p})

proc l <- nil{p} <- = l.nil ; close l
\end{lstlisting}
Reconstruction will add the assertion of \verb!0 = 0! to this implicit
code.  It will also add the necessary work before every send
operation, according to our cost model.  We will show the
reconstruction only for this example.
\begin{lstlisting}
proc l <- nil{p} <-  = 
  work ;
  l.nil ;
  assert l {0 = 0} ;
  work ;
  close l
\end{lstlisting}

A $\m{cons}$ \emph{process} should take an element $x$ and a list
$t$ and provide the list that sends $\mb{cons}$, then $x$, and
then behaves like $t$.  This time, let's first examine the code
before the type:
\begin{lstlisting}
proc l <- cons{n}{p} <- x t =
   l.cons ;
   send l x ;
   l <- t
\end{lstlisting}
Regarding typing, we see that if $t$ is a list of length $n$,
then $l$ should be a list of length $n+1$.  How much potential
does $\m{cons}$ need?  We need 2 units to send $\mb{cons}$
and the element $x$, but we also need $p$ units because we
are constructing a list where each element has potential $p$.
So, overall, $\m{cons}$ requires potential $p+2$.  Putting these
together, we get
\begin{lstlisting}
decl cons{n}{p} :
(x:nat) (t : list{n}{p}) |{p+2}- (l : list{n+1}{p})
\end{lstlisting}
If we make a mistake, for example, state the potential as
$p+1$ we get an error message:
\begin{lstlisting}
error:insufficient potential: true |/= p+1-1 >= 1
proc l <- cons{n}{p} <- x t =
l.cons ; send l x ; l <- t
         ~~~~~~~~ 
\end{lstlisting}
which pinpoints the source of the error: we have insufficient
potential to send the element $x$.

\paragraph{Appending Two Lists}
For the $\m{append}$ process, let's start again with the code.  We
receive the lists along $l_1$ and $l_2$ and send the result of
appending them along $l$.  We branch on $l_1$.  If it is the label
$\mb{cons}$, we receive the element $x$, then we send on $\mb{cons}$
and $x$ along $l$ and recurse.  When recursing, the length of $l_1$
(which is $n$) is reduced by one, while the length of $l_2$ (which is
$k$) stays the same.  When $l_1$ is empty (we receive
$\mb{nil}$), we wait for $l_1$ to finish and then forward $l_2$
to $l$.
\begin{lstlisting}
proc l <- append{n}{k}{p} <- l1 l2 =
  case l1
    ( cons => x <- recv l1 ;
              l.cons ; send l x ;
              l <- append{n-1}{k}{p} <- l1 l2
    | nil => wait l1 ; l <- l2 )
\end{lstlisting}
Considering the parallelism inherent in this process, we see that it
implements a pipeline from $l_1$ to $l$ with a constant delay.  The
list $l_2$ can be computed in parallel with this pipeline and
eventually is connected directly to the end of $l_1$.

It remains to reason about the potential.  We need two units of
potential to send $\mb{cons}$ and $x$, and this for each element of
$l_1$.  Forwarding as needed in the second branch is cost-free, so no
additional potential is needed.  Therefore we obtain the type
\begin{lstlisting}
decl append{n}{k}{p} :
(l1 : list{n}{p+2}) (l2 : list{k}{p}) |-
      (l : list{n+k}{p})
\end{lstlisting}
that is, each element of $l_1$ must have a potential $p+2$ and each
element of $l_2$ only potential $p$.  Also, the length of the output
list is $n+k$.

If we want a more symmetric form of $\m{append}$ where all three lists
carry potential $p$ per element, we can ``pre-pay'' the cost of
sending the $\mb{cons}$ and $x$ messages for each element of $l_1$.
Since $l_1$ has length $n$, this means the $\m{append}$ process
requires $2n$ units of potential.
\begin{lstlisting}
decl append{n}{k}{p} :
(l1 : list{n}{p}) (l2 : list{k}{p}) |{2*n}-
      (l : list{n+k}{p})
\end{lstlisting}
Note that this type can be assigned to \emph{exactly the same program}
as the first one: it is our choice if we want to require that $l_1$
carry the potential or that the process invoking append pre-pay the
total cost when invoking $\m{append}$.

\paragraph{Reversing a List}
The process $\m{rev}$ for reversing a list is quite similar to
$\m{append}$.  It uses an accumulator $a$ to which it sends all the
elements from the incoming list $l$.  When $l$ is empty, the reversed
list $r$ is just the accumulator.  The work analysis is analogous
to $\m{append}$, so we only show the result.
\begin{lstlisting}
decl rev{n}{k}{p} :
(l : list{n}{p+2}) (a : list{k}{p}) |-
      (r : list{n+k}{p})

proc r <- rev{n}{k}{p} <- l a =
  case l ( cons => x <- recv l ;
                   a' <- cons{k}{p} <- x a ;
                   r <- rev{n-1}{k+1}{p} <- l a'
         | nil => wait l ; r <- a )
\end{lstlisting}
Even though similar in work to append, its concurrent behavior is
quite different.  I cannot send an element along the output $r$ until
the whole input list $l$ has been processed.

To just reverse a list we have to create an empty accumulator
to start, which requires 2 units of potential, just once for
the whole list.
\begin{lstlisting}
decl reverse{n}{p} : (l : list{n}{p+2}) |{2}-
      (r : list{n}{p})
proc r <- reverse{n}{p} <- l =
  a <- nil{p} <- ;
  r <- rev{n}{0}{p} <- l a
\end{lstlisting}
Again, we could assign a different potential if we would be willing to
prepay for operations instead of expecting the necessary potential to
be stored with the elements of the input list $l$.

\paragraph{Recharging Potential}
We can also ``recharge'' the potential of a list by adding $2$ units
to every element.  For a list of length $n$, this requires up-front
potential of $4n+2$: a portion $2n$ goes to boost the potential of
each element, another portion $2n$ goes to actually send each
$\mb{cons}$ label and element $x$ from the input list, and $2$ units
go to sending $\mb{nil}$ and $\mi{close}$.
\begin{lstlisting}
decl charge2{n}{p} :
    (k : list{n}{p}) |{4*n+2}- (l : list{n}{p+2})

proc l <- charge2{n}{p} <- k =
  case k ( cons => x <- recv k ;
                   l.cons ; send l x ;
                   l <- charge2{n-1}{p} <- k
         | nil => wait k ; l.nil ; close l )
\end{lstlisting}
While this requires a lot of work, its parallel complexity is good
since it is a pipeline with a constant delay between input and output.

It would also be correct to recharge every element of the list with $q$ units
of potential.
\begin{lstlisting}
decl charge2{n}{p}{q} :
    (k : list{n}{p}) |{(q+2)*n+2}- (l : list{n}{p+q})
\end{lstlisting}
This, unfortunately, requires the nonlinear arithmetic expression
$(q+2)n+2$.  While a simple solver for polynomial constraint could
handle this, we currently reject this as nonlinear.

\paragraph{Map}
In a functional setting, mapping a function $f$ over a list of length
$n$ requires $n$ uses of $f$ and is therefore not linear.  However,
reuse can be replaced by recursion.  In this
example, a \emph{mapper} from type $A$ to type $B$ is a process that
provides the choice between two labels, $\mb{next}$ and $\mb{done}$.
When receiving $\mb{next}$ it then receives an element of type $A$,
responds with an element of type $B$ and recurses to wait for the next
label.  If it receives $\mb{done}$, it terminates.  We have
\[
  \m{mapper}_{AB} = \echoice{\mb{next} : A \lolli B \tensor \m{mapper}_{AB},
    \mb{done} : \one}
\]
Here, we use $\m{nat}$ for both $A$ and $B$.  We further assume for
simplicity that the mapper has enough internal potential so it does
not require any potential from the $\m{map}$ process.  We then have
\begin{lstlisting}
proc l <- map{n}{p} <- k m =
  case k ( cons => x <- recv k ;
                   m.next ; send m x ;
                   y <- recv m ;
                   l.cons ; send l y ;
                   l <- map{n-1}{p} <- k m
         | nil => wait k ;
                  m.done ; wait m ;
                  l.nil ; close l )
\end{lstlisting}
We see there that for each element, $\m{map}$ needs to send 4 messages
($\mb{next}$, $x$, $\m{cons}$, and $y$) so the input list should have
potential $p+4$. There is also a constant overhead of $3$ for the
empty list ($2$ for $\mb{nil}$ and $\mi{close}$, and $1$ to notify the
mapper that we are done).
\begin{lstlisting}
decl map{n}{p} :
(k : list{n}{p+4}) (m:mapper) |{3}- (l : list{n}{p})
\end{lstlisting}
Folding a list can be done in a similar fashion.

\paragraph{Filter}
Filtering elements from an input list is interesting because we cannot
statically predict the length of the output list.  So for the first
time in the list examples we require a quantifier.  However, we know
that the output list is not longer than the input list so we define a
new type
$\m{bdd\_list}[n,p] = \texists{m} \tassert{m \leq n} \m{list}[m,p]$.
\begin{lstlisting}
type bdd_list{n}{p} = ?m. ?{m <= n}. list{m}{p}
\end{lstlisting}
Even in implicit form, this type requires communication of the witness
$m$.  While not strictly needed, it is helpful to define bounded
versions of $\m{nil}$ and $\m{cons}$ which have the same ergometric
properties (definitions elided).  More interesting is the cost-free
$\m{bdd\_resize}$ which we can use to inform the type checker that a
list bounded by $n$ is also bounded by $n+1$: the type checker just
has to verify that $m \leq n$ implies $m \leq n+1$.
\begin{lstlisting}
decl bdd_nil{p} : . |{2}- (l : bdd_list{0}{p})
decl bdd_cons{n}{p} :
  (x : nat) (k : bdd_list{n}{p}) |{p+2}-
     (l : bdd_list{n+1}{p})

decl bdd_resize{n}{p} :
  (k : bdd_list{n}{p}) |- (l : bdd_list{n+1}{p})

proc l <- bdd_resize{n}{p} <- k =
  {m} <- recv k ;
  send l {m} ;
  l <- k
\end{lstlisting}
A \emph{selector} responds $\mb{false}$ for elements to be excluded
from the result, and it responds $\mb{true}$ and then returns the
element itself so it can be included in the output list.  Either way,
it recurses so the next element can be tested.
\begin{tabbing}
  $\m{selector}_A = \echoice{$\=$\mb{next} : A \lolli
    \ichoice{$\=$\mb{false} : \m{selector}_A,$\\
  \>\>$\mb{true} : A \tensor \m{selector}_A}$\\
  \>$\mb{done} : \one}$
\end{tabbing}
Concretely (using $\m{nat}$ for $A$):
\begin{lstlisting}
type selector =
&{ next : nat -o +{ false : selector,
                    true : nat * selector },
   done : 1 }
\end{lstlisting}
Then the filter process has the type and definition
as described in Figure~\ref{fig:filter_proc}.
\begin{figure*}
\begin{lstlisting}
decl filter{n}{p} : (s : selector) (k : list{n}{p+4}) |{3}- (l : bdd_list{n}{p})

proc l <- filter{n}{p} <- s k =
  case k ( cons => x <- recv k ;
                   s.next ; send s x ;
                   case s ( false => l' <- filter{n-1}{p} <- s k ;
                                     l <- bdd_resize{n-1}{p} <- l'
                          | true => x' <- recv s ;
                                    l' <- filter{n-1}{p} <- s k ;
                                    l <- bdd_cons{n-1}{p} <- x' l' )
          | nil => wait k ; s.done ; wait s ;
                   l <- bdd_nil{p} <- )
\end{lstlisting}
\caption{Type and Definition of \emph{filter} process}
\label{fig:filter_proc}
\end{figure*}

\paragraph{A Queue as Two Lists}
Finally, we return to the queue we used earlier as a running example
An efficient functional implementation uses two lists to implement the
queue.  Actually, its efficiency is debatable if used
non-linearly~\cite{okasaki_1995}, but here the type checker establishes
constant-time amortized cost for enqueuing and dequeuing messages.

The algorithm is straightforward: the queue process maintains two
lists, $\mi{in}$ and $\mi{out}$.  When elements are enqueued, they are
put into the $\mi{in}$ queue, where each element has a potential of
$4$.  When elements are dequeued, we take them from the $\mi{out}$
list where each element has a potential of $2$, which is enough to
send back the element to the client.  When the output list is empty
when a dequeue request is received, we first reverse the input list
and make it the output list.  This reversal costs $2$ units of
potential for each element, but we have stored $4$ so this is
sufficient.

The amortized cost of an enqueue is therefore $6$: $2$ to construct
the new element of the input list, plus $4$ because each element of
the input list must have potential $4$ to account for its later
reversal.  When calculating the cost of a dequeue we see it should
be $4$, leading us to the type
\begin{lstlisting}
type queue{n} = &{ enq : <{6}| nat -o queue{n+1},
                   deq : <{4}| deq_reply{n} }
type deq_reply{n} =
  +{ none : ?{n = 0}. 1,
     some : ?{n > 0}. nat * queue{n-1} }
\end{lstlisting}
Note that uses of $\tgetpot{}{6}$ and $\tgetpot{}{4}$ which mean that the
client has to transfer this potential, while in lists we used
$\tpaypot{}{p}$ so the list provider payed the potential $p$.

We have two processes definitions, one with an output list,
and one where we have noted that the output list is empty.
Figure~\ref{fig:queue2} shows the declarations and definitions.
\begin{figure*}
\begin{lstlisting}
decl queue_lists{n1}{n2} : (in : list{n1}{4}) (out : list{n2}{2}) |- (q : queue{n1+n2})
decl queue_rev{n1} : (in : list{n1}{4}) |{4}- (q : deq_reply{n1})

proc q <- queue_lists{n1}{n2} <- in out =
  case q ( enq => x <- recv q ;
                  in' <- cons{n1}{4} <- x in ;
                  q <- queue_lists{n1+1}{n2} <- in' out
         | deq => case out ( cons => x <- recv out ;
                                     q.some ; send q x ;
                                     q <- queue_lists{n1}{n2-1} <- in out
                           | nil => wait out ;
                                    q <- queue_rev{n1} <- in ) )

proc q <- queue_rev{n1} <- in =
  out <- reverse{n1}{2} <- in ;
  case out ( cons => x <- recv out ;
                     q.some ; send q x ;
                     in0 <- nil{4} <- ;
                     q <- queue_lists{0}{n1-1} <- in0 out
           | nil => wait out ;
                    q.none ; close q )
\end{lstlisting}
\caption{Queue as 2 Lists}
\label{fig:queue2}
\end{figure*}

To create a new queue from scratch we need $4$ units of potential in
order to create two empty lists.
\begin{lstlisting}
decl queue_new : . |{4}- (q : queue{0})
proc q <- queue_new <- =
  in0 <- nil{4} <- ;
  out0 <- nil{2} <- ;
  q <- queue_lists{0}{0} <- in0 out0
\end{lstlisting}

\subsection{Linear \texorpdfstring{$\lambda$}{Lambda}-Calculus}

We demonstrate an implementation of the (untyped) linear $\lambda$-calculus,
including evaluation, in which the index objects track the size of the
expression.  Type-checking verifies that that the result of evaluating
a linear $\lambda$-term is no larger than the original term.  Our
representation uses linear higher-order abstract syntax.

Ignoring issues of size for the moment, $\lambda$-calculus expressions
are represented with the type $\m{exp}$, values are of type $\m{val}$.
\begin{sill}
  $\m{exp} = \ichoice{$ \= $\mb{lam} : \m{exp} \lolli \m{exp},$ \\
    \> $\mb{app} : \m{exp} \tensor \m{exp}\,}$ \\
  $\m{val} = \ichoice{\,\mb{lam} : \m{exp} \lolli \m{exp}\,}$
\end{sill}
An expression is a process sending either the label $\mb{lam}$ or
$\mb{app}$.  In case of $\mb{lam}$ it then expects a channel along
which it receives the argument to the function and then behaves like
the body of the $\lambda$-abstraction.  In case of $\mb{app}$ it will
send a channel along which the function part of an application will be
sent and then behaves like the argument.  While this kind of encoding
may seem strange, it works miraculously.

For example, the representation of the identity function
$\lambda x.\, x$ is
\begin{lstlisting}
decl id : . |- (e : exp)
proc e <- id <- =
  e.lam ;
  x <- recv e ;
  e <- x
\end{lstlisting}
The application of the identity function to itself:
\begin{lstlisting}
decl idid : . |- (e : exp)
proc e <- idid <- =
  i1 <- id <- ;
  i2 <- id <- ;
  e.app ; send e i1 ; e <- i2
\end{lstlisting}
Similar to the $\m{nil}$ and $\m{cons}$ processes for lists, we have
two processes for constructing expressions and values.
\begin{lstlisting}
decl apply : (e1 : exp) (e2 : exp) |- (e : exp)
proc e <- apply <- e1 e2 =
  e.app ; send e e1 ; e <- e2

decl lambda : (f : exp -o exp) |- (v : val)
proc e <- lambda <- f =
  e.lam ; e <- f
\end{lstlisting}
Evaluation of a $\lambda$-abstraction immediately returns it as a
value.  For an application $e_1\, e_2$ we first evaluate the function.
By typing, this value is a process that will send the $\mb{lam}$ label
and then expects argument expression as a message.  So we send
$e_2$.
\begin{lstlisting}
decl eval : (e : exp) |- (v : val)
proc v <- eval <- e =
  case e ( lam => v <- lambda <- e
         | app => e1 <- recv e ;       % e = e2
                  v1 <- eval <- e1 ;
                  case v1 ( lam => send v1 e ;
                                   v <- eval <- v1 ))
\end{lstlisting}
The trickiest part here is the line marked with the comment
\verb'e = e2'.  An $e$ representing $e_1\, e_2$ will first send
$e_1 : \m{exp}$ and then behave like $e_2$.  So what we send to $v_1$
two lines below in fact represents $e_2$.

To track the sizes of $\lambda$-expressions we need quantifiers.  For
example, an application $e_1\, e_2$ has size $n$ if there exist sizes
$n_1$ and $n_2$ such that $n = n_1+n_2+1$ and of $e_1$ and $e_2$ have
size $n_1$ and $n_2$, respectively.

The case of $\lambda$-abstractions is more complicated.  The issue is
that we cannot assign a fixed size (say 1) to the variable in a
$\lambda$-abstraction because we may apply it to an argument of any
size.  Instead, the body can receive an arbitrary size $n_1$ and then
an expression of size $n_1$.

These considerations lead us to the types
\begin{sill}
  $\m{exp}[n] = $\\
  $\ichoice{$ \= $\mb{lam} : \tassert{n > 0}
    \tforall{n_1} \m{exp}[n_1] \lolli \m{exp}[n_1+n-1],$\\
    \> $\mb{app} : \texists{n_1} \texists{n_2} \tassert{n = n_1+n_2+1} \m{exp}[n_1] \tensor \m{exp}[n_2] \,}$ \\
  $\m{val}[n] = $\\
  $\ichoice{$ \= $\mb{lam} : \tassert{n > 0}
    \tforall{n_1} \m{exp}[n_1] \lolli \m{exp}[n_1+n-1]\,}$
\end{sill}
The types of $\m{apply}$, $\m{lambda}$, and $\m{eval}$ change
accordingly
\begin{lstlisting}
decl apply{n1}{n2} :
(e1 : exp{n1}) (e2 : exp{n2}) |- (e : exp{n1+n2+1})

decl lambda{n2} :
(f : !n1. exp{n1} -o exp{n1+n2}) |- (v : val{n2+1})

decl eval{n} :
(e : exp{n}) |- (v : ?k. ?{k <= n}. val{k})
\end{lstlisting}
Interesting here is only that the result type of evaluation contains
an existential quantifier since we do not know the precise size of the
value---we just know it is bounded by $n$.  Side calculation shows that
every reduction that takes place during evaluation reduces the size of
the output value by 2 since both the $\lambda$-abstraction and the
application disappear. The $\mi{eval}$ process is implemented in Figure
\ref{fig:eval_linlam}.
\begin{figure*}
\begin{lstlisting}
proc v <- eval{n} <- e =
  case e
    ( lam => send v {n} ;
             v <- lambda{n-1} <- e
    | app => {n1} <- recv e ;
             {n2} <- recv e ;           % n = n1 + n2 + 1
             e1 <- recv e ;             % e1 : exp{n1}, e = e2 : exp{n2}
             v1 <- eval{n1} <- e1 ;
             {k2} <- recv v1 ;          % v1 : val{k2} for some k2 <= n1
             case v1
              ( lam => send v1 {n2} ;
                       send v1 e ;   % v1 : exp{n2+k2-1}
                       v2 <- eval{n2+k2-1} <- v1 ; % v2 : val{l} for some l <= n2+k2-1 <= n2+n1-1 = n-2
                       {l} <- recv v2 ;
                       send v {l} ; v <- v2
              )
    )
\end{lstlisting}
\caption{Evaluating expressions in linear $\lambda$-calculus}
\label{fig:eval_linlam}
\end{figure*}

\subsection{Binary Numbers}\label{subsec:binary}
We can represent binary numbers as sequences of bits $\mb{b0}$
and $\mb{b1}$ followed by $\mb{e}$ to indicate the end of the
list.  The representation function $\overline{n}$ is given by
\[
  \begin{array}{lcll}
    \overline{0} & = & \mb{e}, \mi{close} \\
    \overline{2n} & = & \mb{b0}, \overline{n} & \mbox{n > 0} \\
    \overline{2n+1} & = & \mb{b1}, \overline{n}
  \end{array}
\]
The restriction on $n > 0$ in the second line ensures uniqueness of
representation (no leading zeros).  For the operations on them it is
important that the representation be in \emph{little endian} for, that
is, the least significant bit comes first.

Ignoring indexing for now, we would represent this as the type
\begin{sill}
  $\m{bin} = \ichoice{\mb{b0} : \m{bin}, \mb{b1} : \m{bin}, \mb{e} : \one}$
\end{sill}
However, this representation does not enforce the absence of leading
zeros.  One way to enforce that is by indexing this type by
the natural number it represents.  However, we would have to be
able to express that a number is even (which is possible in Presburger
arithmetic) since $\m{even}(n) \triangleq \exists k.\, 2*k = n$.
\begin{sill}
  $\m{bin}[n] = \ichoice{\mb{b0} : \tassert{n > 0 \land \m{even}(n)} \m{bin}[n/2], \ldots}$
\end{sill}
However, the value of the remaining sequence of bits would have to be
$n/2$ and that function is neither available nor expressible.  Instead,
we use a quantifier at the type level to give a name to $n/2$:
\begin{sill}
  $\m{bin}[n] = \ichoice{$ \= $\mb{b0} : \tassert{n > 0} \texists{k} \tassert{n = 2*k} \m{bin}[k],$ \\
    \> $\mb{b1} : \texists{k} \tassert{n = 2*k+1} \m{bin}[k],$ \\
    \> $\mb{e} : \tassert{n = 0} \one\,}$
\end{sill}

\paragraph{Zero and Successor Processes}
Now we can straightforwardly express zero and the successor processes
and verify their correctness, but we pay the price of having to send
and receive $k$.  In the code below name this index variable $n'$ to
represent $\lfloor n/2\rfloor$.  We receive the bits along channel $x$
and send the bits for $n+1$ along channel $y$.  If we receive $\mb{b0}$
we just flip the bit to $\mb{b1}$, send it to the output, and forward
all the remaining bit.  If we receive $\mb{b1}$ we flip it to $\mb{b0}$
but we have to call $\m{succ}$ recursively to account for the
``carry''.  If we receive $\mb{e}$ we output the representation of
$1$.
\begin{lstlisting}
decl zero : . |- (x : bin{0})
decl succ{n} : (x : bin{n}) |- (y : bin{n+1})

proc x <- zero <- = x.e ; close x

proc y <- succ{n} <- x =
  case x ( b0 => {n'} <- recv x ;
                 y.b1 ; send y {n'} ;
                 y <- x
         | b1 => {n'} <- recv x ;
                 y.b0 ; send y {n'+1} ;
                 y <- succ{n'} <- x
         | e => y.b1 ; send y {0} ;
                y.e ; wait x ; close y )
\end{lstlisting}
As can be seen from its type, type checking verifies that, for
example, $\m{succ}$ implements the successor function.  It is
plausible that we could omit the sending a receiving of $n'$, collect
constraints in a manner analogous to the other implicit constructs,
and check their validity, making this program even more compact.
However, two problems arise.  The first is that the quality of the
error messages degrades because first we need to collect all
constraints and then solve them together and it is no longer clear
were the error might have arisen.  And secondly we can no longer
perform reconstruction in general because in
$\forall n.\, \exists k.\, \phi$ propositions of Presburger arithmetic
the Skolem function that computes $k$ given $n$ may not be
expressible.  This means we would have to add something like Hilbert's
$\epsilon$ operator or definite description $\iota$, which is an
interesting item for future work.  So far in our examples the benefits
of good error messages have outweighed the few additional messages
that are sometimes required for quantified types.

\paragraph{Domain Restrictions}
We would like to apply the predecessor process only to
strictly positive numbers so we don't have to worry about computing
$0-1$.  We can express this with an explicit constraint in process
typing, which we have omitted in the body of the paper for the sake of
brevity.  This is quite similar to standard idioms in Liquid
types~\cite{Rondon08pldi,Griffith13LiqPi}.
\begin{lstlisting}
decl pred{n|n > 0} : (x : bin{n}) |- (y : bin{n-1})
proc y <- pred{n} <- x =
  case x
    ( b0 => {n'} <- recv x ;    % 2*k-1 = 2*(k-1)+1
            y.b1 ; send y {n'-1} ;
            y <- pred{n'} <- x
    | b1 => {n'} <- recv x ;
            y <- dbl0{n'} <- x  % 2*k+1-1 = 2*k
    % no case for e
    )
\end{lstlisting}
This is an example where the case for $\mb{e}$ is impossible (since  
$n > 0$ is incompatible with $n = 0$) so we omit this branch and it is  
reconstructed as shown in Section~\ref{sec:recon}.

Here $\m{dbl0}$ is a process with type
\begin{lstlisting}
decl dbl0{n} : (x : bin{n}) |- (y : bin{2*n})
\end{lstlisting}
which is not as trivial as one might think.  In fact, if we try
\begin{lstlisting}
proc y <- dbl0{n} <- x =
  y.b0 ; send y {n} ;
\end{lstlisting}
we get the error message
\begin{lstlisting}
error:assertion not entailed: true |/= 2*n > 0
  y.b0 ; send y {n} ;
         ~~~~~~~~~~ 
\end{lstlisting}
which means that this function is not correct because we cannot prove
that $2n > 0$ because we don't know if $n > 0$, possibly introduce
leading zeros!  The correct code requires one bit of look-ahead on
the input channel $x$.  Computing $2n+1$ is much simpler.
\begin{lstlisting}
proc y <- dbl0{n} <- x =
  case x ( b0 => {n'} <- recv x ;
                 y.b0 ; send y {n} ;
                 y.b0 ; send y {n'} ;
                 y <- x
         | b1 => {n'} <- recv x ;
                 y.b0 ; send y {n} ;
                 y.b1 ; send y {n'} ;
                 y <- x
         | e => y.e ; wait x ; close y )

decl dbl1{n} : (x : bin{n}) |- (y : bin{2*n+1})
proc y <- dbl1{n} <- x =
  y.b1 ; send y {n} ;
  y <- x
\end{lstlisting}

\paragraph{Discarding and Copying Numbers}
Even though we are in a purely linear language, message sequences of
purely positive type can be dropped or duplicated by explicit
programs.  It is plausible we could build this into the language as a
derivable idiom, maybe along the lines of equality types or type
classes in languages like Standard ML or Haskell.  However, this has
not been a high priority, so here are the processes that consume
a binary number without producing output, or duplicating a binary
number.
\begin{lstlisting}
decl drop{n} : (x : bin{n}) |- (u : 1)
proc u <- drop{n} <- x =
  case x ( b0 => {n'} <- recv x ; u <- drop{n'} <- x
         | b1 => {n'} <- recv x ; u <- drop{n'} <- x
         | e => wait x ; close u )

decl dup{n} :
(x : bin{n}) |- (xx : bin{n} * bin{n} * 1)
proc xx <- dup{n} <- x =
  case x ( b0 => {n'} <- recv x ;
                 xx' <- dup{n'} <- x ;
                 x1' <- recv xx' ;
                 x2' <- recv xx' ; wait xx' ;
                 x1 <- dbl0{n'} <- x1' ; send xx x1 ; 
                 x2 <- dbl0{n'} <- x2' ; send xx x2 ;
                 close xx
         | b1 => {n'} <- recv x ;
                 xx' <- dup{n'} <- x ;
                 x1' <- recv xx' ;
                 x2' <- recv xx' ; wait xx' ;
                 x1 <- dbl1{n'} <- x1' ; send xx x1 ;
                 x2 <- dbl1{n'} <- x2' ; send xx x2 ;
                 close xx
         | e => wait x ;
                x1 <- zero <- ; send xx x1 ;
                x2 <- zero <- ; send xx x2 ;
                close xx )
\end{lstlisting}
Note that the correctness of duplication (we get two copies of the
number $n$) is expressed in the indexed type and verified by the type
checker.

\paragraph{Comparisons}
Comparison is one place where we need to discard part of a binary
number because once the shorter number has been exhausted, we need to
consume the remaining bits of the longer number.  For comparison, it
is very helpful to have a representation that enforces the absence of
leading zeros.  But how do we express the outcome and verify the
correctness?  For this purpose we define a new type, $\m{ord}[m,n]$
to represent the outcome of the comparison of the representations
of $m$ and $n$.
\begin{lstlisting}
type ord{m}{n} = +{ lt : ?{m < n}. 1,
                    eq : ?{m = n}. 1,
                    gt : ?{m > n}. 1 }
\end{lstlisting}
Then the type
\begin{lstlisting}
decl compare{m}{n} :
(x : bin{m}) (y : bin{n}) |- (o : ord{m}{n})
\end{lstlisting}
guarantees the correctness of $\m{compare}$ because it can produce
$\mb{lt}$ only if $m < n$, and similarly for $\mb{eq}$ and $\mb{gt}$.
The code is somewhat tedious, because we do not have nested pattern
matching, but we show the beginning of the process definition because
it brings up an interesting fact regarding the type checker.
\begin{lstlisting}
proc o <- compare{m}{n} <- x y =
  case x
    ( b0 => {m'} <- recv x ;
            case y
              ( b0 => {n'} <- recv y ;
                      o <- compare{m'}{n'} <- x y ;
                      ... ) )
\end{lstlisting}
For this recursive call to type-check we have to make sure that
the outcome of $\m{compare}[m,n]$ (of type $\m{ord}[m,n]$)
is the same as the outcome of $\m{compare}[m',n']$ (which is of
type $\m{ord}[m',n']$.  This requires show that
\begin{tabbing}
  $m,n,m',n' \semi $\\
  \quad$m' > 0 \land m = 2*m' \land n' > 0 \land n = 2*n'
  \semi \cdot \vdash$\\
  \qquad$\m{ord}[m,n] \equiv \m{ord}[m',n']$
\end{tabbing}
This is certainly true declaratively (according to the definition of
type bisimulation) because $2*m' < 2*n'$ iff $m' < n'$ and similarly
for equality and greater-than comparison.  However, it goes beyond
reflexivity of indexed types and the generality of the $\m{def}$ rule
in the type equality algorithm (see Section~\ref{sec:tpeq}) is needed
in this and other examples.

We have other processes, such as multiplication (which require
nonlinear constraints and utilizes the $\m{dup}$ process) as well as
conversion to and from the unary representation of numbers.
Unfortunately we cannot easily characterize the \emph{work} of many of
these operations because they depend on the number of digits in the
binary representation, which is logarithmic in its value.  This source
of nonlinearity makes it difficult to track work unless we consider
numbers of a fixed width (such as 32-bit or 64-bit numbers).  However,
then we would be working in modular arithmetic and have to overcome
the lack of $\m{div}$ and $\m{mod}$ function in Presburger arithmetic.

\subsection{Prime Sieve}
As a slightly more complex example we use an implementation prime
sieve.  It turns out to be convenient to use an implementation of
unary natural numbers, which we elide, which satisfies the
following signature.
\begin{lstlisting}
type nat{n} = +{succ : ?{n > 0}. nat{n-1},
                zero : ?{n = 0}. 1}

decl zero : . |- (x : nat{0})
decl succ{n} : (x : nat{n}) |- (y : nat{n+1})
decl drop{n} : (x : nat{n}) |- (u : 1)
decl dup{n} :
(x : nat{n}) |- (xx : nat{n} * nat{n} * 1)
\end{lstlisting}
The output of the prime sieve is a stream of bits (which we call
$\mb{prime}$ and $\mb{composite}$) of length $k$ giving the status of
the numbers $2,3,4,5,6,...,k+2$.  It would start with
$\mb{prime}, \mb{prime}, \mb{composite}, \mb{prime}, \mb{composite},
\ldots$.
\begin{lstlisting}
type stream{k} = +{prime : ?{k > 0}. stream{k-1},
                   composite : ?{k > 0}. stream{k-1},
                   end : ?{k = 0}. 1}
\end{lstlisting}
The underlying algorithmic idea is to set up a chain of filters that
mark multiples of all the prime number we produce as composite.  So
when the output stream has proclaimed $p$ numbers to be prime, there
will be $ap+b$ active processes small constants $a$ and $b$.  This
algorithm and its implementation has a good amount of parallelism
since all the filters can operate concurrently.  Moreover, we do not
need to perform any multiplication or division, we just set up some
cyclic counters along the way.

At one end of the chain is a process $\m{candidates}$ that produces
a sequence of bits $\mb{prime}$, because initially all natural
numbers starting at 2 are candidates to be primes.
\begin{lstlisting}
decl candidates{n} : (x : nat{n}) |- (s : stream{n})
proc s <- candidates{n} <- x =
  case x ( succ => s.prime ;
                   s <- candidates{n-1} <- x
         | zero => wait x ;
                   s.end ; close s )
\end{lstlisting}
The candidates are fed into a process $\m{head}$ that remains at the
other end of the stream, after all multiples of primes so far have
been marked as $\mb{composite}$.  The bit $\mb{prime}$ it sees must
indeed be a prime number, so it sets up a filter for multiples of the
current list index (which is $x : \m{nat}$).  If the number is
composite, that information is simply passed on.

Two interesting aspects of this process are that the current list
index must be duplicated to set up the filter which contains a cyclic
counter.  The other is that the sum $n+k$ of the current list index
$n$ and the length of the remaining stream $k$ must be invariant.  We
express this by passing in a redundant argument, constrained to be
equal to the sum which never changes.  This would catch errors if we
incorrectly forgot to increment the list index.
\begin{lstlisting}
decl head{k}{n}{kn|kn = k+n} :
(t : stream{k}) (x : nat{n}) |- (s : stream{k})

proc s <- head{k}{n}{kn} <- t x =
  case t
    ( prime => s.prime ; % not divisible: new prime
               z <- zero <- ;
               xx <- dup{n} <- x ;
               x1 <- recv xx ;
               x2 <- recv xx ;
               wait xx ;
               f <- filter{k-1}{n}{0}{n} <- t x1 z ;
               x' <- succ{n} <- x2 ;
               s <- head{k-1}{n+1}{kn} <- f x'
    | composite => s.composite ;
                   x' <- succ{n} <- x ;
                   s <- head{k-1}{n+1}{kn} <- t x'
    | end => wait t ;
             u <- drop{n} <- x ; wait u ;
             s.end ; close s )
\end{lstlisting}
Finally, the filter process $\m{filter}$ itself.  It implements a
cyclic counter as two natural numbers $c$ and $d$ whose sum remains
invariant.  When $c$ reaches $0$ we know we have reach a multiple of
the original $d+1$, we mark the position in the stream as
$\mb{composite}$ and reinitialize the counter $c$ with the value of
$d$ and set $d$ to $0$.  When the stream is empty (implying $k = 0$)
we have to deallocate the counters. Figure~\ref{fig:prime_sieve}
shows the process declaration and definition.
\begin{figure*}
\begin{lstlisting}
decl filter{k}{n1}{n2}{n|n = n1+n2} : (t : stream{k}) (c : nat{n1}) (d : nat{n2}) |- (s : stream{k})
proc s <- filter{k}{n1}{n2}{n} <- t c d =
  case t ( prime => case c ( succ => s.prime ; % not divisible by n
                             d' <- succ{n2} <- d ;
                             s <- filter{k-1}{n1-1}{n2+1}{n} <- t c d'
                           | zero => wait c ; % divisible by n: not prime
                             s.composite ;
                             z <- zero <- ;
                             s <- filter{k-1}{n2}{0}{n} <- t d z ) % cyclic loop
         | composite => s.composite ;  % already composite
                        case c ( succ => d' <- succ{n2} <- d ;
                                         s <- filter{k-1}{n1-1}{n2+1}{n} <- t c d'
                               | zero => wait c ;
                                 z <- zero <- ;
                                 s <- filter{k-1}{n2}{0}{n} <- t d z )
         | end => wait t ;
                  u <- drop{n1} <- c ; wait u ;
                  u <- drop{n2} <- d ; wait u ;
                  s.end ; close s )

\end{lstlisting}
\caption{Filtering prime numbers}
\label{fig:prime_sieve}
\end{figure*}

\subsection{Trie}
We illustrate the data structure of a trie to maintain multisets of
natural numbers.  Remarkably, the data structure is linear although in
one case we have to deallocate an integer.  There is a fair amount of
parallelism since consecutive requests to insert numbers into the trie
can be carried out concurrently.  We also obtain a good
characterization of the necessary work---the data structure is quite
efficient (in theoretical terms).

\paragraph{A Binary Counter}
We start with a binary counter that can receive $\mb{inc}$ messages to
increment the counter and $\mb{val}$ to retrieve the current value of
the counter.  For the latter, we use binary numbers where each bit
carries potential $p$.  When retrieving the number each bit carries
potential $0$, for simplicity.
\begin{lstlisting}
type bin{n}{p} =
  +{ b0 : ?{n > 0}. ?k. ?{n = 2*k}. |{p}> bin{k}{p},
     b1 : ?{n > 0}. ?k. ?{n = 2*k+1}. |{p}> bin{k}{p},
     e : ?{n = 0}. 1 }

type ctr{n} =
  &{ inc : <{3}| ctr{n+1},
     val : <{2}| bin{n}{0} } 
     % potential 0 here for simplicity
\end{lstlisting}
An increment of the binary counter has $3$ units of amortized cost.
$1$ of these units accounts for a future carry, and $2$ units account
for the cost of converting the counter to a sequence of bits should
the value be required.  Note that amortized analysis is necessary
here, because in the worst case a single increment message could
require $\mathrm{log}(n)$ carry bits to be propagated through the
number.

The implementation of the counter is similar to the implementation of
a queue.  It is a chain of processes $\m{bit0}$ and $\m{bit1}$, each
holding a single bit, terminated by an $\m{empty}$ process.  When
converted to a message sequence of bits of type $\m{bin}$, $\m{bit0}$
corresponds to $\mb{b0}$, $\m{bit1}$ to $\mb{b1}$, and $\m{empty}$ to
$\mb{e}$.
\begin{lstlisting}
decl empty :
. |-  (c : ctr{0})

decl bit0{n|n > 0} :
(d : ctr{n}) |{2}- (c : ctr{2*n})

decl bit1{n} :
(d : ctr{n}) |{3}- (c : ctr{2*n+1})

proc c <- empty <- =
  case c ( inc => c0 <- empty <- ;
                  c <- bit1{0} <- c0
         | val => c.e ; close c )

proc c <- bit0{n} <- d =
  case c ( inc => c <- bit1{n} <- d
         | val => c.b0 ; send c {n};
                  d.val ; c <- d )

proc c <- bit1{n} <- d =
  case c ( inc => d.inc ;
                  c <- bit0{n+1} <- d
         | val => c.b1 ; send c {n} ;
                  d.val ; c <- d )
\end{lstlisting}
The type-checker was useful in this example because it discovered a
bug in an earlier, unindexed version of the code: $\m{bit0}[n]$ must
require $n > 0$, otherwise its response to a value request might have
leading zeros.  We also see that each $\m{bit0}$ process carries a
potential of $2$ (to process a future $\mb{val}$ request) and each
$\m{bit1}$ process carries a potential of $3$ (to also be able to pass
on a carry as an $\mb{inc}$ message when incremented).

\paragraph{The Trie Interface}
A trie is represented by the type $\m{trie}[n]$ where $n$ is the
number of elements in the current multiset.  When inserting a number
it becomes a trie $\m{trie}[n+1]$.  When we delete a number $x$ from
the trie we delete all copies of $x$ and return its multiplicity.  For
example, if $x$ was not in the trie at all, we respond with 0; if $x$
was added to the trie 3 times we respond with 3.  If $m$ is the
multiplicity of the number, then after deletion the trie will have
$\m{trie}[n-m]$ elements.  This requires the constraint that
$m \leq n$: the multiplicity of an element cannot be greater than the
total number of elements in the multiset.

When inserting a binary number into the trie that number can be of any
value.  Therefore, we must pass the index $k$ representing that value,
which is represented by a universal quantifier in the type.
Conversely, when responding we need to return the unique binary number
$m$ which is of course not known statically and therefore is an
existential quantifier.

The way we insert the binary number is starting at the root with the
least significant bit and recursively insert the number into the left
or right subtrie, depending on whether the bit is $\mb{b0}$ or
$\mb{b1}$.  When we reach the end of the sequence of bits ($\mb{e}$)
we increase the multiplicity at the leaf we have reached.  This relies
on the uniques of representation, that is, our binary numbers may not
have any leading zeros (which is enforced by the indices, as explained
in an earlier example in Appendix~\ref{subsec:binary}).  As we traverse the
trie, we need to construct new intermediate nodes in case we encounter
a leaf.  It turns out these operations require 4 messages per bit, so
the input number should have potential of 4 per bit.  For deletion, it
turns out we need one more because we need to communicate the answer
back to the client, so 5 units per bit.  For simplicity, we therefore
uniformly require 5 units of potential per bit when adding a number to
the trie and ``burn'' the extra unit during insertion.
\begin{sill}
  $\m{trie}[n] = $\\
  $\echoice{$ \= $\mb{ins} : \tgetpot{\tforall{k} \m{bin}[k,5]
      \lolli \m{trie}[n+1]}{4},$ \\
  \> $\mb{del} : \tgetpot{\tforall{k} \m{bin}[k,5] \lolli
      \texists{m} \tassert{m \leq n}$\\
  \> \hspace{12em} $\m{bin}[m,0] \tensor \m{trie}[n-m]}{5}\,}$
\end{sill}
\begin{lstlisting}
type trie{n} =
  &{ ins : <{4}| !k. bin{k}{5} -o trie{n+1},
     del : <{5}| !k. bin{k}{5} -o ?m. ?{m <= n}.
                              bin{m}{0} * trie{n-m} }
\end{lstlisting}

\paragraph{Trie Implementation}
We have two kinds of nodes: leaf nodes (process $\m{leaf}[0]$) not holding
any elements and element nodes (process $\m{nodes}[n_0,m,n_1]$) representing an element
of multiplicity $m$ with $n_0$ and $n_1$ elements in the left and
right subtries, respectively.  A node therefore has type
$\m{trie}[n_0+m+n_1]$.  Neither process carries any potential.
\begin{lstlisting}
decl leaf : . |- (t : trie{0})
decl node{n0}{m}{n1} :
(l : trie{n0}) (c : ctr{m}) (r : trie{n1}) |-
      (t : trie{n0+m+n1})
\end{lstlisting}
The code then is a bit tricky in the details (nice to have a
type checker verifying both indices and potential!) but not
conceptually difficult.  We present it here without further
comments.
\begin{lstlisting}
proc t <- leaf <- =
  case t
    ( ins => {k} <- recv t ;
             x <- recv t ;
             case x
              ( b0 => {k'} <- recv x ;
                      l <- leaf <- ;
                      c0 <- empty <- ;
                      r <- leaf <- ;
                      l.ins ; send l {k'} ; send l x ;
                      t <- node{1}{0}{0} <- l c0 r
              | b1 => {k'} <- recv x ;
                      l <- leaf <- ;
                      c0 <- empty <- ;
                      r <- leaf <- ;
                      r.ins ; send r {k'} ; send r x ;
                      t <- node{0}{0}{1} <- l c0 r
              | e => wait x ;
                     l <- leaf <- ;
                     c0 <- empty <- ;
                     c0.inc ;
                     r <- leaf <- ;
                     t <- node{0}{1}{0} <- l c0 r )
    | del => {k} <- recv t ;
             x <- recv t ;
             u <- drop{k}{5} <- x ; wait u ;
             send t {0} ;
             c0 <- empty <- ;
             c0.val ;
             send t c0 ;
             t <- leaf <-
    )

proc t <- node{n0}{m}{n1} <- l c r =
  case t
    ( ins => {k} <- recv t ;
             x <- recv t ;
             case x
              ( b0 => {k'} <- recv x ;
                      l.ins ; send l {k'} ; send l x ;
                      t <- node{n0+1}{m}{n1} <- l c r
              | b1 => {k'} <- recv x ;
                      r.ins ;
                      send r {k'} ;
                      send r x ;
                      t <- node{n0}{m}{n1+1} <- l c r
              | e => wait x ;
                     c.inc ;
                     t <- node{n0}{m+1}{n1} <- l c r)
    | del => {k} <- recv t ;
             x <- recv t ;
             case x
              ( b0 => {k'} <- recv x ;
                      l.del ; send l {k'} ; send l x ;
                      {m1} <- recv l ;
                      a <- recv l ; send t {m1} ;
                      send t a ;
                      t<- node{n0-m1}{m}{n1} <- l c r
              | b1 => {k'} <- recv x ;
                      r.del ; send r {k'} ; send r x ;
                      {m2} <- recv r ;
                      a <- recv r ;
                      send t {m2} ; send t a ;
                      t<- node{n0}{m}{n1-m2} <- l c r
              | e => wait x ;
                     send t {m} ;
                     c.val ; send t c ;
                     c0 <- empty <- ;
                     t <- node{n0}{0}{n1} <- l c0 r
              )
    )
\end{lstlisting}

\clearpage

\section{Type Equality}
\label{app:tpeq}
\paragraph{Syntax}
The types follow the syntax below; $n$ is a variable, and
$i$ is a constant.
\[
  \begin{array}{lrcl}
    \mbox{Types} & A & ::= & \ichoice{\ell : A}_{\ell \in L}
    \mid \echoice{\ell : A}_{\ell \in L} \\
                 & & \mid & A \tensor A \mid A \lolli A \mid \one \mid V \indv{e} \\
                 & & \mid & \tassert{\phi} A \mid \tassume{\phi} A
                            \mid \texists{n} A \mid \tforall{n} A \\
    \mbox{Arith. Exps.} & e & ::= & i \mid e + e \mid e - e \mid i \times e \mid n \\
    \mbox{Arith. Props.} &
    \phi & ::= & e = e \mid e > e \mid \top \mid \bot
                 \mid \phi \land \phi \\
    & & \mid &  \phi \lor \phi \mid \lnot \phi \mid \texists{n}\phi \mid \tforall{n} \phi
  \end{array}
\]

\subsection{Type Equality Definition}

\begin{definition}\label{def:app_unfold}
  For a type $A$, define $\unfold{A}$ in the presence of
  signature $\Sg$:
  \begin{mathpar}
    \infer[\m{def}]
    {\unfold{V \indv{e}} = A[\overline{e}/\overline{v}]}
    {V \indv{v} = A \in \Sg}
    \and
    \infer[\m{str}]
    {\unfold{A} = A}
    {A \not= V \indv{e}}
  \end{mathpar}
\end{definition}

\begin{definition}\label{def:app_rel}
  Let $\mi{Type}$ be the set of all ground session types. A relation $\rel
  \subseteq \mi{Type} \times \mi{Type}$ is a type simulation if $(A, B) \in
  \rel$ implies the following conditions:
  \begin{itemize}
    \item If $\unfold{A} = \ichoice{\ell : A_\ell}_{\ell \in L}$, then $\unfold{B} =
    \ichoice{\ell : B_\ell}_{\ell \in L}$ and $(A_\ell, B_\ell) \in \rel$ for
    all $\ell \in L$.

    \item If $\unfold{A} = \echoice{\ell : A_\ell}_{\ell \in L}$, then $\unfold{B} =
    \echoice{\ell : B_\ell}_{\ell \in L}$ and $(A_\ell, B_\ell) \in \rel$ for
    all $\ell \in L$.

    \item If $\unfold{A} = A_1 \lolli A_2$, then $\unfold{B} =
    B_1 \lolli B_2$ and $(A_1, B_1) \in \rel$ and
    $(A_2, B_2) \in \rel$.

    \item If $\unfold{A} = A_1 \tensor A_2$, then $\unfold{B} =
    B_1 \tensor B_2$ and $(A_1, B_1) \in \rel$ and
    $(A_2, B_2) \in \rel$.

    \item If $\unfold{A} = \one$, then $\unfold{B} = \one$.

    \item If $\unfold{A} = \tassert{\phi}{A'}$, then $\unfold{B} = \tassert{\psi}{B'}$
    and either $\proves \phi$, $\proves \psi$, and
    $(A', B') \in \rel$ or $\proves \lnot \phi$ and $\proves \lnot \psi$.

    \item If $\unfold{A} = \tassume{\phi}{A'}$, then $\unfold{B} = \tassume{\psi}{B'}$
    and either $\proves \phi$, $\proves \psi$, and
    $(A', B') \in \rel$, or $\proves \lnot \phi$ and $\proves \lnot \psi$.

    \item If $\unfold{A} = \texists{m} A'$, then $\unfold{B} = \texists{n} B'$
    and for all $i \in \mathbb{N}$, $(A'[i/m], B'[i/n]) \in \rel$.

    \item If $\unfold{A} = \tforall{m} A'$, then $\unfold{B} = \tforall{n} B'$
    and for all $i \in \mathbb{N}$, $(A'[i/m], B'[i/n]) \in \rel$.
  \end{itemize}

\end{definition}

\begin{definition}\label{def:app_decl_releq}
  The type equality relation $\equiv_\rel$ is defined by
  $A \equiv_\rel B$ iff $\rel$ is a type simulation and $(A, B) \in \rel$.
\end{definition}

\begin{definition}\label{def:app_decl_eq}
  The coinductive type equality definition $\equiv$ is defined by
  $A \equiv B$ iff there exists a relation $\rel$ such that $A \equiv_\rel B$.
\end{definition}

\begin{lemma}\label{lem:refl}
  The relation $\equiv$ is reflexive.
\end{lemma}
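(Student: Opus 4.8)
The plan is to produce an explicit type simulation that contains the pair $(A,A)$ for every ground type $A$, which by Definitions~\ref{def:app_decl_releq} and~\ref{def:app_decl_eq} immediately yields $A \equiv A$. Concretely, I would take $\rel = \{(A,A) \mid A \in \mi{Type}\}$, the identity relation on $\mi{Type}$, and verify that it satisfies every closure condition of Definition~\ref{def:app_rel}.

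The verification proceeds by cases on the shape of $\unfold{A}$; since the second component of the pair is also $A$, we have $\unfold{B} = \unfold{A}$, so the requirement that $\unfold{B}$ carry the same top constructor is met syntactically in every case. For the structural connectives ($\ichoice{\ell : A_\ell}_{\ell \in L}$, $\echoice{\ell : A_\ell}_{\ell \in L}$, $A_1 \tensor A_2$, $A_1 \lolli A_2$, $\one$) and for the quantifiers ($\texists{m} A'$, $\tforall{m} A'$), the remaining obligations ask that the corresponding sub-components be $\rel$-related --- e.g., $(A_\ell, A_\ell) \in \rel$, or $(A'[i/m], A'[i/m]) \in \rel$ for each $i \in \mathbb{N}$ --- and all of these hold by the very definition of the identity relation. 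The base case $\one$ is trivial.

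The only case that uses anything beyond reflexivity of syntactic equality is that of the constraint types $\tassert{\phi}{A'}$ and $\tassume{\phi}{A'}$, and this is where the main (small) subtlety lies. Here Definition~\ref{def:app_rel} requires that either $\proves \phi$ together with $(A', A') \in \rel$, or else $\proves \lnot \phi$. Because $A$ is ground, $\phi$ is a closed proposition of Presburger arithmetic, and by its completeness exactly one of $\proves \phi$, $\proves \lnot \phi$ holds; in the former subcase the identity relation supplies $(A', A') \in \rel$, and in the latter there is nothing further to show. With all cases discharged, $\rel$ is a type simulation containing $(A,A)$ for each $A$, completing the proof. I do not anticipate any genuine obstacle: the argument is a routine coinductive witness construction, the constraint case being the only point where one must invoke a property of the index domain.
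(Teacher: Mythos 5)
Your proposal is correct and follows essentially the same route as the paper: both exhibit the identity relation on ground types as a type simulation and check the closure conditions case by case, with the constraint case $\tassert{\phi}{A'}$ resolved by observing that a closed Presburger proposition satisfies either $\proves \phi$ or $\proves \lnot \phi$. If anything, your treatment of that case is slightly more careful than the paper's, which states it only for a representative case without explicitly invoking decidability of the closed index domain.
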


\begin{proof}
  Consider the reflexive relation $\rel_r$ such that $(A, A) \in \rel_r$
  for any $A$. To establish $\rel_r$ as a type simulation, consider
  a representative case. If $\unfold{A} = \tassert{\phi}{A'}$, then the same holds
  and either $\proves \phi = \phi$ and $(A', A') \in \rel_r$, by definition of $\rel_r$
  or $\proves \lnot \phi = \lnot \psi$. If $\unfold{A} = \texists{n} A'$,
  then the same holds and $(A'[i/n], A'[i/n]) \in \rel_r$ by definition.
\end{proof}

\begin{lemma}\label{lem:symm}
  The relation $\equiv$ is symmetric.
\end{lemma}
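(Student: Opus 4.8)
The plan is to run the standard coinductive symmetry argument: from a simulation witnessing $A \equiv B$, produce one witnessing $B \equiv A$ by taking the converse relation. Concretely, suppose $A \equiv B$; by Definition~\ref{def:app_decl_eq} there is a type simulation $\rel$ with $(A,B) \in \rel$. I set $\rel^{-1} = \{(B',A') \mid (A',B') \in \rel\}$. I then show that $\rel^{-1}$ is itself a type simulation, and since $(B,A) \in \rel^{-1}$, conclude $B \equiv_{\rel^{-1}} A$ by Definition~\ref{def:app_decl_releq}, hence $B \equiv A$.

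First I would record a small observation that makes the case analysis go through cleanly: \emph{if $(A',B') \in \rel$ then $\unfold{A'}$ and $\unfold{B'}$ have the same top-level constructor.} Indeed, $A'$ and $B'$ are closed, and by contractivity $\unfold{A'}$ is never a type variable, so it matches one of the listed forms; whichever form it is, the corresponding clause of Definition~\ref{def:app_rel} forces $\unfold{B'}$ into the same form. Since the clauses are exhaustive over the constructors, this is an equivalence of top constructors, not just an implication. With this in hand, verifying each clause of Definition~\ref{def:app_rel} for $\rel^{-1}$ is mechanical. Take $(B',A') \in \rel^{-1}$, so $(A',B') \in \rel$. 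If, say, $\unfold{B'} = C_1 \lolli C_2$, the observation gives $\unfold{A'} = D_1 \lolli D_2$; applying the $\lolli$ clause for $\rel$ to $(A',B')$ yields $(D_1,C_1) \in \rel$ and $(D_2,C_2) \in \rel$, hence $(C_1,D_1), (C_2,D_2) \in \rel^{-1}$, which is exactly what the clause demands for $(B',A')$. The additive, multiplicative, and $\one$ cases are identical. For the constraint case $\unfold{B'} = \tassert{\phi}{C'}$ we likewise get $\unfold{A'} = \tassert{\psi}{D'}$, and the $\rel$-clause applied to $(A',B')$ gives either $\proves \psi$, $\proves \phi$ and $(D',C') \in \rel$, or $\proves \lnot\psi$ and $\proves \lnot\phi$; in both cases this is precisely the (visibly symmetric) condition required of $\rel^{-1}$ on $(B',A')$, using $(C',D') \in \rel^{-1}$ in the first case. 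The $\tassumeop$ case is the same, and the $\texists{}$/$\tforall{}$ cases follow because their clauses quantify uniformly over $i \in \mathbb{N}$ and the relation membership simply flips.

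The only point needing care — and the nearest thing to an obstacle — is the one just noted: the clauses of Definition~\ref{def:app_rel} are stated solely in terms of the shape of $\unfold{A}$, the first component, so \emph{a priori} the definition looks asymmetric. The constructor-matching observation repairs this by showing the definition is in fact two-sided on top constructors; once that is established, everything else is a routine flip of relation membership, and no appeal to reflexivity (Lemma~\ref{lem:refl}) is needed.
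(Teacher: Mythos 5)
Your proof is correct and takes essentially the same route as the paper: form the converse relation $\{(B,A) \mid (A,B) \in \rel\}$ and check it is a type simulation, which the paper dispatches with the remark that all invariants in Definition~\ref{def:app_rel} are symmetric. Your constructor-matching observation merely spells out the detail the paper leaves implicit, so no further comment is needed.
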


\begin{proof}
  Consider $A \equiv B$. There exists a type simulation $\rel_1$
  such that $(A, B) \in \rel_1$. Define $\rel_2$ as follows.
  \[
    \rel_2 = \{(B, A) \mid (A, B) \in \rel_1\}
  \]
  It is easy to see that $\rel_2$ is a type simulation since all
  invariants in Definition~\ref{def:app_rel} are symmetric.
\end{proof}

\begin{lemma}\label{lem:trans}
  The relation $\equiv$ is transitive.
\end{lemma}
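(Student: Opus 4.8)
The plan is to establish transitivity by taking the relational composition of the two witnessing type simulations. Suppose $A \equiv B$ and $B \equiv C$; by Definition~\ref{def:app_decl_eq} there are type simulations $\rel_1$ and $\rel_2$ with $(A, B) \in \rel_1$ and $(B, C) \in \rel_2$. I would define
\[
  \rel_3 = \{(A', C') \mid \exists B'.\ (A', B') \in \rel_1 \text{ and } (B', C') \in \rel_2\},
\]
which is a relation on $\mi{Type}$ (the intermediate type lies in $\mi{Type}$ since $\rel_1, \rel_2 \subseteq \mi{Type} \times \mi{Type}$), and which contains $(A, C)$. It then suffices to check that $\rel_3$ satisfies every closure condition of Definition~\ref{def:app_rel}, after which $A \equiv C$ follows from Definition~\ref{def:app_decl_eq}.

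The verification is by case analysis on $\unfold{A'}$ for an arbitrary $(A', C') \in \rel_3$ with witness $B'$. Since $\unfold{\cdot}$ is deterministic and, by contractivity of type definitions, always yields a type constructor rather than a type name, the shape of $\unfold{A'}$ determines the shape of $\unfold{B'}$ through $\rel_1$, which in turn determines the shape of $\unfold{C'}$ through $\rel_2$. For the structural connectives this is immediate chaining: e.g.\ if $\unfold{A'} = \ichoice{\ell : A_\ell}_{\ell \in L}$ then $\unfold{B'} = \ichoice{\ell : B_\ell}_{\ell \in L}$ with $(A_\ell, B_\ell) \in \rel_1$, hence $\unfold{C'} = \ichoice{\ell : C_\ell}_{\ell \in L}$ with $(B_\ell, C_\ell) \in \rel_2$, so $(A_\ell, C_\ell) \in \rel_3$; the cases for $\echoiceop$, $\tensor$, $\lolli$, $\one$, $\texists{n}$, and $\tforall{n}$ are entirely analogous.

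The step I expect to be the main obstacle is the constraint types $\tassert{\phi}{A''}$ and $\tassume{\phi}{A''}$, because Definition~\ref{def:app_rel} there offers a disjunction that must be reconciled across the two simulations. Suppose $\unfold{A'} = \tassert{\phi}{A''}$. From $(A', B') \in \rel_1$ we obtain $\unfold{B'} = \tassert{\psi}{B''}$ together with either ($\proves \phi$, $\proves \psi$, $(A'', B'') \in \rel_1$) or ($\proves \lnot\phi$, $\proves \lnot\psi$); from $(B', C') \in \rel_2$ we obtain $\unfold{C'} = \tassert{\chi}{C''}$ together with either ($\proves \psi$, $\proves \chi$, $(B'', C'') \in \rel_2$) or ($\proves \lnot\psi$, $\proves \lnot\chi$). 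The key fact that closes the argument is consistency of the constraint domain: $\proves \psi$ and $\proves \lnot\psi$ cannot both hold. Thus in the first sub-case, $\proves \psi$ excludes the second alternative for $\rel_2$, forcing $\proves \chi$ and $(B'', C'') \in \rel_2$, so that $\proves \phi$, $\proves \chi$, and $(A'', C'') \in \rel_3$; in the second sub-case, $\proves \lnot\psi$ excludes the first alternative for $\rel_2$, forcing $\proves \lnot\chi$, so that $\proves \lnot\phi$ and $\proves \lnot\chi$. Either way $\rel_3$ meets the required condition, and $\tassumeop$ is handled symmetrically. Finally, chaining Lemmas~\ref{lem:refl}, \ref{lem:symm}, and this result shows $\equiv$ is an equivalence relation.
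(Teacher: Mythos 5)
Your proposal is correct and follows essentially the same route as the paper: form the relational composition of the two witnessing type simulations, observe it contains $(A,C)$, and verify the simulation conditions by case analysis, with the constraint-type cases resolved by noting that $\proves \psi$ and $\proves \lnot\psi$ cannot both hold so the matching disjunct is forced. If anything, your write-up makes that consistency step more explicit than the paper does.
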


\begin{proof}
  Suppose $A_1 \equiv A_2$ and $A_2 \equiv A_3$.
  Thus, there exist type simulations $\rel_1$ and $\rel_2$ such that
  $(A_1, A_2) \in \rel_1$ and $(A_2, A_3) \in \rel_2$.
  Denoting relational composition by $\rel_1 \cdot \rel_2$,
  \begin{eqnarray*}
    \rel & = & (\rel_1 \cdot \rel_2) \\
    & = & \{(A, C) \mid \exists \; B. \;
    (A, B) \in \rel_1, (B, C) \in \rel_2\}
  \end{eqnarray*}
  Clearly, $(A_1, A_3) \in \rel$. To establish $\rel$ as a type
  simulation, consider $(A, C) \in \rel$ and the case $A = \tassert{\phi_1}{A'}$. Since
  $(A, B) \in \rel_1$ for some $B$, we get that $B = \tassert{\phi_2}{B'}$.
  As a first case, assume $\proves \phi_1$ and $\proves \phi_2$ and $(A', B') \in \rel_1$.
  Similarly, since $(B, C) \in \rel_2$, we get that $C = \tassert{\phi_3}{C'}$.
  And since $\proves \phi_2$, we get that $\proves \phi_3$ and $(B', C') \in \rel_2$.
  Using the definition of $\rel$, we get that $(A', C') \in \rel$. Thus,
  we have $A = \tassert{\phi_1}{A'}$ and $C = \tassert{\phi_3}{C'}$ and
  and $\proves \phi_1$ and $\proves \phi_3$ and $(A',C') \in \rel$.
  Therefore, $\rel$ is a type simulation. A similar argument holds
  if $\proves \lnot \phi_1$. 
\end{proof}

\subsection{Type Equality Algorithm}
We start with defining a judgment for defining type equality. The
judgment is defined as $\vars \semi \cons \semi \G
\vdash A \equiv B$.
$\Sg$ defines the signature containing the type definitions.
$\vars$ stores the list of free variables in $A$ and $B$, and
$\cons$ stores the constraints that the variables in $\vars$
satisfy. Finally, $\G$ stores the equality constraints that we are
collecting while following the algorithm recursively. The algorithm
is initiated with an empty $\G$.

We perform a simple type transformation that assigns a name to
each intermediate type expression. After the transformation, the type
grammar becomes
\[
\begin{array}{rcl}
  A & ::= & \ichoice{\ell : T}_{\ell \in L}
  \mid \echoice{\ell : T}_{\ell \in I}
  \mid T \lolli T \mid T \tensor T \mid \one \\
  & \mid & \tassert{\phi}{T} \mid \tassume{\phi}{T}
  \mid \texists{n} T \mid \tforall{n} T \\
  T & ::= & V\indv{e}
\end{array}
\]
The type equality algorithm is performed on transformed type expressions.
This simplifies the algorithm and improves its completeness. Figure
\ref{fig:tpeq_rules} presents the algorithmic rules for type equality.

\begin{figure}
\begin{mathpar}
  \infer[\oplus]
  {\vars \semi \cons \semi \G \vdash
  \ichoice{l : T_l}_{l \in L} \equiv \ichoice{l : U_l}_{l \in L}}
  {\vars \semi \cons \semi \G \vdash
  T_l \equiv U_l \quad (\forall l \in L)}
  \and
  \infer[\with]
  {\vars \semi \cons \semi \G \vdash
  \echoice{l : T_l}_{l \in L} \equiv \echoice{l : U_l}_{l \in L}}
  {\vars \semi \cons \semi \G \vdash
  T_l \equiv U_l \quad (\forall l \in L)}
  \and
  \infer[\tensor]
  {\vars \semi \cons \semi \G \vdash
  T_1 \tensor T_2 \equiv U_1 \tensor U_2}
  {\vars \semi \cons \semi \G \vdash T_1 \equiv U_1 \qquad
  \vars \semi \cons \semi \G \vdash T_2 \equiv U_2}
  \and
  \infer[\lolli]
  {\vars \semi \cons \semi \G \vdash
  T_1 \lolli T_2 \equiv U_1 \lolli U_2}
  {\vars \semi \cons \semi \G \vdash T_1 \equiv U_1 \qquad
  \vars \semi \cons \semi \G \vdash T_2 \equiv U_2}
  \and
  \infer[\one]
  {\vars \semi \cons \semi \G \vdash \one \equiv \one}
  {}
  \and
  \infer[\tassertop]
  {\vars \semi \cons \semi \G \vdash
  \tassert{\phi}{T} \equiv \tassert{\psi}{U}}
  {\vars \semi \cons \proves \phi \leftrightarrow \psi \qquad
  \vars \semi \cons \wedge \phi \semi \G \vdash T \equiv U}
  \and
  \infer[\tassumeop]
  {\vars \semi \cons \semi \G \vdash
  \tassume{\phi}{T} \equiv \tassume{\psi}{U}}
  {\vars \semi \cons \proves \phi \leftrightarrow \psi \qquad
  \vars \semi \cons \wedge \phi \semi \G \vdash T \equiv U}
  \and
  \infer[\exists]
  {\vars \semi \cons \semi \G \vdash
  \texists{m}{T} \equiv \texists{n}{U}}
  {\vars, k \semi \cons \semi \G \vdash
  T[k/m] \equiv U[k/n]}
  \and
  \infer[\forall]
  {\vars \semi \cons \semi \G \vdash
  \tforall{m}{T} \equiv \tforall{n}{U}}
  {\vars, k \semi \cons \semi \G \vdash
  T[k/m] \equiv U[k/n]}
  \and
  \infer[\bot]
  {\vars \semi \cons \semi \G \vdash
  T \equiv U}
  {\vars \semi \cons \proves \bot}
  \and
  \inferrule*[right=$\m{def}$]
  {
    \forall \vars'. \; \cons' \Rightarrow V_1 \indv{E_1} \equiv V_2 \indv {E_2} \in \G \\
    \vars \semi \cons \proves \exists \vars'. \; \cons' \land \indv{E_1}
    = \indv{e_1} \land \indv{E_2} = \indv{e_2}
  }
  {
    \vars \semi \cons \semi \G \vdash V_1 \indv{e_1} \equiv V_2 \indv{e_2}
  }
  \and
  \inferrule*[right = $\m{expd}$]
  {V_1 \indv{v_1} = A \in \Sg \and
  V_2 \indv{v_2} = B \in \Sg \\\\
  \gamma = \forall \vars. \; \cons \Rightarrow V_1 \indv{e_1} \equiv V_2 \indv{e_2}\\\\
  \vars \semi \cons \semi \G, \gamma
  \vdash A[\overline{e_1} / \overline{v_1}] \equiv
  B[\overline{e_2}/\overline{v_2}]}
  {\vars \semi \cons \semi \G \vdash
  V_1 \indv{e_1} \equiv V_2 \indv{e_2}}
\end{mathpar}
\caption{Algorithmic Rules for Type Equality}
\label{fig:tpeq_rules}
\end{figure}

\begin{lemma}\label{lem:alt}
  Consider a goal $\vars \semi \cons \semi \G \vdash A \equiv B$.
  Either $A$ and $B$ are both structural or both type names.
\end{lemma}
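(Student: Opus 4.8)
The plan is to prove Lemma~\ref{lem:alt} by induction on the algorithmic derivation, taking as invariant exactly the stated property: every type-equality goal $\vars \semi \cons \semi \G \vdash A \equiv B$ occurring in a derivation has $A$ and $B$ both structural or both type names. First I would record two facts about the name-assignment transformation applied to the signature. (i)~In a transformed signature every right-hand side of a type definition $V\indv{v} = A \in \Sg$ is \emph{structural}: it has the form $\ichoice{\ell : T}_{\ell\in L}$, $\echoice{\ell : T}_{\ell\in L}$, $T\tensor T$, $T \lolli T$, $\one$, $\tassert{\phi}{T}$, $\tassume{\phi}{T}$, $\texists{n}{T}$, or $\tforall{n}{T}$, and every immediate constituent written $T$ is a type name $V\indv{e}$. (ii)~Substitution of arithmetic expressions for index variables preserves both categories, so $A[\overline{e}/\overline{v}]$ is structural (resp.\ a type name) whenever $A$ is.

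Next I would check, rule by rule, that the invariant propagates from the conclusion to every type-equality premise. The structural rules $\oplus$, $\with$, $\tensor$, $\lolli$, $\tassertop$, $\tassumeop$, $\exists$, $\forall$ are applicable only when both sides carry the matching top-level structural constructor, and each of their type-equality premises compares corresponding constituents, which are type names by fact~(i); the rule $\one$ has no type-equality premise. The $\m{expd}$ rule has a conclusion comparing two type names $V_1\indv{e_1}$ and $V_2\indv{e_2}$ and a single type-equality premise comparing $A[\overline{e_1}/\overline{v_1}]$ with $B[\overline{e_2}/\overline{v_2}]$, where $A$ and $B$ are definition bodies, hence structural by facts~(i) and~(ii). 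The rules $\m{def}$, $\m{refl}$, and $\bot$ have only arithmetic premises, so there is nothing to propagate. Hence every rule preserves the invariant.

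For the base case I would observe that the transformation replaces every type occurring in a process, context, or declaration by a type name, so the equality algorithm is always \emph{invoked} on a goal comparing two type names; the invariant therefore holds at the root of every derivation and, by the case analysis above, at every node. The only real content of the argument — and hence the main thing to get right — is fact~(i): that the transformation genuinely yields a signature in which definition bodies are structural (never a bare type name, which is exactly where contractivity enters) and all nested positions are filled by names. Once that normal form is pinned down precisely, the induction is a routine inspection of the rules in Figure~\ref{fig:tpeq_rules}, and the context $\G$ together with the arithmetic side conditions play no role.
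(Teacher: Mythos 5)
Your proposal is correct and follows essentially the same route as the paper's (much terser) argument: induction on the algorithmic type-equality derivation, relying on contractivity and on the internal-naming transformation that makes every definition body structural with type-name constituents. You merely spell out the rule-by-rule invariant propagation and the base case that the paper leaves implicit.
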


\begin{proof}
  By induction on the algorithmic type equality judgment, and the fact that
  type definitions are contractive, and the intermediate transformation where
  every continuation is replaced by a type name.
\end{proof}

\subsection{Soundness of Type Equality}

\begin{definition}\label{def:app_algo_tpeq}
  The notation $\forall \vars. \cons \Rightarrow A \equiv_\rel B$ denotes that two types
  $A$ and $B$ quantified over $\vars$ are related in $\rel$ under the constraint $\cons$.
  This holds when for all ground substitutions $\sigma$ over $\vars$
  satisfying $\cons$, $A[\sigma] \equiv_\rel B[\sigma]$.
\end{definition}

\begin{definition}\label{def:decl_tpeq}
  The declarative definition for equality, denoted by $\forall \vars. \cons \Rightarrow
  A \equiv B$ holds when there exists a type simulation $\rel$ such that
  $\forall \vars. \cons \Rightarrow A \equiv_\rel B$.
\end{definition}

\begin{lemma}\label{lem:app_gen_sim}
  Suppose $\forall \vars'. \cons' \Rightarrow V_1 \indv{e_1'} \equiv_\rel V_2 \indv{e_2'}$ holds.
  And suppose, $\vars \semi \cons \proves \exists \vars'. \cons' \wedge
  \overline{e_1'} = \overline{e_1} \wedge \overline{e_2'} = \overline{e_2}$
  for some $\vars, \cons, \indv{e_1}, \indv{e_2}$. Then, $\forall \vars. \cons
  \Rightarrow V_1 \indv{e_1} \equiv_\rel V_2 \indv{e_2}$ holds.
\end{lemma}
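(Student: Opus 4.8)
The plan is to unfold the two relevant definitions and then push the given semantic entailment through an arbitrary ground substitution, keeping the simulation $\rel$ fixed. By Definition~\ref{def:app_algo_tpeq}, establishing $\forall \vars.\, \cons \Rightarrow V_1\indv{e_1} \equiv_\rel V_2\indv{e_2}$ amounts to showing that for every ground substitution $\sigma$ over $\vars$ with $\cdot \proves \cons[\sigma]$ we have $(V_1\indv{e_1})[\sigma] \equiv_\rel (V_2\indv{e_2})[\sigma]$, which by Definition~\ref{def:app_decl_releq} is the membership $\bigl((V_1\indv{e_1})[\sigma],\, (V_2\indv{e_2})[\sigma]\bigr) \in \rel$. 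It is essential that the relation $\rel$ is not changed: we reuse the very same $\rel$ supplied by the first hypothesis, which is exactly what the soundness argument (Theorem~\ref{thm:tpeq_sound}) needs when it stitches together a single bisimulation from the algorithm's derivation.

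First I would fix such a $\sigma$ and, after $\alpha$-renaming so that $\vars' \cap \vars = \emptyset$, instantiate the second hypothesis $\vars \semi \cons \proves \exists \vars'.\, \cons' \wedge \overline{e_1'} = \overline{e_1} \wedge \overline{e_2'} = \overline{e_2}$ at $\sigma$. Since $\sigma$ satisfies $\cons$, the entailment yields the truth of the closed formula $\exists \vars'.\, \cons' \wedge \overline{e_1'} = \overline{e_1}[\sigma] \wedge \overline{e_2'} = \overline{e_2}[\sigma]$, where $\overline{e_1}[\sigma]$ and $\overline{e_2}[\sigma]$ are ground while $\cons'$, $\overline{e_1'}$, $\overline{e_2'}$ are untouched by $\sigma$ thanks to disjointness of $\vars'$ and $\vars$. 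From the validity of this closed existential I extract a ground substitution $\sigma'$ over $\vars'$ with $\cdot \proves \cons'[\sigma']$, $\overline{e_1'}[\sigma'] = \overline{e_1}[\sigma]$, and $\overline{e_2'}[\sigma'] = \overline{e_2}[\sigma]$.

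Then I would apply the first hypothesis $\forall \vars'.\, \cons' \Rightarrow V_1\indv{e_1'} \equiv_\rel V_2\indv{e_2'}$ to this $\sigma'$ — legitimate since $\sigma'$ is a ground substitution over $\vars'$ satisfying $\cons'$ — obtaining $\bigl((V_1\indv{e_1'})[\sigma'],\, (V_2\indv{e_2'})[\sigma']\bigr) \in \rel$. Because $V_1, V_2$ are type names and $\overline{e_i'}[\sigma'] = \overline{e_i}[\sigma]$, we have $(V_i\indv{e_i'})[\sigma'] = (V_i\indv{e_i})[\sigma]$ for $i = 1,2$, so this is precisely $\bigl((V_1\indv{e_1})[\sigma],\, (V_2\indv{e_2})[\sigma]\bigr) \in \rel$. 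As $\sigma$ was arbitrary, Definition~\ref{def:app_algo_tpeq} gives $\forall \vars.\, \cons \Rightarrow V_1\indv{e_1} \equiv_\rel V_2\indv{e_2}$.

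The content here is almost entirely definitional. The only care needed is the bookkeeping around variable scopes — ensuring $\vars'$ is fresh for $\vars$ so that $\sigma$ acts trivially on $\cons'$, $\overline{e_1'}$, $\overline{e_2'}$, and reading off the witness substitution $\sigma'$ from the truth of the closed existential Presburger formula. I expect this scope management, rather than any deep mathematical step, to be the main (and only minor) obstacle.
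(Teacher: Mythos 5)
Your proposal is correct and follows essentially the same argument as the paper's proof: fix a ground $\sigma$ satisfying $\cons$, instantiate the entailment to obtain a witness $\sigma'$ over $\vars'$ with $\cons'[\sigma']$ and the index equalities, then apply the first hypothesis at $\sigma'$ and transfer along $\overline{e_i'}[\sigma'] = \overline{e_i}[\sigma]$. The only additions are your explicit remarks on $\alpha$-renaming $\vars'$ and on keeping $\rel$ fixed, which the paper leaves implicit.
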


\begin{proof}
  To prove $\forall \vars. \cons \Rightarrow V_1 \indv{e_1} \equiv_\rel V_2 \indv{e_2}$, it
  is sufficient to show that $V_1 \indv{e_1[\sigma]} \equiv_\rel V_2 \indv{e_2[\sigma]}$
  for any substitution $\sigma : \vars$ such that $\cdot \proves \cons[\sigma]$.
  Applying this substitution to $\vars \semi \cons \proves \exists \vars'.
  \cons' \wedge \overline{e_1'} = \overline{e_1} \wedge \overline{e_2'} = \overline{e_2}$,
  we get that $\exists \vars'. \cons' \wedge \overline{e_1'} =
  \overline{e_1[\sigma]} \wedge \overline{e_2'} = \overline{e_2[\sigma]}$
  since $\cons[\sigma]$ is true. Thus, there exists $\sigma' : \vars'$ such that
  $\cdot \proves \cons'[\sigma']$ holds, and $\overline{e_1'[\sigma']} =
  \overline{e_1[\sigma]}$ and $\overline{e_2'[\sigma']} = \overline{e_2[\sigma]}$.
  And since $\forall \vars'. \cons' \Rightarrow V_1 \indv{e_1'} \equiv_\rel V_2 \indv{e_2'}$,
  we can use substitution $\sigma'$ to get that $V_1 \indv{e_1'[\sigma']} \equiv_\rel
  V_2 \indv{e_2'[\sigma']}$. This implies that $V_1 \indv{e_1[\sigma]} \equiv_\rel V_2
  \indv{e_2[\sigma]}$ since $\overline{e_1'[\sigma']} =
  \overline{e_1[\sigma]}$ and $\overline{e_2'[\sigma']} = \overline{e_2[\sigma]}$.
\end{proof}

\begin{definition}
  Given $\vars \semi \cons \semi \G \vdash A \equiv B$, define
  the set of conclusions $\mathcal{S}$ by the judgment $\concs{\vars
  \semi \cons \semi \G \vdash A \equiv B}{\mathcal{S}}$. This judgment
  simply collects the set of conclusions in a derivation tree. It is
  defined using the rules in Figure~\ref{fig:concs-judgment}.

  \begin{figure*}
  \centering
  \begin{mathpar}
    \infer[\oplus]
    {\concs{\Sg \semi \vars \semi \cons \semi \G \vdash
    \ichoice{l : T_l}_{l \in L} \equiv \ichoice{l : U_l}_{l \in L}}{\mathcal{S},
    (\forall \vars. \cons \Rightarrow \ichoice{l : T_l}_{l \in L} \equiv \ichoice{l :
    U_l}_{l \in L})}}
    {\concs{\Sg \semi \vars \semi \cons \semi \G \vdash
    T_l \equiv U_l}{\mathcal{S}_l} \quad (\forall l \in L)
    \qquad \mathcal{S} = \bigcup_{l \in L} \mathcal{S}_l}
    \and
    \infer[\with]
    {\concs{\Sg \semi \vars \semi \cons \semi \G \vdash
    \echoice{l : T_l}_{l \in L} \equiv \echoice{l : U_l}_{l \in L}}{\mathcal{S},
    (\forall \vars. \cons \Rightarrow \echoice{l : T_l}_{l \in L} \equiv \echoice{l :
    U_l}_{l \in L})}}
    {\concs{\Sg \semi \vars \semi \cons \semi \G \vdash
    T_l \equiv U_l}{\mathcal{S}_l} \quad (\forall l \in L)
    \qquad \mathcal{S} = \bigcup_{l \in L} \mathcal{S}_l}
    \and
    \infer[\tensor]
    {\concs{\Sg \semi \vars \semi \cons \semi \G \vdash
    T_1 \tensor T_2 \equiv U_1 \tensor U_2}{\mathcal{S},
    (\forall \vars. \cons \Rightarrow T_1 \tensor T_2 \equiv U_1 \tensor U_2)}}
    {\concs{\Sg \semi \vars \semi \cons \semi \G \vdash T_1 \equiv U_1}
    {\mathcal{S}_1} \qquad
    \concs{\Sg \semi \vars \semi \cons \semi \G \vdash T_2 \equiv U_2}
    {\mathcal{S}_2} \qquad \mathcal{S} = \mathcal{S}_1 \cup \mathcal{S}_2}
    \and
    \infer[\lolli]
    {\concs{\Sg \semi \vars \semi \cons \semi \G \vdash
    T_1 \lolli T_2 \equiv U_1 \lolli U_2}{\mathcal{S},
    (\forall \vars. \cons \Rightarrow T_1 \lolli T_2 \equiv U_1 \lolli U_2)}}
    {\concs{\Sg \semi \vars \semi \cons \semi \G \vdash T_1 \equiv U_1}
    {\mathcal{S}_1} \qquad
    \concs{\Sg \semi \vars \semi \cons \semi \G \vdash T_2 \equiv U_2}
    {\mathcal{S}_2} \qquad \mathcal{S} = \mathcal{S}_1 \cup \mathcal{S}_2}
    \and
    \infer[\one]
    {\concs{\Sg \semi \vars \semi \cons \semi \G \vdash \one \equiv \one}
    {\{(\vars \semi \cons \semi \one \equiv \one)\}}}
    {}
    \and
    \infer[\tassertop]
    {\concs{\Sg \semi \vars \semi \cons \semi \G \vdash
    \tassert{\phi}{T} \equiv \tassert{\psi}{U}}{\mathcal{S},
    (\forall \vars. \cons \Rightarrow \tassert{\phi}{T} \equiv \tassert{\psi}{U})}}
    {\vars \semi \cons \proves \phi \leftrightarrow \psi \qquad
    \concs{\Sg \semi \vars \semi \cons \wedge \phi \semi \G \vdash T \equiv U}
    {\mathcal{S}}}
    \and
    \infer[\tassumeop]
    {\concs{\Sg \semi \vars \semi \cons \semi \G \vdash
    \tassume{\phi}{T} \equiv \tassume{\psi}{U}}{\mathcal{S},
    (\forall \vars. \cons \Rightarrow \tassume{\phi}{T} \equiv \tassume{\psi}{U})}}
    {\vars \semi \cons \proves \phi \leftrightarrow \psi \qquad
    \concs{\Sg \semi \vars \semi \cons \wedge \phi \semi \G \vdash T \equiv U}
    {\mathcal{S}}}
    \and
    \infer[\exists]
    {\concs{\Sg \semi \vars \semi \cons \semi \G \vdash
    \texists{m} T \equiv \texists{n} U}{\mathcal{S},
    (\forall \vars. \cons \Rightarrow \texists{m} T \equiv \texists{n} U)}}
    {\concs{\Sg \semi \vars,k \semi \cons \semi \G \vdash T[k/m] \equiv U[k/n]}
    {\mathcal{S}}}
    \and
    \infer[\forall]
    {\concs{\Sg \semi \vars \semi \cons \semi \G \vdash
    \tforall{m} T \equiv \tforall{n} U}{\mathcal{S},
    (\forall \vars. \cons \Rightarrow \tforall{m} T \equiv \tforall{n} U)}}
    {\concs{\Sg \semi \vars,k \semi \cons \semi \G \vdash T[k/m] \equiv U[k/n]}
    {\mathcal{S}}}
    \and
    \infer[\m{def}]
    {\concs{\Sg \semi \vars \semi \cons \semi \G \vdash
    V_1 \indv{e_1} \equiv V_2 \indv{e_2}}{\{\}}}
    {\forall \vars'. \; \cons' \Rightarrow V_1 \indv{E_1} \equiv V_2
    \indv {E_2} \in \G \qquad
    \forall \vars. \cons \rightarrow \exists \vars'. \cons' \wedge
    \overline{E_1} = \overline{e_1} \wedge \overline{E_2} = \overline{e_2}}
    \and
    \inferrule*[right = $\m{exp}$]
    {V_1 \indv{v_1} = A \in \Sg \qquad
    V_2 \indv{v_2} = B \in \Sg \qquad
    \gamma = \forall \vars. \; \cons \Rightarrow V_1 \indv{e_1} \equiv V_2 \indv{e_2} \\\\
    \concs{\Sg \semi \vars \semi \cons \semi \G, \gamma
    \vdash A[\overline{e_1} / \overline{v_1}] \equiv
    B[\overline{e_2}/\overline{v_2}]}{\mathcal{S}}}
    {\concs{\Sg \semi \vars \semi \cons \semi \G \vdash
    V_1 \indv{e_1} \equiv V_2 \indv{e_2}}{\mathcal{S},
    (\forall \vars. \cons \Rightarrow V_1 \indv{e_1} \equiv V_2 \indv{e_2})}}
  \end{mathpar}
  \caption{Judgment to collect all the equality conclusions}
  \label{fig:concs-judgment}
\end{figure*}
\end{definition}

\begin{theorem}
  If $\vars \semi \cons \semi \cdot \vdash A \equiv B$,
  then $\forall \vars. \cons \Rightarrow A \equiv B$.
\end{theorem}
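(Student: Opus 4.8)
The plan is to extract from a derivation of the algorithmic judgment an explicit type simulation (Definition~\ref{def:app_rel}) and then conclude via Definitions~\ref{def:decl_tpeq} and~\ref{def:app_decl_eq}. Fix a derivation $\mathcal{D}$ of $\vars \semi \cons \semi \cdot \vdash A \equiv B$. First I would run the conclusion-collecting judgment of Figure~\ref{fig:concs-judgment} to obtain a set $\mathcal{S}$ with $\concs{\vars \semi \cons \semi \cdot \vdash A \equiv B}{\mathcal{S}}$; since the root context is empty the root rule is not $\m{def}$, so $(\forall \vars.\, \cons \Rightarrow A \equiv B) \in \mathcal{S}$ (a root closed by the reflexivity rule is subsumed below). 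Then I would set
\[
  \rel \;=\; \bigl\{\, (A'[\sigma],\, B'[\sigma]) \;\bigm|\; (\forall \vars'.\, \cons' \Rightarrow A' \equiv B') \in \mathcal{S},\ \sigma : \vars',\ \cdot \proves \cons'[\sigma] \,\bigr\}
  \;\cup\; \bigl\{\, (C, C) \;\bigm|\; C \in \mi{Type} \,\bigr\},
\]
the set of all ground instances of the collected conclusions together with the identity relation. The identity component lets $\rel$ absorb the reflexivity rule: the identity relation is a type simulation (Lemma~\ref{lem:refl}) and the simulation clauses are closed under unions. With this $\rel$ we have $(A[\sigma], B[\sigma]) \in \rel$ for every ground $\sigma : \vars$ satisfying $\cons$, which is exactly $\forall \vars.\, \cons \Rightarrow A \equiv_\rel B$; so it only remains to prove that $\rel$ is a type simulation.

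To verify the simulation clauses I would, for each collected conclusion $(\forall \vars'.\, \cons' \Rightarrow A' \equiv B') \in \mathcal{S}$, inspect the algorithmic rule that produced it and instantiate by an arbitrary $\sigma : \vars'$ with $\cdot \proves \cons'[\sigma]$. For the structural rules ($\oplus$, $\with$, $\tensor$, $\lolli$, $\one$) both sides are non-type-names, so $\unfold{A'[\sigma]} = A'[\sigma]$ and $\unfold{B'[\sigma]} = B'[\sigma]$ by rule $\m{str}$, while the subderivations' conclusions lie in $\mathcal{S}$, so each pair of corresponding components is in $\rel$ after instantiation. For the constraint rules $\tassertop$ and $\tassumeop$ the essential case split is: either $\cdot \proves \phi[\sigma]$, in which case $\sigma$ also satisfies $\cons' \land \phi$ so the pair of continuations is a ground instance of the subderivation conclusion in $\mathcal{S}$, and $\vars' \semi \cons' \proves \phi \leftrightarrow \psi$ gives $\cdot \proves \psi[\sigma]$; or $\cdot \proves \lnot \phi[\sigma]$, in which case $\cdot \proves \lnot \psi[\sigma]$ as well, matching the second disjunct. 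For $\exists$ and $\forall$, instantiating the fresh variable with each $i \in \mathbb{N}$ — which leaves $\cons'$ untouched — supplies the required pairs. Branches closed by $\bot$ contribute nothing, since their constraint is unsatisfiable, and $\m{def}$ nodes contribute nothing to $\mathcal{S}$ at all.

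The genuinely delicate part, where the coinduction is discharged, is the interaction between $\m{expd}$ — the only rule putting non-trivial type-name conclusions into $\mathcal{S}$ — and $\m{def}$, whose conclusions are deliberately not collected. I would first record the bookkeeping lemma that $\G \subseteq \mathcal{S}$ at every node of $\mathcal{D}$; this is a routine induction, since $\G$ grows only at $\m{expd}$, which adds exactly its own conclusion to both $\G$ and $\mathcal{S}$. Given a type-name conclusion $(\forall \vars'.\, \cons' \Rightarrow V_1\indv{e_1} \equiv V_2\indv{e_2}) \in \mathcal{S}$ produced by an $\m{expd}$ node, unfolding under $\sigma$ gives $\unfold{V_1\indv{e_1}[\sigma]} = (A_0[\overline{e_1}/\overline{v_1}])[\sigma]$ where $V_1\indv{v_1} = A_0 \in \Sg$, and since definitions are contractive and the intermediate naming transformation makes every definition body a non-type-name (Lemma~\ref{lem:alt}), the $\m{expd}$-subderivation's conclusion $\forall \vars'.\, \cons' \Rightarrow A_0[\overline{e_1}/\overline{v_1}] \equiv B_0[\overline{e_2}/\overline{v_2}]$ is again in $\mathcal{S}$ and was produced by one of the rules already treated, so this case reduces to those. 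Finally, whenever one of those treated cases demands a component pair $(V_3\indv{f}[\sigma], V_4\indv{g}[\sigma]) \in \rel$ but the relevant subderivation node closed with $\m{def}$ using some $\gamma = (\forall \vars''.\, \cons'' \Rightarrow V_3\indv{F} \equiv V_4\indv{G}) \in \G \subseteq \mathcal{S}$ together with the entailment $\vars' \semi \cons' \proves \exists \vars''.\, \cons'' \land \indv{F} = \indv{f} \land \indv{G} = \indv{g}$, I would invoke Lemma~\ref{lem:app_gen_sim} with this $\rel$: the entailment produces from $\sigma$ a witness $\sigma'' : \vars''$ with $\indv{F}[\sigma''] = \indv{f}[\sigma]$, $\indv{G}[\sigma''] = \indv{g}[\sigma]$ and $\cdot \proves \cons''[\sigma'']$, so $(V_3\indv{f}[\sigma], V_4\indv{g}[\sigma])$ is literally an instance of $\gamma$ and hence in $\rel$ since $\gamma \in \mathcal{S}$. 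This is where the cycle created by recursive definitions is closed, and I expect it to be the main obstacle: lining up the substitutions $\sigma$ and $\sigma''$, the $\G \subseteq \mathcal{S}$ bookkeeping, and the $\m{expd}$/$\m{def}$ matching so that $\rel$ is manifestly a simulation even though $\m{def}$ contributes nothing to $\mathcal{S}$. Everything else is routine substitution chasing.
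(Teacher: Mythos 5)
Your proposal is correct and follows essentially the same route as the paper's proof: collect the set $\mathcal{S}$ of conclusions from the derivation, take $\rel$ to be the set of ground instances of those conclusions, verify the simulation clauses by case analysis on the rule producing each conclusion, and close the recursive cycle at $\m{def}$ nodes via Lemma~\ref{lem:app_gen_sim} applied to the stored assumption in $\G$. Your two refinements --- unioning in the identity relation to absorb the reflexivity rule, and the explicit bookkeeping lemma $\G \subseteq \mathcal{S}$ --- are points the paper handles with a side remark or leaves implicit, so they sharpen rather than change the argument.
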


\begin{proof}
  Let $\mathcal{S}$ be the set such that $\concs{\vars \semi
  \cons \semi \cdot \vdash A \equiv B}{\mathcal{S}}$. Define a relation
  $\rel$ such that
  \begin{eqnarray*}
    \rel = \{(A[\sigma], B[\sigma]) \mid (\forall \vars. \cons \Rightarrow A \equiv B)
    \in \mathcal{S}, \\ \forall \sigma : \vars. \; \cdot \proves \cons[\sigma]\}
  \end{eqnarray*}
  We prove that $\rel$ is a type simulation. Consider $(A[\sigma], B[\sigma]) \in \rel$
  where $(\vars \semi \cons \semi A \equiv B) \in \mathcal{S}$ for some $\sigma : \vars$
  and $\cdot \proves \cons[\sigma]$. We case analyze on the structure of $A[\sigma]$.

  Consider the case where $A = \ichoice{l : A_l}_{l \in L}$. Since $A$ and $B$ are both
  structural, $B = \ichoice{l : B_l}_{l \in L}$. Since $(\forall \vars. \cons \Rightarrow
  A \equiv B) \in \mathcal{S}$, by definition of $\mathcal{S}$, we get $(\forall \vars.
  \cons \Rightarrow A_l \equiv B_l) \in \mathcal{S}$ for all $l \in L$. Then, by the definition
  of $\rel$, we get that $(A_l[\sigma], B_l[\sigma]) \in \rel$. Also,
  $A[\sigma] = \ichoice{l : A_l[\sigma]}_{l \in L}$ and similarly,
  $B[\sigma] = \ichoice{l : B_l[\sigma]}_{l \in L}$.
  Hence, $\rel$ satisfies Definition~\ref{def:app_rel} and is indeed a type simulation.

  Next, consider the case where $A = \tassert{\phi} A'$. Since $A$ and $B$ are both
  structural, $B = \tassert{\psi} B'$. Since $(\forall \vars. \cons \Rightarrow A \equiv B)
  \in \mathcal{S}$, we get that $\vars \semi \cons \proves \phi \leftrightarrow \psi$
  and $(\forall \vars. \cons \land \phi \Rightarrow A' \equiv B') \in \mathcal{S}$. Thus, for
  any substitution $\sigma$ such that $\cdot \proves \cons[\sigma] \land \phi [\sigma]$,
  we get that $(A'[\sigma], B'[\sigma]) \in \rel$. And $A[\sigma] = \tassert{\phi[\sigma]}
  A'[\sigma]$, and $B[\sigma] = \tassert{\psi[\sigma]} B'[\sigma]$. Since $\cdot \proves
  \phi[\sigma]$, we get that $\phi[\sigma] = \true$ and since $\vars \semi \cons \proves
  \phi \leftrightarrow \psi$, we get that $\psi[\sigma] = \true$. Thus, $\rel$ is a
  type simulation.

  Next, consider the case where $A = \texists{m} A'$. Since $A$ and $B$ are both
  structural, $B = \texists{n} B'$. Since $(\forall \vars. \cons \Rightarrow A \equiv B)
  \in \mathcal{S}$, we get that $(\forall \vars,k. \cons \Rightarrow A'[k/m] \equiv B'[k/n])
  \in \mathcal{S}$. Thus, for any substitution $\sigma$ over $\vars, k$ such that
  $\cdot \proves \cons[\sigma]$, we get that $(A'[k/m][\sigma], B'[k/n][\sigma])
  \in \rel$. Since $\sigma$ is over $\vars, k$, we know that $k$ is substituted
  over a ground term, and since $\cons$ does not mention $k$, we know that $k$
  can be substituted with any ground term. Thus, $\forall i \in \mathbb{N}$,
  $(A'[i/k][k/m][\sigma], B'[i/k][k/n][\sigma]) \in \rel$. Thus, $\rel$ is a type
  simulation.

  The only case where a conclusion is not added to $\mathcal{S}$ is the $\m{def}$
  rule. In this case, adding $(\forall \vars. \cons \Rightarrow V_1 \indv{e_1} \equiv V_2
  \indv{e_2})$ is redundant. Because Lemma~\ref{lem:app_gen_sim} states that
  $V_1 \indv{e_1[\sigma]} \equiv_\rel V_2 \indv{e_2[\sigma]}$, this implies
  $(V_1 \indv{e_1[\sigma]}, V_2 \indv{e_2[\sigma]}) \in \rel$. Thus, $\rel$
  is indeed a type simulation.

\end{proof}

\subsection{Undecidability of Type Equality}
We prove the undecidability of type equality by exhibiting a
reduction from an undecidable problem about two counter
machines.

A two counter machine $\tcm$ is given a sequence
of instructions $\iota_1, \iota_2, \ldots, \iota_m$ where each
instruction is one of the following.

\begin{itemize}
\item ``$\inc{c_j} ; \goto \; k$'' (increment counter $j$ by 1 and
go to instruction $k$)
\item ``$\zeroc{c_j} \; \goto \; k : \dec{c_j} ; \goto \; l$'' (if the value
of the counter $j$ is 0, go to instruction $k$, else decrement the
counter by 1 and go to instruction $l$)
\item ``$\halt$'' (stop computation)
\end{itemize}

A configuration $C$ of the machine $\tcm$ is defined as a
triple $(i, c_1, c_2)$, where $i$ denotes the number of the
current instruction and $c_j$'s denote the value of the counters.
A configuration $C'$ is defined as the successor configuration
of $C$, written as $C \mapsto C'$ if $C'$ is the result of
executing the $i$-th instruction on $C$. If $\ins_i = \halt$, then
$C = (i, c_1, c_2)$ has no successor configuration.

The computation of $\tcm$ is the unique maximal sequence
$\rho = \rho(0) \rho(1) \ldots$ such that $\rho(i) \mapsto
\rho(i+1)$ and $\rho(0) = (1, 0, 0)$. Either $\rho$ is infinite,
or ends in $(i, c_1, c_2)$ such that $\ins_i = \halt$.

The \emph{halting problem} refers to determining whether
the computation of a two counter machine $\tcm$ is finite.
Both the halting problem and its dual, the non-halting problem
are undecidable.

We prove the undecidability of the type equality problem by
reducing the non-halting problem of an instance of the two
counter machine $\tcm$ to an instance of the type equality
problem.

Consider the instruction $\ins_i$. It can have one of the three
forms shown above. We will define a recursive type based
on the type of instruction. First, we define a type $T_{\inf} =
\ichoice{\ell : T_{\inf}}$ and $T_{\inf}' =
\ichoice{\ell' : T_{\inf}'}$.

\begin{itemize}
\item Case ($\ins_i = \inc{c_1} ; \goto \; k$) : In this case, we
define the type $T_i [c_1, c_2] = \ichoice{\m{inc}_1 : T_k
[c_1 + 1, c_2]}$ for all $c_1, c_2 \in \mathbb{N}$. We also
define $T_i' [c_1, c_2] = \ichoice{\m{inc}_1 : T_k' [c_1 + 1, c_2]}$
for all $c_1, c_2 \in \mathbb{N}$.

\item Case ($\ins_i = \inc{c_2} ; \goto \; k$) : In this case, we
define the type $T_i [c_1, c_2] = \ichoice{\m{inc}_2 : T_k
[c_1, c_2 + 1]}$ for all $c_1, c_2 \in \mathbb{N}$. We also
define $T_i' [c_1, c_2] = \ichoice{\m{inc}_2 : T_k' [c_1, c_2 + 1]}$
for all $c_1, c_2 \in \mathbb{N}$.

\item Case ($\ins_i = \zeroc{c_1} \; \goto \; k : \dec{c_j} ; \goto \; l$) :
In this case, we define the type $T_i [c_1, c_2] = \ichoice{\m{zero}_1 : 
\tassert{c_1 = 0} T_k[c_1, c_2], \m{dec}_1 : \tassert{c_1 > 0}
T_l [c_1 - 1, c_2]}$ for all $c_1, c_2 \in \mathbb{N}$. We also
define $T_i' [c_1, c_2] = \ichoice{\m{zero}_1 : 
\tassert{c_1 = 0} T_k' [c_1, c_2], \m{dec}_1 : \tassert{c_1 > 0}
T_l' [c_1 - 1, c_2]}$.

\item Case ($\ins_i = \zeroc{c_2} \; \goto \; k : \dec{c_j} ; \goto \; l$) :
In this case, we define the type $T_i [c_1, c_2] = \ichoice{\m{zero}_2 : 
\tassert{c_2 = 0} T_k [c_1, c_2], \m{dec}_2 : \tassert{c_2 > 0}
T_l [c_1, c_2 - 1]}$ for all $c_1, c_2 \in \mathbb{N}$. We also
define $T_i' [c_1, c_2] = \ichoice{\m{zero}_2 : 
\tassert{c_2 = 0} T_k' [c_1, c_2], \m{dec}_2 : \tassert{c_2 > 0}
T_l' [c_1, c_2 - 1]}$ for all $c_1, c_2 \in \mathbb{N}$

\item Case ($\ins_i = \halt$) : In this case, we define
$T_i [c_1, c_2] = T_{inf}$ for all $c_1, c_2 \in \mathbb{N}$. We
also define $T_i' [c_1, c_2] = T_{\inf}'$ for all $c_1, c_2 \in \mathbb{N}$.

\end{itemize}

Suppose, the counter $\tcm$ is initialized in the state $(1, m, n)$.
The type equality question we ask is $T_1 [m, n] \equiv T_1' [m, n]$.
The two types only differ at the halting instruction. If and only if
$\tcm$ halts, the first type sends infinitely many $\ell$ labels
while the second type sends infintely many $\ell'$ labels. While if
$\tcm$ does not halt, the two types produce exactly the same communication
behavior, since the halting instruction is never reached. Hence,
the two types are equal iff $\tcm$ does not halt.

Hence, the type equality problem can be reduced to the
non-halting problem.

\clearpage

\section{Formal Typing Rules}
\label{app:formal}
\subsection{Explicit Type System}
This section formalizes the algorithmic explicit type system, providing a
compact collection of all the rules. Figure~\ref{fig:app-explicit} presents
the algorithmic typing rules, Figure~\ref{fig:opsem} presents the
operational semantics and Figure~\ref{fig:conf-typing} presents the configuration
typing rules.

\subsection{Implicit Type System}
The rules of the implicit system are similar, except they do not change the
process terms for the rules involving proof constraints and potential. The
rest of the rules exactly match the explicit system. Figure~\ref{fig:implicit_rules}
contains the implicit typing rules that differ from the explicit system.
\begin{figure}[H]
\begin{mathpar}
  \infer[\tassertop R]
  {\vars \semi \cons \semi \D \ivdash{q} P :: (x : \tassert{\phi} A)}
  {\vars \semi \cons \proves \phi \and
  \vars \semi \cons \semi \D \ivdash{q} P :: (x : A)}
  \and
  \infer[\tassertop L]
  {\vars \semi \cons \semi \D, (x : \tassert{\phi} A)
  \ivdash{q} Q :: (z : C)}
  {\vars \semi \cons \land \phi \semi \D, (x : A)
  \ivdash{q} Q :: (z : C)}
  \and
  \infer[\tassumeop R]
  {\vars \semi \cons \semi \D \ivdash{q}
  P :: (x : \tassume{\phi} A)}
  {\vars \semi \cons \land \phi \semi \D
  \ivdash{q} P :: (x : A)}
  \and
  \infer[\tassumeop L]
  {\vars \semi \cons \semi \D, (x : \tassume{\phi} A)
  \ivdash{q} Q :: (z : C)}
  {\vars \semi \cons \proves \phi \and
  \cons \semi \D, (x : A) \ivdash{q} Q :: (z : C)}
  \and
  \infer[\paypot R]
  {\vars \semi \cons \semi \D \ivdash{q}
  P :: (x : \tpaypot{A}{r})}
  {\vars \semi \cons \proves q \geq r \and
  \vars \semi \cons \semi \D \ivdash{q-r} P :: (x : A)}
  \and
  \infer[\paypot L]
  {\vars \semi \cons \semi \D, (x : \tpaypot{A}{r}) \ivdash{q}
  Q :: (z : C)}
  {\vars \semi \cons \semi \D, (x : A) \ivdash{q+r} Q :: (z : C)}
  \\
  \infer[\getpot R]
  {\vars \semi \cons \semi \D \ivdash{q}
  P :: (x : \tgetpot{A}{r})}
  {\vars \semi \cons \semi \D \ivdash{q+r} P :: (x : A)}
  \and
  \infer[\getpot L]
  {\vars \semi \cons \semi \D, (x : \tgetpot{A}{r}) \ivdash{q}
  Q :: (z : C)}
  {\vars \semi \cons \proves q \geq r \\
  \vars \semi \cons \semi \D, (x : A) \ivdash{q-r} Q :: (z : C)}
\end{mathpar}
\caption{Implicit Typing Rules}
\label{fig:implicit_rules}
\end{figure}

\begin{figure}
  \begin{mathpar}
    \infer[\m{emp}]
    {(\cdot) \potconf{0} (\cdot) :: (\cdot)}
    {\mathstrut}
    \and
    \infer[\m{compose}]
    {\D \potconf{E + E'}(\config \; \config') :: \D''}
    {\D \potconf{E} \config :: \D' \and \Sg \semi \D' \potconf{E'} \config' :: \D''}
    \and
    \infer[\m{proc}]
    {\D, \D_1 \potconf{p + w}(\proc{x}{w, P}) :: (\D, (x:A) )}
    {\cdot \semi \cdot \semi \D_1 \entailpot{p} P :: (x : A)}
    \and
    \infer[\m{msg}]
    {\D, \D_1 \potconf{p + w}(\msg{x}{w, P}) :: (\D, (x:A) )}
    {\cdot \semi \cdot \semi \D_1 \entailpot{p} P :: (x : A)}
  \end{mathpar}
  \caption{Configuration Typing Rules}
  \label{fig:conf-typing}
\end{figure}
\begin{figure*}
  \begin{mathpar}
    \infer[\m{id}]
    {\vars \semi \cons \semi y : A \entailpot{q} \fwd{x}{y} :: (x : B)}
    {\vars \semi \cons \proves q = 0 \qquad
    \vars \semi \cons \semi \cdot \vdash A \equiv B}
    \and
    \inferrule*[right=$\m{spawn}$]
    {q = p[\overline{e_c}/\overline{v}] \and
    \vars \semi \cons \proves r \geq q \and
    \mathcal{X} \indv{v} : \overline{y_i : A_i} \entailpot{p} x' : A \in \Sg \\\\
    \vars \semi \cons \semi \cdot \vdash \D_1 \equiv_{\alpha}
    \overline{y_i : A_i [e_c/v]} \\\\
    \Sg \semi \vars \semi \cons \semi \D_2, (x :
    A[\overline{e_c} / \overline{v}]) \entailpot{r - q} Q_x :: (z : C)}
    {\Sg \semi \vars \semi \cons \semi \D_1, \D_2 \entailpot{r}
    (\ecut{x}{\mathcal{X} \indv{e_c}}{y}{Q_x}) :: (z : C)}
    \and
    \infer[\oplus R_k]
    {\vars \semi \cons \semi \D \entailpot{q} (\esendl{x}{k} \semi P) ::
    (x : \ichoice{\ell : A_\ell}_{\ell \in L})}
    {\vars \semi \cons \semi \D \entailpot{q} P :: (x : A_k) \quad (k \in L)}
    \and
    \infer[\oplus L]
    {\vars \semi \cons \semi \D, (x : \ichoice{\ell : A_\ell}_{\ell \in L}) \entailpot{q}
    \ecase{x}{\ell}{Q_\ell}_{\ell \in L} :: (z : C)}
    {\vars \semi \cons \semi \D, (x : A_\ell) \entailpot{q} Q_\ell :: (z : C) \quad (\forall \ell \in L)}
    \and
    \infer[\with R]
    {\vars \semi \cons \semi \D \entailpot{q} \ecase{x}{\ell}{P_\ell}_{\ell \in L} ::
    (x : \echoice{\ell : A_\ell}_{\ell \in L})}
    {\vars \semi \cons \semi \D \entailpot{q} P_\ell :: (x : A_\ell) \quad (\forall \ell \in L)}
    \and
    \infer[\with L_k]
    {\vars \semi \cons \semi \D, (x : \echoice{\ell : A_\ell}_{\ell \in L}) \entailpot{q}
    \esendl{x}{k} \semi Q :: (z : C)}
    {\vars \semi \cons \semi \D, (x : A_k) \entailpot{q} Q :: (z : C)}
    \and
    \infer[\lolli R]
    {\vars \semi \cons \semi \D \entailpot{q} (\erecvch{x}{y} \semi P_y) :: (x : A \lolli B)}
    {\vars \semi \cons \semi \D, (y : A) \entailpot{q} P_y :: (x : B)}
    \and
    \infer[\lolli L]
    {\vars \semi \cons \semi \D, (w : A_1), (x : A_2 \lolli B)
    \entailpot{q} (\esendch{x}{w} \semi Q) :: (z : C)}
    {\vars \semi \cons \semi \cdot \vdash A_1 \equiv A_2 \and
    \vars \semi \cons \semi \D, (x : B) \entailpot{q} Q :: (z : C)}
    \and
    \infer[\tensor R]
    {\vars \semi \cons \semi (w : A_1), \D \entailpot{q}
    \esendch{x}{w} \semi P :: (x : A_2 \tensor B)}
    {\vars \semi \cons \semi \cdot \vdash A_1 \equiv A_2 \qquad
    \vars \semi \cons \semi \D \entailpot{q} P :: (x : B)}
    \qquad
    \infer[\tensor L]
    {\vars \semi \cons \semi \D, (x : A \tensor B) \entailpot{q} \erecvch{x}{y} \semi Q_y :: (z : C)}
    {\vars \semi \cons \semi \D, (y : A), (x : B) \entailpot{q} Q_y :: (z : C)}
    \and
    \infer[\one R]
    {\vars \semi \cons \semi \cdot \entailpot{q}
    \eclose{x} :: (x : \one)}
    {\vars \semi \cons \proves q = 0}
    \and
    \infer[\one L]
    {\vars \semi \cons \semi \D, (x : \one) \entailpot{q} \ewait{x} \semi Q :: (z : C)}
    {\vars \semi \cons \semi \D \entailpot{q} Q :: (z : C)}
    \and
    \infer[\exists R]
    {\vars \semi \cons \semi \D \vdash \esendn{x}{e} \semi P :: (x : \texists{n} A)}
    {\vars \semi \cons \vdash e : \m{nat} \and
    \vars \semi \cons \semi \D \vdash P :: (x : A[e/n])}
    \and
    \infer[\exists L]
    {\vars \semi \cons \semi \D, (x : \texists{n} A) \vdash \erecvn{x}{n} \semi Q_n :: (z : C)}
    {\vars,n \semi \cons \semi \D, (x : A) \vdash Q_n :: (z : C)}
    \and
    \infer[\forall R]
    {\vars \semi \cons \semi \D \vdash \erecvn{x}{n} \semi P_n :: (x : \tforall{n} A)}
    {\vars,n \semi \cons \semi \D \vdash P_k :: (x : A)}
    \and
    \infer[\forall L]
    {\vars \semi \cons \semi \D, (x : \tforall{n} A) \vdash \esendn{x}{e} \semi Q :: (z : C)}
    {\vars \semi \cons \vdash e : \m{nat} \and
    \vars \semi \cons \semi \D, (x : A[e/n]) \vdash Q :: (z : C)}
    \and
    \infer[\m{work}]
    {\vars \semi \cons \semi \D \entailpot{q} \ework{c} \semi P :: (x : A)}
    {\vars \semi \cons \proves q \geq c \qquad
    \vars \semi \cons \semi \D \entailpot{q-c} P :: (x : A)}
    \\
    \infer[\paypot R]
    {\vars \semi \cons \semi \D \entailpot{q}
    \epay{x}{r_1} \semi P :: (x : \tpaypot{A}{r_2})}
    {\vars \semi \cons \proves q \geq r_1 = r_2 \qquad
    \vars \semi \cons \semi \D \entailpot{q-r_1} P :: (x : A)}
    \and
    \infer[\paypot L]
    {\vars \semi \cons \semi \D, (x : \tpaypot{A}{r_2}) \entailpot{q}
    \eget{x}{r_1} \semi Q :: (z : C)}
    {\vars \semi \cons \proves r_1 = r_2 \qquad
    \vars \semi \cons \semi \D, (x : A) \entailpot{q+r_1} Q :: (z : C)}
    \\
    \infer[\getpot R]
    {\vars \semi \cons \semi \D \entailpot{q}
    \eget{x}{r_1} \semi P :: (x : \tgetpot{A}{r_2})}
    {\vars \semi \cons \proves r_1 = r_2 \qquad
    \vars \semi \cons \semi \D \entailpot{q+r_1} P :: (x : A)}
    \and
    \infer[\getpot L]
    {\vars \semi \cons \semi \D, (x : \tgetpot{A}{r_2}) \entailpot{q}
    \epay{x}{r_1} \semi Q :: (z : C)}
    {\vars \semi \cons \proves q \geq r_1 = r_2 \qquad
    \vars \semi \cons \semi \D, (x : A) \entailpot{q-r_1} Q :: (z : C)}
    \\
    \infer[\tassertop R]
    {\vars \semi \cons \semi \D \entailpot{q}
    \eassert{x}{\phi} \semi P :: (x : \tassert{\phi'} A)}
    {\vars \semi \cons \proves \phi \qquad
    \vars \semi \cons \proves \phi' \qquad
    \vars \semi \cons \semi \D \entailpot{q} P :: (x : A)}
    \and
    \infer[\tassertop L]
    {\vars \semi \cons \semi \D, (x : \tassert{\phi'} A)
    \entailpot{q} \eassume{x}{\phi} \semi Q :: (z : C)}
    {\vars \semi \cons \land \phi' \proves \phi \qquad
    \vars \semi \cons \land \phi \semi \D, (x : A) \entailpot{q} Q :: (z : C)}
    \\
    \infer[\tassumeop R]
    {\vars \semi \cons \semi \D \entailpot{q}
    \eassume{x}{\phi} \semi P :: (x : \tassume{\phi'} A)}
    {\vars \semi \cons \land \phi' \proves \phi \qquad
    \vars \semi \G \land \phi \semi \D \entailpot{q} P :: (x : A)}
    \and
    \infer[\tassumeop L]
    {\vars \semi \cons \semi \D, (x : \tassume{\phi'} A) \entailpot{q}
    \eassert{x}{\phi} \semi Q :: (z : C)}
    {\vars \semi \cons \proves \phi \qquad
    \vars \semi \cons \proves \phi' \qquad
    \vars \semi \cons \semi \D, (x : A) \entailpot{q} Q :: (z : C)}
    \\
    \infer[\m{unsat}]
    {\vars \semi \cons \semi \D
    \entailpot{q} \imposs :: (x : A)}
    {\vars \semi \cons \proves \bot}
  \end{mathpar}
  \caption{Explicit Typing Rules}
  \label{fig:app-explicit}
\end{figure*}
\begin{figure*}
\centering
\[
\begin{array}{lll}
(\m{id}^+C) & \m{msg}(d, w, M), \m{proc}(c, w', s, c \leftarrow d) \;\mapsto\; \m{msg}(c, w+w', M[c/d]) & \\
(\m{id}^-C) & \m{proc}(c, w, s, c \leftarrow d), \m{msg}(e, w', M(c)) \;\mapsto\; \m{msg}(e, w+w', M(c)[d/c]) & \\[1ex]
(\m{def}C) & \m{proc}(c, w, \ecut{x}{f \indv{e}}{\overline{y}}{Q}) \; \mapsto \; \m{proc}(a, 0, P_f[a/x, \overline{y}/\D_f, \overline{e}/\overline{n}]), \; \m{proc}(c, w, Q[a/x]) & \mbox{($a$ fresh)} \\[1ex]
({\oplus}S) & \m{proc}(c, w, c.k \semi P) \;\mapsto\; \m{proc}(c', w, P[c'/c]), \m{msg}(c, 0, c.k \semi c \leftarrow c') & \mbox{($c'$ fresh)} \\
({\oplus}C) & \m{msg}(c, w', c.k \semi c \leftarrow c'), \m{proc}(d, w, \m{case}\;c\;(\ell \Rightarrow Q_\ell)_{\ell \in L})
\;\mapsto\; \m{proc}(d, w+w', Q_k[c'/c]) \\[1ex]
({\with}S) & \m{proc}(d, w, c.k \semi Q) \;\mapsto\; \m{msg}(c', 0, c.k \semi c' \leftarrow c), \m{proc}(d, w, Q[c'/c]) & \mbox{($c'$ fresh)} \\
({\with}C) & \m{proc}(c, w, \m{case}\;c\;(\ell \Rightarrow Q_\ell)_{\ell \in L}), \m{msg}(c', w', c.k \semi c' \leftarrow c)
\;\mapsto\; \m{proc}(c', w+w', Q_k[c'/c]) \\[1ex]
({\one}S) & \m{proc}(c, w, \m{close}\; c) \;\mapsto\; \m{msg}(c, w, \m{close}\; c) \\
({\one}C) & \m{msg}(c, w', \m{close}\; c), \m{proc}(d, w, \m{wait}\; c \semi Q) \;\mapsto\; \m{proc}(d, w+w', Q)\\[1ex]
({\tensor}S) & \m{proc}(c, w, \m{send}\; c\; d \semi P) \;\mapsto\; \m{proc}(c', w, P[c'/c]), \m{msg}(c, 0, \m{send}\; c\; d \semi c \leftarrow c')
& \mbox{($c'$ fresh)} \\
({\tensor}C) & \m{msg}(c, w', \m{send}\; c\; d \semi c \leftarrow c'), \m{proc}(e, w, x \leftarrow \m{recv}\; c \semi Q)
\;\mapsto\; \m{proc}(e, w+w', Q[c', d/c, x]) \\[1ex]
({\lolli}S) & \m{proc}(e, w, \m{send}\; c\; d \semi Q) \;\mapsto\; \m{msg}(c', 0, \m{send}\; c\; d \semi c' \leftarrow c), \m{proc}(e, w, Q[c'/c]) & \mbox{($c'$ fresh)} \\
({\lolli}C) & \m{proc}(c, w, x \leftarrow \m{recv}\; x \semi P), \m{msg}(c', w', \m{send}\; c\; d \semi c' \leftarrow c)
\;\mapsto\; \m{proc}(c', w+w', P[c', d/c, x]) \\[1ex]
({\exists}S) & \proc{c}{w, \esendn{c}{e} \semi P} \;\mapsto\; \m{proc}(c', w, P[c'/c]), \; \msg{c}{0, \esendn{c}{e} \semi \fwd{c}{c'}} & \mbox{($c'$ fresh)} \\
({\exists}C) & \msg{c}{w', \esendn{c}{e} \semi \fwd{c}{c'}}, \; \m{proc}(d, w, \erecvn{c}{n} \semi Q) \;\mapsto\; \proc{d}{w+w', Q[e/n][c'/c]} \\[1ex]
({\forall}S) & \proc{d}{w, \esendn{c}{e} \semi P} \;\mapsto\; \m{msg}(c', 0, \esendn{c}{e} \semi \fwd{c'}{c}), \; \proc{d}{w, P[c'/c]} & \mbox{($c'$ fresh)} \\
({\forall}C) & \proc{d}{w, \erecvn{c}{n} \semi Q}, \m{msg}(c', w', \esendn{c}{e} \semi \fwd{c'}{c}) \; \mapsto\; \proc{d}{w+w', Q[e/n][c'/c]} \\[1ex]
({\tassertop}S) & \proc{c}{w, \eassert{c}{\phi} \semi P} \;\mapsto\; \m{proc}(c', w, P[c'/c]), \; \msg{c}{0, \eassert{c}{\phi} \semi \fwd{c}{c'}} & \mbox{($c'$ fresh)} \\
({\tassertop}C) & \msg{c}{w', \eassert{c}{\phi_1} \semi \fwd{c}{c'}}, \; \m{proc}(d, w, \eassume{c}{\phi_2} \semi Q) \;\mapsto\; \proc{d}{w+w', Q[c'/c]} \\[1ex]
({\tassumeop}S) & \proc{d}{w, \eassert{c}{\phi} \semi P} \;\mapsto\; \m{msg}(c', 0, \eassert{c}{\phi} \semi \fwd{c'}{c}), \; \proc{d}{w, P[c'/c]} & \mbox{($c'$ fresh)} \\
({\tassumeop}C) & \proc{d}{w, \eassume{c}{\phi_1} \semi Q}, \; \m{msg}(c', w', \eassert{c}{\phi_2} \semi \fwd{c'}{c}) \;\mapsto\; \proc{d}{w+w', Q[c'/c]} \\[1ex]
({\paypot}S) & \proc{c}{w, \epay{c}{r} \semi P} \;\mapsto\; \proc{c'}{w, P[c'/c]}, \; \msg{c}{0, \epay{c}{r} \semi \fwd{c}{c'}} & \mbox{($c'$ fresh)} \\
({\paypot}C) & \msg{c}{w', \epay{c}{r} \semi \fwd{c}{c'}}, \; \m{proc}(d, w, \eget{c}{r} \semi Q) \;\mapsto\; \proc{d}{w+w', Q[c'/c]} \\[1ex]
({\getpot}S) & \proc{d}{w, \epay{c}{r} \semi P} \;\mapsto\; \m{msg}(c', 0, \epay{c}{r} \semi \fwd{c'}{c}), \; \proc{d}{w, P[c'/c]} & \mbox{($c'$ fresh)} \\
({\getpot}C) & \proc{c}{w, \eget{c}{r} \semi Q}, \; \m{msg}(c', w', \epay{c}{r} \semi \fwd{c'}{c}) \;\mapsto\; \proc{c'}{w+w', Q[c'/c]} \\[1ex]
({\m{work}}) & \proc{c}{w, \ework{r} \semi P} \; \mapsto \; \proc{c}{w+r, P}
\end{array}
\]
\caption{Basic and Refined Operational Semantics}
\label{fig:opsem}
\end{figure*}

\clearpage

\section{Forcing Calculus}
\label{app:forcing}
We define the \emph{forcing calculus}, with the following judgment
$\vars \semi \cons \semi \D^- \semi \W \entailpot{q} P :: (x : A)$. 
Here, $\cons$ denotes the arithmetic constraints that are currently
valid, $\D^-$ stores the negative propositions, while $\W$ is
an ordered context storing the remaining propositions. This calculus
applies the $\tassumeop R, \tassertop L, \getpot R, \paypot L$ rules
eagerly, and the $\tassertop R, \tassumeop L, \paypot R, \getpot L$
rules lazily. Later, we will establish
that this forcing calculus is equivalent to the implicit type system.
First, we define the grammar of types.
\[
  \begin{array}{lcl}
    A^+ & ::= & S \mid \tassert{\phi} A^+ \mid \tpaypot{A^+}{r} \\
    A^- & ::= & S \mid \tassume{\phi} A^- \mid \tgetpot{A^-}{r} \\
    A & ::= & A^+ \mid A^- \\
    S & ::= & \ichoice{\ell : A}_{\ell \in L}
    \mid \echoice{\ell : A}_{\ell \in L}
    \mid A \tensor A \mid A \lolli A \\
    & & \mid \texists{n} A \mid \tforall{n} A
  \end{array}
\]
Here, $S$ denotes a structural type. Next, we define the rules for
eagerly applying the $\tassumeop R, \tassertop L, \getpot R,
\paypot L$ rules.
\begin{mathpar}
  \infer[\tassumeop R]
  {\vars \semi \cons \semi \D^- \semi \W \entailpot{q}
  P :: (x : \tassume{\phi} A^-)}
  {\vars \semi \cons \land \phi \semi \D^- \semi \W
  \entailpot{q} P :: (x : A^-)}
  \and
  \infer[\getpot R]
  {\vars \semi \cons \semi \D^- \semi \W \entailpot{q}
  P :: (x : \tgetpot{A^-}{r})}
  {\vars \semi \cons \semi \D^- \semi \W
  \entailpot{q+r} P :: (x : A^-)}
  \and
  \infer[\tassertop L]
  {\vars \semi \cons \semi \D^- \semi \W \cdot
  (x : \tassert{\phi} A^+) \entailpot{q} P :: (z : C^+)}
  {\vars \semi \cons \land \phi \semi \D^- \semi
  \W \cdot (x : A^+) \entailpot{q} P :: (z : C^+)}
  \and
  \infer[\paypot L]
  {\vars \semi \cons \semi \D^- \semi \W \cdot
  (x : \tpaypot{A^+}{r}) \entailpot{q} P :: (z : C^+)}
  {\vars \semi \cons \semi \D^- \semi
  \W \cdot (x : A^+) \entailpot{q+r} P :: (z : C^+)}
\end{mathpar}
If a negative type is encountered in the ordered context,
it is considered as stable and moved to $\D^-$.
\begin{mathpar}
  \infer[\m{move}]
  {\vars \semi \cons \semi \D^- \semi \W \cdot
  (x : A^-) \entailpot{q} P :: (z : C^+)}
  {\vars \semi \cons \semi \D^-, (x : A^-) \semi
  \W \entailpot{q} P :: (z : C^+)}
\end{mathpar}
The invertible rules are applied until we reach a stable sequent
defined using the normal form $\vars \semi \cons \semi \D^- \semi
\cdot \entailpot{q} P :: (x : A^+)$.
At this point, we case analyze on the program $P$ to decide which channel to force.
If $P$ is a forward, we first force right,
\begin{mathpar}
  \infer[\m{id}{-}F_R]
  {\vars \semi \cons \semi (y : B^-) \semi \cdot
  \entailpot{q} \fwd{x}{y} :: (x : A^+)}
  {\vars \semi \cons \semi (y : B^-) \semi \cdot
  \entailpot{q} \fwd{x}{y} :: \focus{x : A^+}}
\end{mathpar}
then force left.
\begin{mathpar}
  \infer[\m{id}{-}F_L]
  {\vars \semi \cons \semi (y : B^-) \semi \cdot
  \entailpot{q} \fwd{x}{y} :: \focus{x : S}}
  {\vars \semi \cons \semi \focus{y : B^-} \semi
  \cdot \entailpot{q} \fwd{x}{y} :: (x : S)}
\end{mathpar}
While forced, we apply the lazy rules $\tassertop R, \tassumeop L,
\paypot R, \getpot L$.
\begin{mathpar}
  \infer[\tassertop R]
  {\vars \semi \cons \semi \D^- \semi \cdot \entailpot{q} P :: \focus{x : \tassert{\phi} A^+}}
  {\vars \semi \cons \proves \phi \and
  \vars \semi \cons \semi \D^- \semi \cdot \entailpot{q} P :: \focus{x : A^+}}
  \and
  \infer[\paypot R]
  {\vars \semi \cons \semi \D^- \semi \cdot \entailpot{q} P :: \focus{x : \tpaypot{A^+}{r}}}
  {\vars \semi \cons \proves q \geq r &
  \vars \semi \cons \semi \D^- \semi \cdot \entailpot{q-r} P :: \focus{x : A^+}}
  \and
  \infer[\tassumeop L]
  {\vars \semi \cons \semi \D^-, \focus{x : \tassume{\phi} A^-} \semi \cdot \entailpot{q} P :: (z : C^+)}
  {\vars \semi \cons \proves \phi \and
  \vars \semi \cons \semi \D^-, \focus{x : A^-} \semi \cdot \entailpot{q} P :: (z : C^+)}
  \and
  \infer[\getpot L]
  {\vars \semi \cons \semi \D^-, \focus{x : \tgetpot{A^-}{r}} \semi \cdot \entailpot{q} P :: (z : C^+)}
  {\vars {\semi} \cons \proves q {\geq} r &
  \vars \semi \cons \semi \D^-, \focus{x : A^-} \semi \cdot \entailpot{q-r} P :: (z : C^+)}
\end{mathpar}
Finally, the forward is typechecked after the forcing.
\begin{mathpar}
  \infer[\m{id}]
  {\vars \semi \cons \semi \focus{y : S} \semi \cdot \entailpot{q} \fwd{x}{y} :: (x : S)}
  {\vars \semi \cons \proves q = 0}
\end{mathpar}
On encountering a spawn, we force each channel used in the context
in order. If a forced channel becomes structural, we lose force on that
channel, and force the next. These are rules $\m{spawn}{-}F_n$ and
$\m{spawn}{-}F_i$ in Figure~\ref{fig:spawn-forcing}.
\begin{figure*}
  \begin{mathpar}
    \inferrule*[Right=$\m{spawn}{-}F_n$]
    {\vars \semi \cons \semi \D^-, \focus{y_n : A^-} \semi \cdot \entailpot{q}
    (\ecut{x}{\mathcal{X} \indv{e_c}}{y_1 \ldots y_n}{Q_x}) :: (z : C^+)}
    {\vars \semi \cons \semi \D^-, (y_n : A^-) \semi \cdot \entailpot{q}
    (\ecut{x}{\mathcal{X} \indv{e_c}}{y_1 \ldots y_n}{Q_x}) :: (z : C^+)}
    \and
    \inferrule*[Right=$\m{spawn}{-}F_i$]
    {\vars \semi \cons \semi \D^-, \focus{y_i : A^-}, (y_{i+1} : S) \semi \cdot \entailpot{q}
    (\ecut{x}{\mathcal{X} \indv{e_c}}{y_1 \ldots y_n}{Q_x}) :: (z : C^+)}
    {\vars \semi \cons \semi \D^-, (y_i : A^-), \focus{y_{i+1} : S} \semi \cdot \entailpot{q}
    (\ecut{x}{\mathcal{X} \indv{e_c}}{y_1 \ldots y_n}{Q_x}) :: (z : C^+)}
    \and
    \inferrule*[Right=$\m{spawn}$]
    {
      q = p[\overline{e_c}/\overline{v}] \and
      \vars \semi \cons \proves r \geq q \and
      \mathcal{X} \indv{v} : \overline{y_i' : A_i} \entailpot{p} x' : A \in \Sg \\\\
      \vars \semi \cons \proves \overline{y_i' : B_i} \equiv_{\alpha} \D_1^-, y_1 : S \and
      \Sg \semi \vars \semi \cons \semi \D_2^- \semi (x :
      A[\overline{e_c} / \overline{v}]) \entailpot{r - q} Q_x :: (z : C)
    }
    {\Sg \semi \vars \semi \cons \semi \D_1^-, \focus{y_1 : S}, \D_2^- \semi \cdot \entailpot{r}
    (\ecut{x}{\mathcal{X} \indv{e_c}}{y_1 \ldots y_n}{Q_x}) :: (z : C^+)}
  \end{mathpar}
  \caption{Spawn rules in forcing calculus}
  \label{fig:spawn-forcing}
\end{figure*}
Once the first channel is forced, we typecheck the continuation,
which could possibly be an unstable sequent, as described in $\m{spawn}$
rule in Figure~\ref{fig:spawn-forcing}.
For the structural types, we force the channel that is being communicated
on.
\begin{mathpar}
  \infer[\oplus F_R]
  {\vars \semi \cons \semi \D^- \semi \cdot \entailpot{q} (\esendl{x}{k} \semi P) ::
  (x : A^+)}
  {\vars \semi \cons \semi \D^- \semi \cdot \entailpot{q} (\esendl{x}{k} \semi P) ::
  \focus{x : A^+}}
  \and
  \infer[\oplus F_L]
  {\vars \semi \cons \semi \D^-, (x : A^-) \semi \cdot \entailpot{q}
  \ecase{x}{\ell}{Q_\ell} :: (z : C^+)}
  {\vars \semi \cons \semi \D^-, \focus{x : A^-} \semi \cdot \entailpot{q}
  \ecase{x}{\ell}{Q_\ell} :: (z : C^+)}
  \and
  \infer[\with F_R]
  {\vars \semi \cons \semi \D^- \semi \cdot \entailpot{q} \ecase{x}{\ell}{P_\ell}_{\ell \in L} ::
  (x : A^+)}
  {\vars \semi \cons \semi \D^- \semi \cdot \entailpot{q} \ecase{x}{\ell}{P_\ell}_{\ell \in L} ::
  \focus{x : A^+}}
  \and
  \infer[\with F_L]
  {\vars \semi \cons \semi \D^-, (x : A^-) \semi \cdot \entailpot{q}
  \esendl{x}{k} \semi Q :: (z : C^+)}
  {\vars \semi \cons \semi \D^-, \focus{x : A^-} \semi \cdot \entailpot{q}
  \esendl{x}{k} \semi Q :: (z : C)}
\end{mathpar}
For the tensor and lolli rules, we force both the channel
involved in the send while sending, and while receiving,
we place the new channel in the ordered context.
\begin{mathpar}
  \infer[\tensor F^1_R]
  {\vars \semi \cons \semi \D^-, (w : A^-) \semi \cdot \entailpot{q}
  \esendch{x}{w} \semi P :: (x : B^+)}
  {\vars \semi \cons \semi \D^-, \focus{w : A^-} \semi \cdot \entailpot{q}
  \esendch{x}{w} \semi P :: (x : B^+)}
  \and
  \infer[\tensor F^2_R]
  {\vars \semi \cons \semi \D^-, \focus{w : S} \semi \cdot \entailpot{q}
  \esendch{x}{w} \semi P :: (x : A^+)}
  {\vars \semi \cons \semi \D^-, (w : S) \semi \cdot \entailpot{q}
  \esendch{x}{w} \semi P :: \focus{x : A^+}}
  \and
  \infer[\tensor F_L]
  {\vars \semi \cons \semi \D^-, (x : A^-) \semi \cdot \entailpot{q} \erecvch{x}{y} \semi Q_y :: (z : C^+)}
  {\vars \semi \cons \semi \D^-, \focus{x : A^-} \semi \cdot \entailpot{q} \erecvch{x}{y} \semi Q_y :: (z : C^+)}
  \and
  \infer[\lolli F_R]
  {\vars \semi \cons \semi \D^- \semi \cdot \entailpot{q}
  \erecvch{x}{y} \semi P_y :: (x : A^+)}
  {\vars \semi \cons \semi \D^- \semi \cdot \entailpot{q}
  \erecvch{x}{y} \semi P_y :: \focus{x : A^+}}
  \and
  \infer[\lolli F^1_L]
  {\vars \semi \cons \semi \D^-, (w : A^-) \semi \cdot \entailpot{q} \esendch{x}{w} \semi Q :: (z : C^+)}
  {\vars \semi \cons \semi \D^-, \focus{w : A^-} \semi \cdot \entailpot{q} \esendch{x}{w} \semi Q :: (z : C^+)}
  \and
  \inferrule*[right=$\lolli F^2_L$]
  {\vars \semi \cons \semi \D^-, (w : S), \focus{x : A^-} \semi \cdot \entailpot{q} \hspace{6em} \\ \hspace{6em} \esendch{x}{w} \semi Q :: (z : C^+)}
  {\vars \semi \cons \semi \D^-, \focus{w : S}, (x : A^-) \semi \cdot \entailpot{q} \hspace{6em} \\ \hspace{6em} \esendch{x}{w} \semi Q :: (z : C^+)}
\end{mathpar}
The rules for $\one$ are analogous.
\begin{mathpar}
  \infer[\one F_R]
  {\vars \semi \cons \semi \cdot \semi \cdot \entailpot{q}
  \eclose{x} :: (x : A^+)}
  {\vars \semi \cons \semi \cdot \semi \cdot \entailpot{q}
  \eclose{x} :: \focus{x : A^+}}
  \and
  \infer[\one F_L]
  {\vars \semi \cons \semi \D^-, (x : A^-) \semi \cdot \entailpot{q} \ewait{x} \semi Q :: (z : C^+)}
  {\vars \semi \cons \semi \D^-, \focus{x : A^-} \semi \cdot \entailpot{q} \ewait{x} \semi Q :: (z : C^+)}
\end{mathpar}
Since the $\forall$ and $\exists$ operators are structural,
they follow the same structure.
\begin{mathpar}
  \infer[\exists F_R]
  {\vars \semi \cons \semi \D^- \semi \cdot \entailpot{q} \esendn{x}{t} \semi P :: (x : A^+)}
  {\vars \semi \cons \semi \D^- \semi \cdot \entailpot{q} \esendn{x}{t} \semi P :: \focus{x : A^+}}
  \and
  \infer[\exists F_L]
  {\vars \semi \cons \semi \D^-, (x : A^-) \semi \cdot \entailpot{q} \erecvn{x}{k} \semi Q :: (z : C^+)}
  {\vars \semi \cons \semi \D^-, \focus{x : A^-} \semi \cdot \entailpot{q} \erecvn{x}{k} \semi Q :: (z : C^+)}
\end{mathpar}
\begin{mathpar}
  \infer[\forall F_R]
  {\vars \semi \cons \semi \D^- \semi \cdot \entailpot{q} \erecvn{x}{k} \semi P :: (x : A^+)}
  {\vars \semi \cons \semi \D^- \semi \cdot \entailpot{q} \erecvn{x}{k} \semi P :: \focus{x : A^+}}
  \and
  \infer[\forall F_L]
  {\vars \semi \cons \semi \D^-, (x : A^-) \semi \cdot \entailpot{q} \esendn{x}{t} \semi Q :: (z : C^+)}
  {\vars \semi \cons \semi \D^-, \focus{x : A^-} \semi \cdot \entailpot{q} \esendn{x}{t} \semi Q :: (z : C^+)}
\end{mathpar}
Finally, we provide the strutural rules of forcing calculus
in Figure~\ref{fig:forcing-structural}.
\begin{figure*}
  \begin{mathpar}
    \infer[\oplus R_k]
    {\vars \semi \cons \semi \D^- \semi \cdot \entailpot{q} (\esendl{x}{k} \semi P) ::
    \focus{x : \ichoice{\ell : A_\ell}_{\ell \in L}}}
    {\vars \semi \cons \semi \D^- \semi \cdot \entailpot{q} P :: (x : A_k)}
    \and
    \infer[\oplus L]
    {\vars \semi \cons \semi \D^-, \focus{x : \ichoice{\ell : A_\ell}_{\ell \in L}} \semi \cdot \entailpot{q}
    \ecase{x}{\ell}{Q_\ell}_{\ell \in L} :: (z : C^+)}
    {\vars \semi \cons \semi \D^- \semi (x : A_\ell) \entailpot{q} Q_\ell :: (z : C^+) \quad (\forall \ell \in L)}
  \end{mathpar}

  \begin{mathpar}
    \infer[\with R]
    {\vars \semi \cons \semi \D^- \semi \cdot \entailpot{q} \ecase{x}{\ell}{P_\ell}_{\ell \in L} ::
    \focus{x : \echoice{\ell : A_\ell}_{\ell \in L}}}
    {\vars \semi \cons \semi \D^- \semi \cdot \entailpot{q} P_\ell ::
    (x : A_\ell)}
    \and
    \infer[\with L_k]
    {\vars \semi \cons \semi \D^-, \focus{x : \echoice{\ell : A_\ell}_{\ell \in L}} \entailpot{q}
    \esendl{x}{k} \semi Q :: (z : C^+)}
    {\vars \semi \cons \semi \D^- \semi (x : A_k) \entailpot{q}
    Q :: (z : C^+)}
  \end{mathpar}

  \begin{mathpar}
    \infer[\tensor R]
    {\vars \semi \cons \semi \D^-, (w : S) \semi \cdot \entailpot{q}
    \esendch{x}{w} \semi P :: \focus{x : S \tensor B}}
    {\vars \semi \cons \semi \D^- \semi \cdot \entailpot{q}
    P :: (x : B)}
    \and
    \infer[\tensor L]
    {\vars \semi \cons \semi \D^-, \focus{x : S \tensor B} \semi \cdot \entailpot{q} \erecvch{x}{y} \semi Q_y :: (z : C^+)}
    {\vars \semi \cons \semi \D^- \semi (x : B) \cdot (y : S) \entailpot{q} Q_y :: (z : C^+)}
  \end{mathpar}

  \begin{mathpar}
    \infer[\lolli R]
    {\vars \semi \cons \semi \D^- \semi \cdot \entailpot{q}
    \erecvch{x}{y} \semi P_y :: \focus{x : S \lolli B}}
    {\vars \semi \cons \semi \D^- \semi (y : S) \entailpot{q}
    P_y :: (x : B)}
    \and
    \infer[\lolli L]
    {\vars \semi \cons \semi \D^-, (w : S), \focus{x : S \lolli B} \semi \cdot \entailpot{q} \esendch{x}{w} \semi Q :: (z : C^+)}
    {\vars \semi \cons \semi \D^- \semi (x : B) \entailpot{q} Q :: (z : C^+)}
  \end{mathpar}

  \begin{mathpar}
    \infer[\exists R]
    {\vars \semi \cons \semi \D^- \semi \cdot \entailpot{q} \esendn{x}{e} \semi P :: \focus{x : \texists{n} A}}
    {\vars \semi \cons \vdash e : \m{nat} \and
    \vars \semi \cons \semi \D^- \semi \cdot \entailpot{q} P :: (x : A[e/n])}
    \and
    \infer[\exists L]
    {\vars \semi \cons \semi \D^-, \focus{x : \texists{n} A} \semi \cdot \entailpot{q} \erecvn{x}{n} \semi Q_n :: (z : C^+)}
    {\vars,n \semi \cons \semi \D^- \semi (x : A) \entailpot{q} Q_n :: (z : C^+)}
    \and
    \infer[\forall R]
    {\vars \semi \cons \semi \D^- \semi \cdot \entailpot{q} \erecvn{x}{n} \semi P_k :: \focus{x : \tforall{n} A}}
    {\vars,n \semi \cons \semi \D^- \semi \cdot \entailpot{q} P_n :: (x : A)}
    \and
    \infer[\forall L]
    {\vars \semi \cons \semi \D^-, \focus{x : \tforall{n} A} \semi \cdot \entailpot{q} \esendn{x}{e} \semi Q :: (z : C^+)}
    {\vars \semi \cons \vdash e : \m{nat} \and
    \vars \semi \cons \semi \D \semi (x : A[e/n]) \entailpot{q} Q :: (z : C^+)}
  \end{mathpar}

  \begin{mathpar}
    \infer[\one R]
    {\vars \semi \cons \semi \cdot \semi \cdot \entailpot{q} \eclose{x} :: \focus{x : \one}}
    {\vars \semi \cons \proves q = 0}
    \and
    \infer[\one L]
    {\vars \semi \cons \semi \D^-, \focus{x : \one} \semi \cdot \entailpot{q} \ewait{x} \semi Q :: (z : C^+)}
    {\vars \semi \cons \semi \D^- \semi \cdot \entailpot{q} Q :: (z : C^+)}
  \end{mathpar}
  \caption{Structural rules in forcing calculus}
  \label{fig:forcing-structural}
\end{figure*}

\begin{lemma}[Invertibility]\label{lem:invertible}
  The rules $\tassumeop R$, $\tassertop L, \getpot R, \paypot L$
  and $\m{move}$ are invertible.
\end{lemma}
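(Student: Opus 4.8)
The plan is to establish invertibility in the usual proof-theoretic sense: for each of the five rules, if its conclusion is derivable in the forcing calculus, then so is its premise. This lemma is what licenses the eager application of the inversion-phase rules in the completeness direction of Theorem~\ref{thm:forcing}. The key observation that makes it essentially immediate is that the forcing calculus is \emph{phased}, so the shape of the conclusion sequent pins down the last rule of any derivation of it. Concretely, $\tassumeop R$ and $\getpot R$ are the only rules whose conclusion offers a negative non-structural type ($\tassume{\phi}A^-$ or $\tgetpot{A^-}{r}$): every forcing rule and every structural rule requires a stable sequent with empty ordered context $\W=\cdot$, and the other inversion rules $\tassertop L$, $\paypot L$, $\m{move}$ all offer a \emph{positive} type $C^+$, so none of them matches. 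Dually, $\tassertop L$, $\paypot L$, and $\m{move}$ are the only rules that fire when the rightmost entry of the ordered context $\W$ is, respectively, a constraint type $\tassert{\phi}A^+$, a payment type $\tpaypot{A^+}{r}$, or a negative type $A^-$, since all other inversion rules carry $\W$ along unchanged while the stable-phase rules need $\W=\cdot$.

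First I would fix the convention that $\W$ is processed from its right end (needed to make the $\m{move}$/$\tassertop L$/$\paypot L$ cases unambiguous) and then make the ``unique rule'' claim precise: for each of the five conclusion shapes, run through all the rule schemas of the forcing calculus (Appendix~\ref{app:forcing}) and check by inspection that only the rule in question — and possibly $\m{unsat}$ — has a matching conclusion. This is a finite case analysis on rule schemas, not an induction on derivations: once the last rule of a derivation of the conclusion is forced to be the rule itself, its premise is literally a subderivation, so we are done.

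The remaining case is that the last rule is $\m{unsat}$, i.e.\ the process expression is $\eimposs$ and $\vars \semi \cons \proves \bot$. Here I would simply re-derive the premise by $\m{unsat}$: for $\tassumeop R$ and $\tassertop L$ the premise's constraint is $\cons \land \phi$, which still entails $\bot$; for $\getpot R$, $\paypot L$, and $\m{move}$ the constraint $\cons$ is unchanged, and although the potential annotation shifts by $r$ (or the antecedent $x:A^-$ is relocated from $\W$ into $\D^-$), $\m{unsat}$ constrains neither the potential nor the context, so $\eimposs$ still type-checks against the premise sequent.

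I do not expect a serious obstacle. The only point that needs care is the enumeration step: being genuinely exhaustive about which forcing-calculus rules could conclude a given sequent, and confirming that the phase discipline — stable versus unstable sequents, and right-inversion taking precedence over $\W$-inversion whenever the offered type is negative and non-structural — really does leave no ambiguity. Everything else (the $\m{unsat}$ re-derivation, and checking that the premises keep the same process expression $P$ and the same offered type up to the single constructor being peeled off) is routine.
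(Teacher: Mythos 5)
Your argument is correct. The paper states Lemma~\ref{lem:invertible} without proof, so there is nothing to compare against directly; but the route you take --- observing that the forcing calculus is phased and therefore the shape of the conclusion sequent forces the last rule of any derivation, making the premise a literal subderivation --- is the natural and sound one here. The enumeration you propose does go through: the forcing and structural rules all require an empty ordered context, the left-inversion rules $\tassertop L$, $\paypot L$, $\m{move}$ all require a \emph{positive} succedent and are distinguished by the top constructor of the rightmost entry of $\W$, and the right-inversion rules $\tassumeop R$, $\getpot R$ are the only ones whose conclusion offers a negative non-structural type. Note that this disjointness relies essentially on the polarization restriction (no types of alternating polarity such as $\tassume{\phi}\tassert{\psi}A$), which is exactly the ambient hypothesis of Theorem~\ref{thm:forcing}, so you should flag that dependence explicitly. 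Two minor points: the forcing calculus as presented in Appendix~\ref{app:forcing} does not actually include an $\m{unsat}$ rule, so that case of your analysis is vacuous (though your treatment of it would be the right one if such a rule were added to handle $\eimposs$); and the alternative, more robust proof would be the usual permutation induction over derivations, which you would need if the calculus were not syntax-directed --- your shortcut buys brevity at the cost of being tied to the determinacy of this particular rule set.
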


\begin{definition}
  We define the 4 notations below.
  \[
  \begin{array}{rclcrcl}
    \phiR{\tassume{\phi} A} & = & \phi \land \phiR{A} & &
    \phiR{A^+} & = & \true \\
    [0.3em]
    \tpR{\tassume{\phi} A} & = & \tpR{A} & &
    \tpR{A^+} & = & A^+ \\
    [0.3em]
    \phiL{\tassert{\phi} A} & = & \phi \land \phiL{A} & &
    \phiL{A^-} & = & A^- \\
    [0.3em]
    \tpL{\tassert{\phi} A} & = & \tpL{A} & &
    \tpL{A^-} & = & A^-
  \end{array}
  \]
\end{definition}

\begin{lemma}[Move Left]\label{lem:move-left-ctx}
  If $\vars \semi \cons \semi \D^- \semi \W \cdot
  (x : A) \entailpot{q} P :: (z : C^+)$, then
  $\vars \semi \cons \land \phiL{A}
  \semi \D^-, (x : \tpL{A}) \semi \W \entailpot{q} P :: (z : C^+)$.
\end{lemma}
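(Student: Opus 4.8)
The plan is to prove this by a straightforward induction on the structure of the type $A$, measured by the number of leading $\tassert{\cdot}$ constructors before the first negative connective. Recall that a type sitting at the far right of the ordered context has the shape $\tassert{\phi_1}\ldots\tassert{\phi_k} A^-$ for some $k \ge 0$ and some negative type $A^-$ (where a structural type $S$ is in particular a negative type), that $\phiL{A}$ collects the conjunction $\phi_1 \land \cdots \land \phi_k$, and that $\tpL{A}$ discards those prefixes, leaving $A^-$; in particular, reading the clause for a negative type as $\phiL{A^-} = \true$, we have $\phiL{A^-} = \true$ and $\tpL{A^-} = A^-$ when there is no assertion prefix. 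So the claim is just that stripping the assertion prefixes of $A$ (moving each $\phi_i$ into the constraint context) followed by relocating the residual negative type into $\D^-$ preserves derivability.

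For the base case $A = A^-$, the hypothesis $\vars \semi \cons \semi \D^- \semi \W \cdot (x : A^-) \entailpot{q} P :: (z : C^+)$ is exactly an instance of the conclusion of the $\m{move}$ rule of the forcing calculus; by invertibility of $\m{move}$ (Lemma~\ref{lem:invertible}) the corresponding premise $\vars \semi \cons \semi \D^-, (x : A^-) \semi \W \entailpot{q} P :: (z : C^+)$ is derivable, and since $\phiL{A^-} = \true$, $\tpL{A^-} = A^-$, and $\cons$ entails the same propositions as $\cons \land \true$, this is precisely the goal. For the inductive step $A = \tassert{\phi} A'$, the hypothesis is an instance of the conclusion of the eager $\tassertop L$ rule; by invertibility of that rule (again Lemma~\ref{lem:invertible}) we obtain a derivation of $\vars \semi \cons \land \phi \semi \D^- \semi \W \cdot (x : A') \entailpot{q} P :: (z : C^+)$, to which the induction hypothesis applies since $A'$ has one fewer leading assertion. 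This yields $\vars \semi (\cons \land \phi) \land \phiL{A'} \semi \D^-, (x : \tpL{A'}) \semi \W \entailpot{q} P :: (z : C^+)$; since $\phiL{\tassert{\phi} A'} = \phi \land \phiL{A'}$, $\tpL{\tassert{\phi} A'} = \tpL{A'}$, and the proposition $(\cons \land \phi) \land \phiL{A'}$ is logically equivalent to $\cons \land \phiL{A}$ (so semantic entailment under the two contexts coincides), this matches the desired conclusion.

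I expect the only genuine obstacle to be the appeal to invertibility itself, i.e. Lemma~\ref{lem:invertible}: one must check that $\tassertop L$ and $\m{move}$ stay invertible even when the affected channel sits at the right end of the ordered context $\W$ rather than in $\D^-$ or on the offered side. That is handled by a case analysis on the last rule of the given derivation — any rule that acts on a different channel commutes past the transformation, and no rule can fire on a channel still in $\W$ other than the eager ones at its right end — but it is the one place where real work is required. A minor bookkeeping point, used silently above, is that $\vars \semi \cons \proves \psi$ depends on $\cons$ only up to logical equivalence, which licenses the reassociation and commutation of the accumulated conjunction. Finally, if the ordered context is also permitted to carry $\tpaypot{\cdot}{r}$ prefixes, the entirely parallel argument using invertibility of the eager $\paypot L$ rule together with a potential-offset companion of $\phiL$ extends the result; interleaving the two inductions gives the statement for arbitrary positive types.
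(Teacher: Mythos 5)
Your proposal is correct and matches the paper's (one-sentence) proof: both arguments proceed by repeatedly inverting the eager $\tassertop L$ rule to strip the assertion prefixes of $A$ into the constraint context and then inverting $\m{move}$ to relocate the residual negative type into $\D^-$, appealing to Lemma~\ref{lem:invertible} for invertibility; your version merely organizes this as an explicit induction on the number of leading assertions and correctly reads the paper's clause $\phiL{A^-} = A^-$ as the intended $\phiL{A^-} = \true$.
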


\begin{proof}
  If $\tpL{A} = S$ and $\phiL{A} = \phi$, then
  $\vars \semi \cons \land \phi \semi \D^-, (x : S) \semi \W
  \entailpot{q} P :: (z : C^+)$ by inverting $\tassumeop L$ or $\m{move}$.
\end{proof}

\begin{lemma}[Stable Sequent]\label{lem:stable-full}
  If $\vars \semi \cons \semi \D_1^- \semi \D_2 \entailpot{q} P
  :: (x : A)$, then $\vars \semi \cons \land \phiL{\D_2} \land \phiR{A}
  \semi \D_1^-, \tpL{\D_2} \semi \cdot \entailpot{q} P :: (x : \tpR{A})$.
\end{lemma}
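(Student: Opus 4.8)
The plan is to prove Lemma~\ref{lem:stable-full} in two phases: first collapse the offered type $A$ into its stable positive residue $\tpR{A}$, then drain the ordered context $\D_2$ from right to left into $\D_1^-$. Both phases are straightforward inductions once we note that every rule involved — the eager right rules $\tassumeop R$ and $\getpot R$ acting on the offered type, and the eager left rules $\tassertop L$, $\paypot L$ together with $\m{move}$ acting on the ordered context — is invertible by Lemma~\ref{lem:invertible} and strictly decreases a natural termination measure, namely the size of $A$ plus the total size of the types listed in $\D_2$.

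For the first phase I would establish the auxiliary claim that if $\vars \semi \cons \semi \D_1^- \semi \D_2 \entailpot{q} P :: (x : A)$ then $\vars \semi \cons \land \phiR{A} \semi \D_1^- \semi \D_2 \entailpot{q'} P :: (x : \tpR{A})$, where $q'$ is $q$ shifted by the total potential stripped off $A$ (and $q' = q$ when $A$ is already positive). This goes by induction on $A$ read as a negative type. If $A = \tassume{\phi}{A'}$, invert $\tassumeop R$ to get the premise with constraints $\cons \land \phi$ and offered type $A'$, apply the induction hypothesis, and use $\phiR{A} = \phi \land \phiR{A'}$ and $\tpR{A} = \tpR{A'}$. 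If $A = \tgetpot{A'}{r}$, invert $\getpot R$ (this is where the potential is incremented) and recurse. If $A = A^+$ is already positive or structural, $\tpR{A} = A$ and $\phiR{A} = \true$, so nothing happens. Crucially, $\tassumeop R$ and $\getpot R$ thread the ordered context $\W = \D_2$ through unchanged, so this phase leaves $\D_2$ intact.

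For the second phase, with the offered type now positive (write $C^+ = \tpR{A}$), I would induct on the length of $\D_2$. If $\D_2 = \cdot$ we are done. If $\D_2 = \D_2' \cdot (x : B)$, apply Lemma~\ref{lem:move-left-ctx} (Move Left) to the rightmost entry: it yields $\vars \semi \cons \land \phiL{B} \semi \D_1^-, (x : \tpL{B}) \semi \D_2' \entailpot{q} P :: (x : C^+)$, that is, the entry has been split into its residual stable type $\tpL{B}$, which joins $\D_1^-$, and its collected constraints $\phiL{B}$, which join $\cons$. Applying the induction hypothesis to this shorter ordered context and extending the operators pointwise — $\phiL{\D_2' \cdot (x{:}B)} = \phiL{\D_2'} \land \phiL{B}$ and $\tpL{\D_2' \cdot (x{:}B)} = \tpL{\D_2'}, (x{:}\tpL{B})$ — we arrive at exactly the stable sequent $\vars \semi \cons \land \phiL{\D_2} \land \phiR{A} \semi \D_1^-, \tpL{\D_2} \semi \cdot \entailpot{q} P :: (x : \tpR{A})$ claimed.

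The main obstacle is the potential bookkeeping, and more precisely the need to extend the operators $\phiL{\cdot}, \tpL{\cdot}, \phiR{\cdot}, \tpR{\cdot}$ to cover the $\tpaypot{\cdot}{r}$ and $\tgetpot{\cdot}{r}$ layers that can occur nested inside the types of $\D_2$ and $A$: such a layer contributes a shift to the potential rather than a conjunct to $\cons$, so each of the four operators has to be paired with a companion potential-extraction function, and Move Left (Lemma~\ref{lem:move-left-ctx}) and its invertibility must be argued for $\paypot L$ just as for $\tassumeop L$ and $\m{move}$. With that in place the remainder is routine case analysis, and Lemma~\ref{lem:stable-full} then feeds directly into the completeness half of Theorem~\ref{thm:forcing}.
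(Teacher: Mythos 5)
Your two-phase argument---first inverting $\tassumeop R$ (and $\getpot R$) to strip the offered type down to $\tpR{A}$ while accumulating $\phiR{A}$, then draining $\D_2$ entry by entry via repeated application of Lemma~\ref{lem:move-left-ctx}---is exactly the paper's proof, which reads in its entirety ``successive application of Lemma~\ref{lem:move-left-ctx} after inverting $\tassertop R$'' (where $\tassertop R$ is evidently a typo for $\tassumeop R$, since that is the invertible right rule that produces the $\phiR{A}$ conjunct). Your closing caveat is also well taken rather than a defect of your proof: the operators $\phiL{\cdot}$, $\tpL{\cdot}$, $\phiR{\cdot}$, $\tpR{\cdot}$ are left undefined on the $\tgetpot{\cdot}{r}$ and $\tpaypot{\cdot}{r}$ layers that the polarized grammar permits, so the lemma as stated (with the potential annotation $q$ unchanged) is only literally correct for constraint-only types, and the companion potential-extraction functions you describe are the right way to make the statement and the paper's one-line proof go through in full generality.
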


\begin{proof}
  Successive application of Lemma~\ref{lem:move-left-ctx}
  after inverting $\tassertop R$.
\end{proof}

\begin{lemma}[Lazy $\tassertop R$]\label{lem:lazyR}
  If $\vars \semi \cons \semi \D^- \semi \W \entailpot{q} P :: (x : A^+)$ and
  $\vars \semi \cons \proves \phi$, then $\vars \semi \cons \semi \D^- \semi \W
  \entailpot{q} P :: (x : \tassert{\phi} A^+)$.
\end{lemma}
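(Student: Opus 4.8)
The plan is to prove the statement by induction on the forcing derivation $\mathcal{D}$ of $\vars \semi \cons \semi \D^- \semi \W \entailpot{q} P :: (x : A^+)$, rebuilding it step by step into a derivation of $\vars \semi \cons \semi \D^- \semi \W \entailpot{q} P :: (x : \tassert{\phi} A^+)$. The guiding observation is that in the forcing calculus the offered \emph{positive} type is never inspected until the process actually communicates along the offered channel $x$; up to that point the offered type is threaded verbatim through every premise, so the induction hypothesis applies directly and we merely reapply the last rule. The only place where we do anything new is the point at which $x$ is forced, where we splice in one extra step using the forced rule $\tassertop R$, whose arithmetic premise is exactly the assumed $\vars \semi \cons \proves \phi$.

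A preliminary remark fixes which rules can conclude $\mathcal{D}$: since $A^+$ is an \emph{unfocused} positive type, the last rule cannot be $\tassumeop R$ or $\getpot R$ (these require an offered type of shape $\tassume{\phi}A^-$ or $\tgetpot{A^-}{r}$), nor a structural right rule or a lazy forced-right rule (these require a focused offered type and an empty $\W$). Hence if $\W \neq \cdot$ the last rule must act on the rightmost entry of $\W$, so it is $\m{move}$, or the eager $\tassertop L$, or the eager $\paypot L$. Each of these leaves $x : A^+$ untouched in its premise (with a smaller ordered context, or one entry moved into $\D^-$), possibly enlarging $\cons$ by a conjunct or $\vars$ by a fresh variable; the hypothesis $\vars \semi \cons \proves \phi$ survives these enlargements by monotonicity of arithmetic entailment and weakening of the variable context, so we apply the induction hypothesis to the premise and reapply the rule.

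When $\W = \cdot$ we have a stable sequent and analyse $P$. If $P$ communicates along the offered channel $x$ — a label send, a channel send or receive, a close, an index send or receive, or a forward $\fwd{x}{y}$ — then $\mathcal{D}$ ends with the matching right-forcing rule ($\oplus F_R$, $\with F_R$, $\tensor F^2_R$, $\lolli F_R$, $\one F_R$, $\exists F_R$, $\forall F_R$, or $\m{id}{-}F_R$), passing from $:: (x : A^+)$ to $:: \focus{x : A^+}$ above a subderivation $\mathcal{D}'$ of the focused sequent. For the new goal we apply the very same right-forcing rule to obtain $:: \focus{x : \tassert{\phi} A^+}$, then the forced $\tassertop R$ (discharging $\phi$ by hypothesis) to reach $:: \focus{x : A^+}$, and finally reuse $\mathcal{D}'$ unchanged, since $\mathcal{D}'$ only ever manipulates $A^+$, which we have not altered. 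In every remaining subcase — $P$ communicates along a channel of $\D^-$, or $P$ is a spawn $\ecut{x}{\mathcal{X}\indv{e_c}}{\overline{y}}{Q_x}$ — the derivation ends with a left-forcing rule ($\oplus F_L$, $\with F_L$, $\tensor F^1_R$, $\tensor F_L$, $\lolli F^1_L$, $\lolli F^2_L$, $\one F_L$, $\exists F_L$, $\forall F_L$, $\m{id}{-}F_L$), a structural left rule, or one of the spawn and spawn-forcing rules, and in all of these the offered positive type reappears unchanged in each premise that carries it (the other premises being purely arithmetic or $\alpha$-equality side conditions); so again we apply the induction hypothesis and reapply the rule.

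The work is almost entirely bookkeeping; the one point that needs genuine care is the exhaustiveness of the case analysis on the last rule — in particular confirming that no rule acting on a nonempty ordered context can disturb the offered side while it is positive and unfocused, and that the spawn and spawn-forcing rules really do pass the offered type $z : C^+$ down to the continuation $Q_x$ without ever forcing it prematurely. With the statement already phrased for an arbitrary ordered context $\W$, these verifications are routine, and the symmetric companion lemmas — lazy $\paypot R$, and on the left lazy $\tassumeop L$ and $\getpot L$ — follow by exactly the same argument.
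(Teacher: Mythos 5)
Your proposal is correct and follows essentially the same route as the paper's proof: induction on the forcing derivation with a case analysis on the last rule, observing that the unfocused positive offered type is threaded unchanged through every rule except the right-forcing step, where the forced $\tassertop R$ rule is spliced in using the hypothesis $\vars \semi \cons \proves \phi$. Your version is in fact more explicit than the paper's (which merely lists which rules ``apply in either case''), and your identification of the right-forcing rules as the one genuinely new step matches the intended argument exactly.
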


\begin{proof}
  The proof proceeds by induction on the typing judgment. Since we
  do not allow quantifier alternation, the $\tassumeop R$ rule
  cannot be applied in either case. The $\tassertop L$ rule can
  apply in either case. The $\m{move}$ rule applies in either case.
  The $\m{id}R$ rule can be applied in either case. The $\m{id}L$
  cannot be applied in either case since it requires the offered
  channel to be forced. The same holds for $\tassertop R$. The
  $\tassumeop L$ rule can be applied in either case. If the last
  rule applied is $\m{id}$, we first apply $\m{id}R$, then $\m{id}$.
  All the spawn rules still apply in either case. The same holds
  for all the forcing rules. Finally, the right structural rules
  do not apply in either case, and the left rules are allowed
  in either case.
\end{proof}

\begin{lemma}[Lazy $\tassumeop L$]\label{lem:lazyL}
  If $\vars \semi \cons \semi \D^- \semi \W \cdot (x : A^-) \entailpot{q}
  P :: (z : C^+)$ and $\vars \semi \cons \proves \phi$, then
  $\vars \semi \cons \semi \D^- \semi \W \cdot (x : \tassume{\phi} A^-)
  \entailpot{q} P :: (z : C^+)$.
\end{lemma}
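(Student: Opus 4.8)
The plan is to prove Lemma~\ref{lem:lazyL} by induction on the derivation of the hypothesis $\vars \semi \cons \semi \D^- \semi \W \cdot (x : A^-) \entailpot{q} P :: (z : C^+)$, mirroring the argument already given for Lemma~\ref{lem:lazyR}. The key observation is that $\tassume{\phi} A^-$ is itself a negative type, since it matches the production $A^- ::= \tassume{\phi} A^-$. Hence every forcing-calculus rule that can act on a negative-typed channel at the end of the ordered context $\W$ or inside $\D^-$ — in particular $\m{move}$ and each of the left-forcing rules — applies verbatim whether the channel carries $A^-$ or $\tassume{\phi} A^-$. So in almost every case one simply applies the induction hypothesis to the premise(s) and re-applies the same rule. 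When the last rule is one of the eagerly applied invertible rules ($\tassumeop R$, $\getpot R$, $\tassertop L$, $\paypot L$) acting on another channel or on the succedent, the ambient constraint $\cons$ may be strengthened to $\cons \land \psi$, and we use that $\vars \semi \cons \proves \phi$ implies $\vars \semi \cons \land \psi \proves \phi$; this is exactly the monotonicity of constraints that justifies lazy assertion in the first place.

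Concretely, I would set up a slightly generalized induction statement that tracks the three possible ``positions'' of the distinguished channel $x$: sitting at the end of $\W$ (the hypothesis of the lemma), stable and unforced in $\D^-$ (which is reached after the mandatory $\m{move}$ step, since $x$ is the rightmost entry of $\W$ and cannot be consumed by $\tassertop L$ or $\paypot L$ as $A^-$ is negative), and forced as $\focus{x : A^-}$ in $\D^-$. In the first two positions, $\tassume{\phi} A^-$ is indistinguishable from an arbitrary negative type, so the induction goes through routinely. The interesting step is the moment the original derivation forces $x$ with some left-forcing rule: in the modified derivation we apply the same left-forcing rule to reach $\focus{x : \tassume{\phi} A^-}$, and then immediately apply the lazy $\tassumeop L$ rule of the forcing calculus — legal because the current constraint context still entails $\phi$ by monotonicity — to strip the $\tassume{\phi}$ wrapper and arrive at $\focus{x : A^-}$, which is precisely the sequent the original derivation produced at that point. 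The remainder of the original derivation is then reused unchanged. Thus the modified derivation is just the original one with a single extra $\tassumeop L$ step inserted right after $x$ is forced.

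The one place needing a moment of care is the $\m{spawn}$ rules, where $x$ may be a channel passed to the newly spawned process and the side condition requires the arguments to match the declared (structural) types up to $\equiv_\alpha$. But a spawn fires only after each argument channel has been forced and, via the lazy left rules, reduced to a structural type; so in the modified derivation the $\tassume{\phi}$ wrapper on $x$ is peeled off by $\tassumeop L$ before $x$ is ever matched against its declaration, and the $\equiv_\alpha$ side condition is literally the same as in the original derivation. The main obstacle is therefore not any individual case but the bookkeeping: formulating the generalized induction hypothesis so that the ``position'' of $x$ and the growth of $\cons$ are handled uniformly, and verifying that $x$, being the rightmost entry of $\W$, is always forced (and hence passed through $\tassumeop L$) before it can be used — after which the proof is a routine case analysis on the last rule, just as for Lemma~\ref{lem:lazyR}.
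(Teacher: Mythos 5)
Your proof is correct and takes essentially the approach the paper intends: the paper states Lemma~\ref{lem:lazyL} without proof as the evident dual of Lemma~\ref{lem:lazyR}, whose proof is an induction on the forcing derivation with a case analysis showing each rule still applies, and your argument is exactly that dual induction. Your extra bookkeeping --- tracking whether $x$ sits at the end of $\W$, stably in $\D^-$, or under focus, and inserting the lazy $\tassumeop L$ step (justified by monotonicity of $\cons$) immediately after $x$ is forced --- just makes explicit what the paper leaves implicit.
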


\begin{theorem}[Soundness of QRecon]
  If $\vars \semi \cons \semi \cdot \semi \D \entailpot{q} P :: (x : A)$,
  then $\vars \semi \cons \semi \D \ivdash{q} P :: (x : A)$.
\end{theorem}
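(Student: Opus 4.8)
The plan is to prove the (more general) statement by structural induction on the derivation of the forcing judgment, via an \emph{erasure} translation from forcing sequents to implicit sequents. To any forcing sequent $\vars \semi \cons \semi \D^- \semi \W \entailpot{q} P :: (x : A)$ — possibly with a single forced channel, either on the right (written $P :: \focus{x : A}$) or inside $\D^-$ (written $\D^-, \focus{x : A^-}$) — I associate the implicit sequent obtained by deleting every $\focus{\cdot}$ bracket and concatenating the two contexts, namely $\vars \semi \cons \semi \D^-, \W \ivdash{q} P :: (x : A)$. The theorem is then the special case $\D^- = \cdot$, $\W = \D$, with nothing forced. Since the implicit system is insensitive to the order of the linear antecedents (exchange is admissible, as every implicit rule is closed under permutation of $\D$), this erasure is well defined up to reordering, so the first small step is to record that admissibility.

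The induction proceeds by cases on the last rule of the forcing derivation, and the rules split into two groups. The \emph{bookkeeping} rules — $\m{move}$, all the forcing rules ($\oplus F_R$, $\oplus F_L$, $\tensor F^1_R$, $\tensor F^2_R$, $\lolli F^2_L$, $\one F_L$, $\m{id}{-}F_R$, $\m{id}{-}F_L$, the $\exists/\forall$ forcing rules, and so on), and the spawn-forcing rules $\m{spawn}{-}F_n$, $\m{spawn}{-}F_i$ — never change $P$, $q$, $\cons$, $\vars$, $x$, or $A$; they only relocate a channel between $\W$ and $\D^-$ or move a $\focus{\cdot}$ bracket. Hence the premise and conclusion of such a rule have the \emph{same} erasure (up to exchange), and applying the induction hypothesis to the premise gives the required implicit derivation outright. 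The \emph{substantive} rules — the eager $\tassumeop R$, $\getpot R$, $\tassertop L$, $\paypot L$; the lazy $\tassertop R$, $\paypot R$, $\tassumeop L$, $\getpot L$ applied while a channel is forced; all structural rules ($\oplus R_k$, $\oplus L$, $\with R$, $\with L_k$, $\tensor R$, $\tensor L$, $\lolli R$, $\lolli L$, $\one R$, $\one L$, the $\exists$ and $\forall$ rules); and $\m{id}$, $\m{spawn}$, $\m{unsat}$ — each has a literal counterpart in the implicit system of Figures~\ref{fig:implicit_rules} and~\ref{fig:app-explicit}: once focus brackets are dropped and the $\D^-/\W$ split is collapsed, the erased premises and conclusion match the implicit rule exactly, so one applies the induction hypothesis to the premises and concludes with that rule. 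The side conditions (arithmetic entailments, $e : \m{nat}$, type equalities, potential inequalities) are identical on both sides and carry over verbatim.

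I do not expect to need Lemmas~\ref{lem:invertible}--\ref{lem:lazyL}: those invertibility and laziness facts concern reordering rule applications \emph{within} the forcing calculus and are used for the converse (reconstruction completeness), whereas soundness is just the syntactic erasure above. The one genuinely delicate point — the step I expect to spend the most care on — is the $\tensor$/$\lolli$ family together with the spawn-forcing rules, where the channel being communicated (or the next spawn argument) migrates among a forced position, a plain slot in $\D^-$, and the ordered tail of $\W$; there one must verify that erasure truly identifies premise and conclusion rather than, say, silently moving a channel out of a ``still to be sent'' role. Pinning the erasure map down precisely — which channel lives where, and that a forced $\D^-$ entry erases to an ordinary $\D$ entry — is the part that needs attention; after that the remaining case analysis is routine and mirrors, in reverse, the proof of part~(1) of Theorem~\ref{thm:forcing}. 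The full case analysis is deferred.
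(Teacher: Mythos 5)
Your proposal is correct and matches the paper's argument: the paper disposes of soundness in one line (``every rule in the forcing calculus is valid in the implicit type system''), and your erasure map --- dropping focus brackets, concatenating $\D^-$ with $\W$, and appealing to exchange --- is exactly the careful version of that observation, with the bookkeeping rules erasing to identities and the substantive rules erasing to their implicit counterparts. You are also right that the invertibility and laziness lemmas are only needed for the completeness direction, so nothing is missing.
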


\begin{proof}
  Soundness is trivial, since every rule in forcing calculus is valid in
  the implicit type system.
\end{proof}

\begin{theorem}[Completeness of QRecon]
  If $\vars \semi \cons \semi \D \ivdash{q} P :: (x : A)$,
  then $\vars \semi \cons \semi \cdot \semi \D \entailpot{q} P :: (x : A)$.
\end{theorem}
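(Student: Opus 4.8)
The plan is to prove Completeness of QRecon by induction on the derivation of $\vars \semi \cons \semi \D \ivdash{q} P :: (x : A)$, building the forcing derivation so that it mirrors the calculus's two-phase shape: an inversion phase that drives the sequent into the stable normal form $\D^- \semi \cdot$ (with a positive offered type), followed by a forcing phase that commits to a single channel and applies the matching structural rule. The workhorses will be Invertibility of the eager rules (Lemma~\ref{lem:invertible}); the Stable Sequent lemma (Lemma~\ref{lem:stable-full}) together with Move Left (Lemma~\ref{lem:move-left-ctx}) for normalizing a ``flat'' sequent into the $\D^- \semi \cdot$ form; and the Lazy lemmas (Lemmas~\ref{lem:lazyR} and~\ref{lem:lazyL}, together with their $\getpot$/$\paypot$ analogues) for re-inserting delayed rules.

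I would case on the last implicit rule and split into three groups. (i) If it is one of the eager rules $\tassumeop R$, $\getpot R$, $\tassertop L$, $\paypot L$, I apply the corresponding eager forcing rule and close by the induction hypothesis on the premise; since eager rules on distinct channels commute, it does not matter where in $\D$ the affected channel sits — the forcing calculus processes $\W$ back to front and its turn will come. (ii) If it is a structural rule ($\oplus R$, $\with R$, $\tensor$, $\lolli$, $\one$, $\exists$, $\forall$, $\m{unsat}$, the forward $\m{id}$, or the spawn), I first stabilize the conclusion sequent: repeatedly apply Move Left (Lemma~\ref{lem:move-left-ctx}) and $\m{move}$ to push $\D$ into $\D^-$, recording the stripped constraints into $\cons$ and the stripped potential into $q$, and invert the offered type by the eager $R$ rules, reaching a stable sequent. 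In parallel I normalize the induction-hypothesis derivation(s) for the continuation(s) with Lemma~\ref{lem:stable-full}, and then re-apply the eager $R$ rules ($\tassumeop R$, $\getpot R$) to rebuild any negative prefix that Stable Sequent had stripped off the offered type, so that what remains exactly matches the premise of the structural forcing rule (modulo the linear potential bookkeeping). I then force the channel on which $P$ communicates via the appropriate $F_R$/$F_L$ rule, discharge its remaining prefix lazily with $\tassertop R$/$\paypot R$ or $\tassumeop L$/$\getpot L$, and apply the matching structural rule, plugging in the normalized continuations. For $\m{id}$ one forces right then left ($\m{id}{-}F_R$, $\m{id}{-}F_L$) and reuses the type-equality side condition; for the spawn one forces the context channels in order ($\m{spawn}{-}F_i$, $\m{spawn}{-}F_n$) until the first becomes structural and then applies $\m{spawn}$.

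(iii) The genuinely delicate case is a lazy implicit rule — $\tassertop R$, $\tassumeop L$, $\getpot L$, or $\paypot R$ — which the source derivation may have applied much earlier than the forcing calculus ever would. Here nothing can be applied immediately; instead I take the forcing derivation supplied by the induction hypothesis for the premise (for instance $:: (x : A^+)$ when the rule was $\tassertop R$ on $x : \tassert{\phi} A^+$) and push the $\tassertop\{\phi\}$ wrapper \emph{into} it using Lemma~\ref{lem:lazyR}, which inserts the $\tassertop R$ step exactly where the channel's forcing is about to be lost — and dually for $\tassumeop L$ via Lemma~\ref{lem:lazyL}, and for $\getpot L$/$\paypot R$ via their analogues. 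I expect the main obstacle to be precisely the proofs of these Lazy lemmas and their smooth interaction with stabilization: each goes by a nested induction on the forcing derivation and must check that no intervening rule obstructs the commutation, which is exactly where the standing assumption that types are polarizable (no quantifier alternation) is essential, since it rules out an interleaved $\tassumeop R$/$\getpot R$ on the offered channel that would otherwise block the insertion — as the proof sketch of Lemma~\ref{lem:lazyR} already observes. The remaining work is the routine but fiddly accounting, across all structural cases, of how the ``flat'' constraint and potential data of the implicit sequent corresponds to the already-collected $\cons$ and $q$ of the forcing normal form, which I would discharge with the monotonicity property of constraints and the linearity of potential.
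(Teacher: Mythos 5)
Your proposal is correct and follows essentially the same route as the paper's proof: induction on the implicit derivation, with the eager cases handled directly, the structural cases handled by stabilizing via Move Left / Stable Sequent, applying the forcing and structural rules, and re-inverting, and the lazy cases ($\tassertop R$, $\tassumeop L$, $\paypot R$, $\getpot L$) discharged by the Lazy lemmas under the polarization assumption. Your three-way grouping of cases and your identification of the Lazy lemmas as the delicate point match the paper's argument.
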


\begin{proof}
  The proof proceeds by induction on the typing judgment of the implicit
  type system. We case analyze on the last rule applied. First, we consider
  the structural cases.
  \begin{itemize}[leftmargin=*]
    \item Case ($\oplus R_k$) : $P = \esendl{x}{k} \semi P$ and
    $A = \ichoice{\ell : A_\ell}_{\ell \in L}$.
    \begin{mathpar}
      \infer[\oplus R_k]
      {\vars \semi \cons \semi \D \ivdash{q} \esendl{x}{k} \semi P :: (x : \ichoice{\ell : A_\ell}_{\ell \in L})}
      {\vars \semi \cons \semi \D \ivdash{q} P :: (x : A_k)}
    \end{mathpar}
    By the induction hypothesis, $\vars \semi \cons \semi \cdot \semi \D \entailpot{q} P :: (x : A_k)$.
    Using Lemma~\ref{lem:stable-full}, we get $\vars \semi \cons \land \phiL{\D} \land \phiR{A_k}
    \semi \tpL{\D} \semi \cdot \entailpot{q} P :: (x : \tpR{A_k})$. Since this is a stable
    sequent, we reapply the $\oplus R_k$ rule, we get (note that $\tpR{A_k} = A_k$
    and $\phiR{A_k} = \true$)
    \begin{mathpar}
      \infer[\oplus R_k]
      {\vars \semi \cons \land \phiL{\D} \semi \tpL{\D} \semi \cdot \entailpot{q}
      \esendl{x}{k} \semi P :: \focus{x : \ichoice{\ell : A_\ell}}}
      {\vars \semi \cons \land \phiL{\D} \semi \tpL{\D} \semi \cdot \entailpot{q}
      P :: (x : A_k)}
    \end{mathpar}
    Now, applying the $\oplus F_R$ rule,
    \begin{mathpar}
      \infer[\oplus F_R]
      {\vars \semi \cons \land \phiL{\D} \semi \tpL{\D} \semi \cdot \entailpot{q}
      \esendl{x}{k} \semi P :: (x : \ichoice{\ell : A_\ell})}
      {\vars \semi \cons \land \phiL{\D} \semi \tpL{\D} \semi \cdot \entailpot{q}
      \esendl{x}{k} \semi P :: \focus{x : \ichoice{\ell : A_\ell}}}
    \end{mathpar}
    Reapplying the $\m{move}$ and $\tassumeop L$ rules, we get
    $\vars \semi \cons \semi \cdot \semi \D \entailpot{q} \esendl{x}{k} \semi P ::
    (x : \ichoice{\ell : A_\ell}_{\ell \in L})$.
  
    \item Case ($\oplus L$) : $P = \ecase{x}{\ell}{Q_\ell}_{\ell \in L}$ and
    $\D = \D, (x : \ichoice{\ell : A_\ell}_{\ell \in L})$.
    \begin{mathpar}
      \inferrule*[right=$\oplus L$]
      {\vars \semi \cons \semi \D, (x : A_\ell) \ivdash{q} Q_\ell :: (z : C)}
      {\vars \semi \cons \semi \D, (x : \ichoice{\ell : A_\ell}_{\ell \in L}) \ivdash{q} \hspace{8em}\\
      \hspace{8em}\ecase{x}{\ell}{Q_\ell}_{\ell \in L} :: (z : C)}
    \end{mathpar}
    By the induction hypothesis, $\vars \semi \cons \semi \cdot \semi (x : A_\ell) \cdot \D
    \entailpot{q} Q_\ell :: (z : C)$. First, we use Lemma~\ref{lem:invertible} to get
    $\vars \semi \cons \land \phiR{C} \semi \cdot \semi (x : A_\ell) \cdot \D \entailpot{q}
    Q_\ell :: (z : \tpR{C})$. Then, we use Lemma~\ref{lem:move-left-ctx}
    successively on $\D$ to get $\vars \semi \cons \land \phiR{C} \land \phiL{\D} \semi
    \tpL{\D} \semi (x : A_\ell) \entailpot{q} Q_\ell :: (z : \tpR{C})$. Now, we apply
    the $\oplus L$ rule to get
    \begin{mathpar}
      \inferrule*[right=$\oplus L$]
      {\vars {\semi} \cons \land \phiR{C} \land \phiL{\D} {\semi}
      \tpL{\D} \semi (x : A_\ell) \entailpot{q} Q_\ell {::} (z {:} \tpR{C})}
      {\vars \semi \cons \land \phiR{C} \land \phiL{\D} \semi
      \tpL{\D}, \focus{x : \ichoice{\ell : A_\ell}_{\ell \in L}} \semi \cdot
      \entailpot{q} \\ \hspace{5em} \ecase{x}{\ell}{Q_\ell}_{\ell \in L} :: (z : \tpR{C})}
    \end{mathpar}
    Applying the $\oplus F_L$ rule, we get
    $\vars \semi \cons \land \phiR{C} \land \phiL{\D} \semi
    \tpL{\D},$ $(x : \ichoice{\ell : A_\ell}_{\ell \in L}) \semi \cdot
    \entailpot{q} \ecase{x}{\ell}{Q_\ell}_{\ell \in L} :: (z : \tpR{C})$.
    Reapplying the invertible rules $\tassumeop R$ on $C$ and
    $\m{move}$ and $\tassertop L$ on $\D, (x : \ichoice{\ell : A_\ell}_{\ell \in L})$,
    we get
    $\vars \semi \cons \semi \cdot \semi \D, (x : \ichoice{\ell : A_\ell}_{\ell \in L})
    \entailpot{q} \ecase{x}{\ell}{Q_\ell}_{\ell \in L} :: (z : C)$.

    \item Case ($\with R, \with L_k$) : Analogous to $\oplus$ proofs.
    
    \item Case ($\lolli R$) : $P = \erecvch{x}{y} \semi P_y$ and
    $A = S \lolli B$.
    \begin{mathpar}
      \infer[\lolli R]
      {\vars \semi \cons \semi \D \ivdash{q} \erecvch{x}{y} \semi P_y
      :: (x : S \lolli B)}
      {\vars \semi \cons \semi \D, (y : S) \ivdash{q} P_y :: (x : B)}
    \end{mathpar}
    By the induction hypothesis, $\vars \semi \cons \semi \cdot \semi (y : S)
    \cdot \D \entailpot{q} P_y :: (x : B)$. Using Lemma~\ref{lem:stable-full},
    we get $\vars \semi \cons \land \phiR{B} \land \phiL{\D} \semi \tpL{\D} \semi (y : S)
    \entailpot{q} P_y :: (x : \tpR{B})$. Note that since $B$ is a positive type,
    $\phiR{B} = \true$ and $\tpR{B} = B$. Thus, we have
    $\vars \semi \cons \land \phiL{\D} \semi \tpL{\D} \semi (y : S)
    \entailpot{q} P_y :: (x : B)$. Now, apply the $\lolli R$ rule to get
    \begin{mathpar}
      \inferrule*[right=$\lolli R$]
      {\vars \semi \cons \land \phiL{\D} \semi \tpL{\D} \semi (y : S)
      \entailpot{q} P_y :: (x : B)}
      {\vars \semi \cons \land \phiL{\D} \semi \tpL{\D} \semi \cdot
      \entailpot{q} \hspace{8em} \\ \hspace{8em} \erecvch{x}{y} \semi P_y :: \focus{x : S \lolli B}}
    \end{mathpar}
    Now, we drop force using the $\lolli F_R$ rule to get
    $\vars \semi \cons \land \phiL{\D} \semi \tpL{\D} \semi \cdot
    \entailpot{q} \erecvch{x}{y} \semi P_y :: (x : S \lolli B)$.
    Finally, we apply the $\tassertop L, \m{move}$ rules to get
    $\vars \semi \cons \semi \cdot \semi \D \entailpot{q} \erecvch{x}{y} \semi P_y
    :: (x : S \lolli B)$.

    \item Case ($\lolli L$) : $P = \esendch{x}{w} \semi Q$ and
    $\D = \D, (w : S), (x : S \lolli B)$.
    \begin{mathpar}
      \inferrule*[right=$\lolli L$]
      {\vars \semi \cons \semi \D, (x : B) \ivdash{q} Q :: (z : C)}
      {\vars \semi \cons \semi \D, (w : S), (x : S \lolli B) \ivdash{q} \hspace{4em} \\ \hspace{8em}
      \esendch{x}{w} \semi Q :: (z : C)}
    \end{mathpar}
    By the induction hypothesis, we get $\vars \semi \cons \semi \cdot
    \semi (x : B) \cdot \D \entailpot{q} Q :: (z : C)$. First, we use
    Lemma~\ref{lem:invertible} to get $\vars \semi \cons \land \phiR{C}
    \semi \cdot \semi (x : B) \cdot \D \entailpot{q} Q :: (z : \tpR{C})$.
    Then, we use Lemma~\ref{lem:move-left-ctx} on $\D$ to get
    $\vars \semi \cons \land \phiR{C} \land \phiL{\D} \semi \tpL{\D} \semi
    (x : B) \entailpot{q} Q :: (z : \tpR{C})$. Now, we apply the $\lolli L$
    rule
    \begin{mathpar}
      \inferrule*[right=${\lolli}L$]
      {\vars {\semi} \cons \land \phiR{C} \land \phiL{\D} {\semi} \tpL{\D} \semi
      (x : B) \entailpot{q} Q {::} (z {:} \tpR{C})}
      {\vars \semi \cons \land \phiR{C} \land \phiL{\D} \semi \tpL{\D},
      (w : S), \focus{x : S \lolli B} \entailpot{q} \\ \hspace{8em} \esendch{x}{w} \semi Q
      :: (z : \tpR{C})}
    \end{mathpar}
    Applying the $\lolli F^2_L$ and $\lolli F^1_L$ rules in order,
    we get
    $\vars \semi \cons \land \phiR{C} \land \phiL{\D} \semi \tpL{\D},
    (w : S), (x : S \lolli B) \entailpot{q} \esendch{x}{w} \semi Q
    :: (z : \tpR{C})$. Now, reapplying the invertible rules, we get
    $\vars \semi \cons \semi \cdot \semi \D, (w : S), (x : S \lolli B) \entailpot{q}
    \esendch{x}{w} \semi Q :: (z : C)$.

    \item Case ($\tensor R, \tensor L, \one R, \one L$) : Analogous
    to $\lolli$ rules.

    \item Case ($\m{id}$) : $P = \fwd{x}{y}$.
    \begin{mathpar}
      \infer[\m{id}]
      {\cons \semi (y : A) \ivdash{q} \fwd{x}{y} :: (x : A)}
      { }
    \end{mathpar}
    We proceed by induction on the structure of $A$. If $A$ is structural,
    we use the following derivation in the forcing calculus.
    \begin{mathpar}
      \infer[\m{move}]
      {\vars \semi \cons \semi \cdot \semi (y : S) \entailpot{q} \fwd{x}{y} :: (x : S)}
      {
        \infer[\m{id}R]
        {\vars \semi \cons \semi (y : S) \semi \cdot \entailpot{q} \fwd{x}{y} :: (x : S)}
        {
          \infer[\m{id}L]
          {\vars \semi \cons \semi (y : S) \semi \cdot \entailpot{q} \fwd{x}{y} :: \focus{x : S}}
          {
            \infer[\m{id}]
            {\vars \semi \cons \semi \focus{y : S} \semi \cdot \entailpot{q} \fwd{x}{y} :: (x : S)}
            {}
          }
        }
      }
    \end{mathpar}
    If $A = \tassert{\phi} A'$, by the induction hypothesis, we know
    that $\vars \semi \cons \semi \cdot \semi (y : A') \entailpot{q} \fwd{x}{y} :: (x : A')$.
    In the forcing calculus, we apply the derivation
    \begin{mathpar}
      \infer[\tassertop L]
      {\vars \semi \cons \semi \cdot \semi (y : \tassert{\phi} A') \entailpot{q}
      \fwd{x}{y} :: (x : \tassert{\phi} A')}
      {
        \infer[\m{id}R]
        {\vars \semi \cons \semi \cdot \semi (y : A') \entailpot{q}
        \fwd{x}{y} :: (x : \tassert{\phi} A')}
        {
          \infer[\tassertop R]
          {\vars \semi \cons \semi \cdot \semi (y : A') \entailpot{q}
          \fwd{x}{y} :: \focus{x : \tassert{\phi} A'}}
          {
            \infer[\m{lose}R]
            {\vars \semi \cons \semi \cdot \semi (y : A') \entailpot{q}
            \fwd{x}{y} :: \focus{x : A'}}
            {\vars \semi \cons \semi \cdot \semi (y : A') \entailpot{q}
            \fwd{x}{y} :: (x : A')}
          }
        }
      }
    \end{mathpar}
    A similar argument would work for $A = \tassume{\phi} A'$.

    \item Case ($\m{spawn}$) : Since all types involved in a spawn
    are structural, this case is trivial.

    \item Case ($\tassertop R$) : $A = \tassert{\phi} A$.
    \begin{mathpar}
      \infer[\tassertop R]
      {\vars \semi \cons \semi \D \ivdash{q} P :: (x : \tassert{\phi} A)}
      {\vars \semi \cons \proves \phi \and
      \vars \semi \cons \semi \D \ivdash{q} P :: (x : A)}
    \end{mathpar}
    By the induction hypothesis, we get $\vars \semi \cons \semi \cdot \semi
    \D \entailpot{q} P :: (x : A)$. We use Lemma~\ref{lem:lazyR} to get
    $\vars \semi \cons \semi \cdot \semi \D \entailpot{q} P :: (x : \tassert{\phi} A)$
    since we know that $\vars \semi \cons \proves \phi$.

    \item Case ($\tassertop L$) : $\D = \D, (x : \tassert{\phi} A)$
    and $P = Q$.
    \begin{mathpar}
      \infer[\tassertop L]
      {\vars \semi \cons \semi \D, (x : \tassert{\phi} A) \ivdash{q} Q :: (z : C)}
      {\vars \semi \cons \land \phi \semi \D, (x : A) \ivdash{q} Q :: (z : C)}
    \end{mathpar}
    By the induction hypothesis, we get that $\vars \semi \cons \land \phi \semi \cdot
    \semi \D \cdot (x : A) \entailpot{q} Q :: (z : C)$. First, we invert the
    $\tassumeop R$ rule to get
    $\vars \semi \cons \land \phi \land \phiR{C} \semi \cdot \semi \D \cdot (x : A)
    \entailpot{q} Q :: (z : \tpR{C})$. Then, we apply the $\tassertop L$ rule
    to get $\vars \semi \cons \land \phiR{C} \semi \cdot \semi \D \cdot (x : \tassert{\phi} A)
    \entailpot{q} Q :: (z : \tpR{C})$. Then we reapply the $\tassumeop R$ to
    get back $\vars \semi \cons \semi \cdot
    \semi \D \cdot (x : \tassert{\phi} A) \entailpot{q} Q :: (z : C)$.

  \end{itemize}

\end{proof}

\end{document}